\documentclass[11pt]{article}

\usepackage[margin=1in]{geometry}
\usepackage{amsmath,amssymb,amsthm}
\usepackage{graphicx}
\usepackage[plain,noend]{algorithm2e}
\usepackage{authblk}
\usepackage{natbib}
\usepackage{xcolor}
\usepackage{adjustbox}
\usepackage[section]{placeins}
\usepackage{hyperref}
\hypersetup{
  colorlinks=true,
  citecolor=blue,
  linkcolor=blue,
  urlcolor=blue,
  breaklinks=true
}

\graphicspath{{./figures/}}

\def\T{{ \mathrm{\scriptscriptstyle T} }}
\DeclareMathOperator{\tr}{tr}
\DeclareMathOperator{\diag}{diag}
\newcommand{\R}{\mathbb{R}}
\newcommand{\Le}{L_{\varepsilon}}
\newcommand{\bL}{L}
\newcommand{\Qw}{Q_{w}}
\newcommand{\Qpost}{Q_{\mathrm{post}}}
\newcommand{\GV}{\mathcal{G}_{V}}
\newcommand{\Ocal}{\mathcal{O}}
\newcommand{\neff}{n_{\mathrm{eff}}}

\newcommand{\tbl}[2]{\caption{#1}\begin{adjustbox}{max width=\textwidth}#2\end{adjustbox}}
\newenvironment{tabnote}{\par\vspace{4pt}\begingroup\footnotesize\noindent}{\par\endgroup}
\newenvironment{keywords}{\par\medskip\noindent\textbf{Keywords: }}{\par}

\theoremstyle{plain}
\newtheorem{theorem}{Theorem}
\newtheorem{lemma}{Lemma}
\newtheorem{corollary}{Corollary}
\newtheorem{proposition}{Proposition}
\theoremstyle{remark}
\newtheorem{remark}{Remark}

\title{Bigraphical Mat\'ern--Whittle (BMW) Processes for Fast Inference of
Big Multivariate Spatial Data on General Domains}

\author[1]{Debangan Dey\thanks{Corresponding author. Email: \texttt{debangan@tamu.edu}}}
\author[1]{Alokesh Manna}
\author[2]{Christopher J. Geoga}
\affil[1]{Department of Statistics, Texas A\&M University, College Station, Texas 77843, U.S.A.}
\affil[2]{Department of Statistics, University of Wisconsin--Madison, Madison, Wisconsin 53706, U.S.A.}
\date{}

\begin{document}

\renewcommand{\topfraction}{0.9}
\renewcommand{\bottomfraction}{0.85}
\renewcommand{\textfraction}{0.07}
\renewcommand{\floatpagefraction}{0.75}
\setcounter{topnumber}{4}
\setcounter{bottomnumber}{4}
\setcounter{totalnumber}{8}

\maketitle

\begin{abstract}
Large spatial data sets now record many correlated variables at many thousands of locations,
often on domains where Euclidean distance misrepresents proximity. The central difficulty is
modelling the cross-variable dependence jointly while retaining variable-level interpretation.
We introduce the bigraphical Mat\'ern--Whittle process, a multivariate Gaussian process that
resolves this with two graphs. A spatial graph generates the Mat\'ern structure of each
variable through a fractional power of a graph Laplacian, so the process is valid on any
topology, with per-variable range, smoothness and amplitude. A directed acyclic variable
graph encodes the scientific structure: we prove that each absent edge yields an exact
conditional independence between the corresponding fields. We further prove that the operator
determinant does not involve the cross-dependence coefficients, which keeps matrix-free
likelihood evaluation and Bayesian learning of the variable graph tractable at scale.
Estimation requires only sparse matrix--vector products and scales to tens of millions of
space--variable pairs. In simulations the method recovered parameters and graphs
accurately, remained robust under misspecification, and halved held-out prediction error on a
non-convex domain. In a spatial transcriptomics section with $19{,}809$ cells and $1{,}122$
genes, fitted in $75$ minutes on a laptop, borrowing across the learned gene graph reduced
held-out prediction error by $50$ to $91$ percent. Theoretical challenges, such as the achievable efficiency of estimating the
variance of the nugget, are also explored.
\end{abstract}

\begin{keywords}
Conditional independence; Gaussian process; Graphical model; Krylov method; Mat\'ern
covariance; Multivariate spatial data; Stochastic partial differential equation.
\end{keywords}

\section{Introduction}

Modern environmental and biomedical monitoring yields many correlated variables at thousands
to millions of spatial locations \citep{cressie2011statistics,banerjee2014hierarchical}. Examples include satellite retrievals of coupled ocean fields at millions
of pixels, multi-pollutant air-quality surfaces, imaging markers on irregular anatomical
domains, and spatially resolved transcriptomics, in which the expression of hundreds of genes
is mapped across many thousands of tissue locations
\citep{stahl2016visualization,chen2015spatially,moses2022museum}. Three demands collide in such data. The first is scale: direct likelihood evaluation
for a Gaussian process on $n$ locations and $p$ variables costs $O\{(np)^3\}$ operations,
which is prohibitive when $np$ reaches the millions. The second is multivariate structure:
scientific interest centres on which variables depend on which, so the cross-covariance should
carry an explicit conditional-independence graph. The third is geometry: coastlines, river
networks and cortical surfaces make Euclidean distance the wrong notion of proximity, and
a covariance function valid in one metric need not be valid in another.

Existing model classes meet these demands separately. The multivariate Mat\'ern model of
\citet{gneiting2010matern}, extended by \citet{apanasovich2010cross} and
\citet{apanasovich2012valid} and reviewed by \citet{genton2015cross}, is interpretable, as is
the linear model of coregionalization
\citep{wackernagel2003multivariate,gelfand2004nonstationary}. Both, however, presuppose a
valid global distance, scale poorly in both the number of locations and the number of
variables, and make every pair of variables dependent by
construction. The graphical Gaussian process of \citet{dey2022graphical} introduces
process-level conditional independence by stitching univariate Gaussian processes along a
decomposable variable graph. It scales far better in the number of variables, but its
per-clique likelihood still involves dense covariance operations of order $n^3$, so it does not
scale in the number of locations. Moreover, the graph must be known in advance and be
decomposable, and the component cross-covariances still require a Euclidean or geodesic metric. In spatial transcriptomics, Bayesian graphical models with covariate-dependent structure
recover gene networks that vary over the tissue \citep{ni2019bayesian,dawn2025spatially}.
These treat space as a covariate that modulates the network rather than modelling the
spatial and cross-gene dependence jointly, and computation at the full variable-location
scale is not their aim.

Scalable computation for large $n$ has been pursued through low-rank, tapered, composite,
nearest-neighbour and multi-resolution constructions
\citep{banerjee2008gaussian,furrer2006covariance,vecchia1988estimation,datta2016hierarchical,katzfuss2021general,peruzzi2022highly};
\citet{heaton2019case} compare many of these approaches, and variational Bayes further
accelerates the nearest-neighbour family \citep{song2025fast}. For large $p$, factor models,
spatial multivariate trees, mixtures of directed acyclic graphs and multivariate Vecchia
approximations represent the state of the art
\citep{zhang2021high,peruzzi2022spatial,jin2024bag,fahmy2022vecchia}. All of these
constructions approximate the model or the likelihood, and none carries an explicit
conditional-independence graph over the variables. Non-convex domains have their own
literature, from the horseshoe test domain of
\citet{ramsay2002spline} to barrier and network models
\citep{wood2008soap,bakka2019non,niu2019intrinsic,verhoef2010moving}; each of these
constructions is tied to its particular geometry, and none is multivariate.

The process model proposed here addresses the three demands within a single construction. The
variable graph is learned from the data, it may be an arbitrary directed acyclic graph, and
spatial dependence is defined by an operator on a graph rather than by a distance.
The construction descends from the stochastic partial differential equation approach.
\citet{whittle1954stationary} identified the Mat\'ern covariance as the stationary solution of
a fractional stochastic partial differential equation. \citet{lindgren2011explicit} made this
observation computational by discretizing the equation on a finite-element mesh, which yields
sparse-precision Markov representations; see \citet{lindgren2022spde} for a retrospective.
Systems of coupled equations generate multivariate fields
\citep{hu2013multivariate,hu2016systems,bolin2020multivariate}, and rational approximations
free the smoothness from integer constraints \citep{bolin2020rational}. We adopt the same
operator, but replace the mesh by the data graph and treat the smoothness as a continuous
parameter.

A parallel literature builds Mat\'ern-type processes directly from graph Laplacians, which
converge spectrally to the Laplace--Beltrami operator of an underlying manifold
\citep{coifman2006diffusion,calder2022improved}. Contributions include Mat\'ern Gaussian
processes on graphs and manifolds \citep{borovitskiy2020riemannian,borovitskiy2021matern},
graph representations of Mat\'ern fields \citep{sanzalonso2022spde}, restricted-domain
regression \citep{dunson2022graph}, and Whittle--Mat\'ern fields on metric graphs
\citep{bolin2024metric,bolin2025inference}. All of these processes are univariate, with the
graph as domain. The process proposed here appears to be the first multivariate member of the
family. It carries a second, learned graph over the variables: the spatial graph plays the
role of the domain, and the variable graph plays the role of the cross-covariance.

The computational route also differs in kind from the approximation literature. No
conditioning sets, knots, partitions or sparse factorizations are selected. The exact
likelihood and a variational surrogate are evaluated by iterative numerical linear algebra,
in the spirit of \citet{aune2014parameter} and \citet{gardner2018gpytorch}, and the
per-iteration cost is linear in $np$. The model itself is never approximated. Respect for
barriers is automatic, because connectivity rather than distance defines the operator.

This paper proposes the \emph{bigraphical Mat\'ern--Whittle} process, which is a Gaussian
process model specified through the transport (or equivalent covariance/precision) operator.
By taking this approach, modelling dependence when the relevant notion of distance is
non-Euclidean is easy and direct. Further, the approach empowers practitioners to
\emph{discover} dependence in highly multivariate processes, which can be challenging to do
in existing scalable GP frameworks. Thanks to a careful numerical design, the model can also
be handily applied to very large modern datasets. In the following sections, we describe a
careful specification, provide some theoretical discussion of subtle phenomena that occur in
estimation, and close by providing synthetic and real-data demonstrations of the method's
efficacy.

\section{The bigraphical Mat\'ern--Whittle process}\label{sec:model}

\subsection{Notation and data setting}\label{sec:notation}

We observe $p$ spatially indexed variables at locations $s_1,\ldots,s_n$ in a domain $D$. The
domain may be a subset of $\R^d$, a manifold, or an abstract network. Write $N = np$. The
observation of variable $j$ at location $i$ is $y_j(s_i)$, available for a subset
$\Ocal \subseteq \{1,\ldots,p\} \times \{1,\ldots,n\}$ of pairs with $M = |\Ocal|$. As is
standard for noisy spatial data, each observation is modelled as a smooth latent process plus
measurement error (\S\ref{sec:system}). The vector of latent processes, called the latent
field, is $w = (w_1^{\T},\ldots,w_p^{\T})^{\T} \in \R^{N}$ with $w_j \in \R^n$. Throughout,
$\Le \in \R^{n\times n}$ is a fixed sparse symmetric positive semi-definite graph Laplacian
with eigenpairs $(\lambda_h, u_h)$ and $0 = \lambda_1 \le \cdots \le \lambda_n$. For variable
$j$, $\kappa_j > 0$ is a range parameter, $\alpha_j > 0$ a smoothness parameter and
$\tau_j > 0$ the innovation standard deviation, which governs the amplitude of variable $j$. The scalar
$b_{ij}$ is the cross-dependence coefficient on the directed edge $i \leftarrow j$ of a variable graph $\GV$, and
$\sigma > 0$ is the nugget standard deviation. The full parameter is
$\theta = (\kappa, \alpha, \tau, \{b_{ij}\}, \sigma)$. We write
$D_\tau = \diag(\tau_1^2,\ldots,\tau_p^2) \otimes I_n$ for the block-diagonal innovation covariance,
$D_{\Ocal}$ for the diagonal $0/1$
indicator of observed entries and $\|y\|^2_{\Ocal} = \sum_{(j,i)\in\Ocal} y_j(s_i)^2$.

\subsection{From the Mat\'ern equation to a spatial operator}\label{sec:leps}

The Mat\'ern covariance has a differential characterization.
\citet{whittle1954stationary} showed that the stationary field on $\R^d$ solving the
fractional stochastic partial differential equation
$(\kappa^2 - \Delta)^{\alpha/2} w(s) = \mathcal{W}(s)$, with $\mathcal{W}$ Gaussian white
noise, has Mat\'ern covariance with smoothness $\nu = \alpha - d/2$. The equation, unlike the
covariance function, requires of the domain only a Laplacian, and discrete solutions on a
finite domain are Gaussian Markov random fields \citep{rue2005gaussian}. On a regular
lattice, and with $\alpha/2 \in \mathbb{N}$ so that the operator is polynomial and hence
local, the solution recovers the classical conditional autoregressions
\citep{besag1974spatial,lindgren2011explicit}. On a finite-element mesh it yields the
sparse-precision representations of \citet{lindgren2011explicit}, the basis of the SPDE
approach reviewed by \citet{lindgren2022spde}. Deep Gaussian Markov random fields extend the
lattice version \citep{siden2020deep}, and graph discretizations are analysed by
\citet{sanzalonso2022spde}. We follow the same route, with $-\Delta$ replaced by a graph
Laplacian built from the observed point cloud, so that the equation, and not a covariance
function, defines the model on an arbitrary domain.

The operator is a graph Laplacian built from local neighbourhoods.
We use the density-normalized construction of \citet{coifman2006diffusion}. Let
$\rho(s_i, s_\ell)$ be a local dissimilarity between two locations. In a Euclidean subset of
$\R^d$ this is the Euclidean distance; on a manifold or a network it is a geodesic or
shortest-path distance; and on a domain with barriers it is a distance that does not cross the
boundary, so that the construction respects the geometry of $D$. Let
$a_{i\ell} = \exp\{-\rho(s_i, s_\ell)^2/\varepsilon\}$ be a Gaussian kernel over the nearest
neighbours of each location, with degrees $q_i = \sum_\ell a_{i\ell}$, and let
$\tilde A = D_q^{-1} A D_q^{-1}$ with degrees $\tilde d_i = \sum_\ell \tilde a_{i\ell}$. The
operator is
\begin{equation}\label{eq:leps}
\Le = \varepsilon^{-1}( I - \tilde D^{-1/2} \tilde A \tilde D^{-1/2} ),
\end{equation}
symmetrized and scaled so that $n^{-1}\tr (\Le) = 1$. The bandwidth $\varepsilon$ is a fixed
multiple of the squared median distance to the $K$th nearest neighbour, with $K$ the
neighbourhood size.

Three properties motivate this choice. First, when $\rho$ is the Euclidean distance and the
locations are sampled from a smooth manifold embedded in $\R^d$, $\Le$ converges spectrally to
the Laplace--Beltrami operator of that manifold as $n \to \infty$ and $\varepsilon \to 0$
\citep{coifman2006diffusion,calder2022improved}, so the induced marginal covariance
approximates a continuum Mat\'ern covariance rather than a graph artefact
\citep{sanzalonso2022spde}. This manifold limit is what the cited theory establishes, and it
should be claimed only in the Euclidean case. Under a general dissimilarity, or when
neighbourhood edges are pruned so as not to cross a barrier, $\Le$ instead approaches a
weighted Laplacian adapted to the connectivity of the domain rather than the Laplace--Beltrami
operator of a smooth manifold; this is the intended target on a non-convex domain, where the
Euclidean operator would link points across a concavity. Second, only local neighbourhoods
enter the construction, so connectivity rather than Euclidean distance determines
dependence (\S\ref{sec:horseshoe}). Third, $\Le$
depends only on the point cloud and not on $\theta$. Every spectral quantity that estimation
requires can therefore be computed once and reused throughout the fit, which \S\ref{sec:comp}
exploits repeatedly. The combinatorial Laplacian is the simpler alternative, but its spectrum
grows with the vertex degrees and it does not converge to a Laplace--Beltrami operator under
non-uniform sampling. The choice has statistical consequences as well as numerical ones: the
bounded spectrum of the normalized operator is precisely what weakens identifiability of the
nugget (Theorem~\ref{thm:nugget}).

\subsection{Operator system, generative law and observation model}\label{sec:system}

The univariate building block writes the Whittle--Mat\'ern equation on the graph. Variable
$j$ solves
\begin{equation}\label{eq:unispde}
(\kappa_j^2 I + \Le)^{\alpha_j/2}\, w_j = \tau_j\, \xi_j, \qquad \xi_j \sim N(0, I_n),
\end{equation}
so that $w_j \sim N\{0, \tau_j^2 (\kappa_j^2 I + \Le)^{-\alpha_j}\}$. The innovation
standard deviation $\tau_j$ scales the driving noise and therefore the amplitude of the
field. The fractional power is defined by the spectral calculus,
$(\kappa^2 I + \Le)^{\alpha/2} v = \sum_h (\kappa^2 + \lambda_h)^{\alpha/2} (u_h^{\T}v)u_h$,
where $(\lambda_h, u_h)$ are the eigenpairs of $\Le$ introduced in \S\ref{sec:notation},
and the continuum correspondence is $\nu = \alpha - d/2$ for the Mat\'ern smoothness $\nu$.

Cross-variable dependence enters through a system of equations of the form
\eqref{eq:unispde}, following \citet{hu2013multivariate}. The variables are organized by the
directed acyclic graph $\GV$ on $\{1,\ldots,p\}$: an edge represents a direct dependence,
and the absence of an edge is a modelling assertion of conditional independence, made exact
by Theorem~\ref{thm:ci}. Collect the cross-dependence coefficients into the matrix
$B \in \R^{p \times p}$, with $B_{ij} = b_{ij}$ when $i \leftarrow j$ is an edge of $\GV$
and $B_{ij} = 0$ otherwise. Fixing an ordering with $j < i$ for every edge makes $B$
strictly lower triangular, so each field is generated from white noise together with the
fields already drawn. Figure~\ref{fig:schematic} spells out an example with five variables:
the source variable $1$ solves \eqref{eq:unispde} alone, variable $2$ solves
\eqref{eq:unispde} with its left side augmented by $b_{21} w_1$, variable $4$ receives such
contributions from both parents, and collecting the five equations gives the block
lower-triangular operator of panel (c). The direction is natural when the variables carry a
scientific ordering, for instance a transcription factor upstream of the genes it regulates
(\S\ref{sec:application}). When no ordering is known the choice is inconsequential for the
model's independence structure: the semantics are read from the moral graph of $\GV$, in
which two variables with a common child are joined, so reversing an edge or reordering the
variables changes the parameterization but not the encoded conditional independences
(Theorem~\ref{thm:ci}). Given such an ordering, define the block lower-triangular
operator $\bL \in \R^{N \times N}$ by
\begin{equation}\label{eq:opsys}
(\bL w)_i = (\kappa_i^2 I + \Le)^{\alpha_i/2} w_i + \sum_{(i,j) \in \GV} b_{ij} w_j
\quad (i = 1,\ldots,p).
\end{equation}
The diagonal blocks are fractional graph-Mat\'ern operators. Each off-diagonal block is the
scalar matrix $b_{ij} I$ and is present only for edges of $\GV$; an absent edge is the
structural zero $b_{ij} = 0$. The process is defined by the generative law
\begin{equation}\label{eq:gen}
\bL w = z, \quad z \sim N(0, D_\tau), \quad D_\tau = \diag(\tau_1^2,\ldots,\tau_p^2)\otimes I_n,
\end{equation}
so that $w \sim N(0, \Qw^{-1})$ with $\Qw = \bL^{\T} D_\tau^{-1}\bL$. Observations are $y = w + \epsilon$
on $\Ocal$, where $\epsilon \sim N(0, \sigma^2 I_M)$ is independent of $w$. The conditional
distribution of the field is the Gauss--Markov update
\begin{equation}\label{eq:post}
w \mid y \sim N(\mu, \Qpost^{-1}), \qquad
\Qpost = \Qw + \sigma^{-2} D_{\Ocal}, \qquad \mu = \sigma^{-2}\Qpost^{-1} \tilde y,
\end{equation}
where $\tilde y$ equals $y$ on $\Ocal$ and zero elsewhere. Each variable is driven by an
innovation of variance $\tau_j^2$. The innovation variance is a genuine amplitude parameter.
It allows the marginal variance of each variable to be fitted directly, rather than being
determined by the range and smoothness alone. Just like the setting of fitting a standard
Mat\'ern covariance function, however, under in-fill sampling only the microergodic
combination $\tau_j^2 \kappa_j^{2\nu_j}$, with $\nu_j = \alpha_j - d/2$, is estimated
consistently \citep{zhang2004inconsistent}. We make this precise in
Theorem~\ref{thm:nugget}.

The construction is triangular by design, for three reasons. Triangularity yields a proper,
recursively defined joint law with no constraint on the coefficients. It preserves
conditional independence exactly, which a full operator system does not
(Theorem~\ref{thm:ci}). It also keeps the log-determinant free of the cross-dependence
coefficients (Theorem~\ref{thm:logdet}, \S\ref{sec:est}), which is what makes the variable graph learnable.
The price is a dependence on the variable ordering, revisited in \S\ref{sec:discussion}.

We call the process defined by \eqref{eq:leps}--\eqref{eq:gen} the bigraphical
Mat\'ern--Whittle process (BMW process). The name records that the process is graphical twice over. The
spatial graph generates each variable's Mat\'ern structure through the operator, and the
variable graph carries the conditional-independence structure. The graph-domain processes of
\citet{borovitskiy2021matern} and \citet{bolin2024metric} are univariate, with a single
graph as the domain. The graphical Mat\'ern of \citet{dey2022graphical} couples univariate
Mat\'ern processes through distance-based cross-covariances on a known decomposable graph.
The present process generalizes it in three directions. The spatial model is an operator on
an arbitrary graph. The variable graph is an arbitrary directed acyclic graph. The graph is
learned from the data (\S\ref{sec:graphlearn}).

Figure~\ref{fig:schematic} summarizes the construction and displays a realization: a
coupled pair whose negative cross-dependence coefficient induces positive
cross-correlation, drawn on the horseshoe domain, which previews the behaviour on
non-convex domains studied in \S\ref{sec:horseshoe}. Within each variable, $\alpha$
controls roughness, $\kappa$ the range and $\tau^2$ the amplitude.

\begin{figure}[htbp]
\centering
\includegraphics[width=\textwidth]{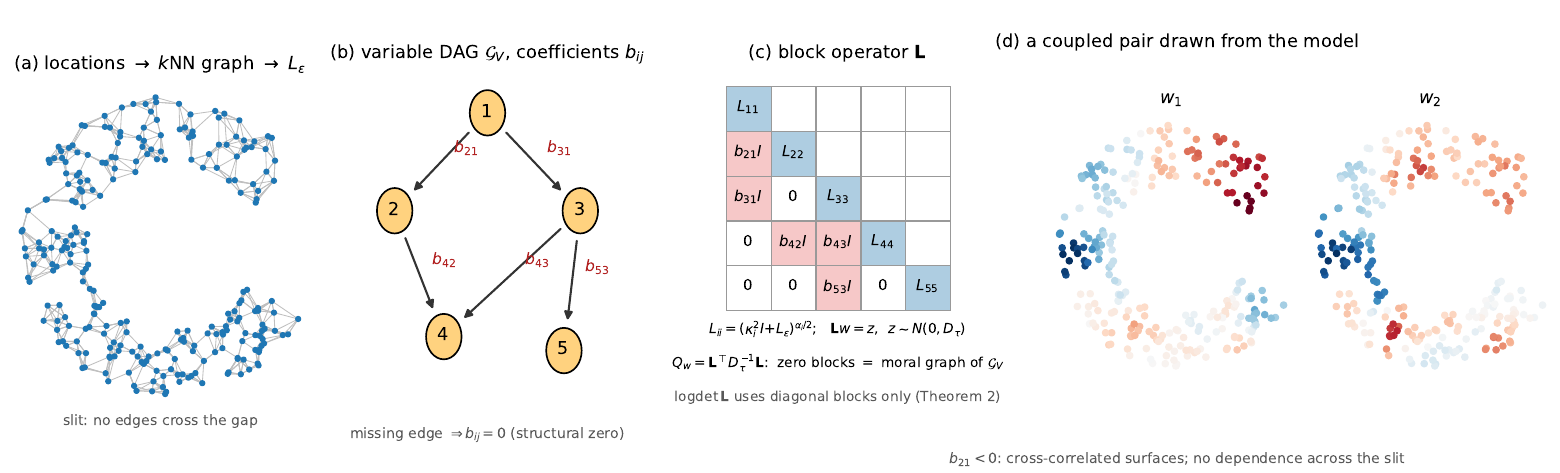}
\caption{Construction of the bigraphical Mat\'ern--Whittle process: a spatial domain
yields a graph and an operator, a variable graph supplies the cross-dependence, together
they define the block operator, and the process is drawn from it. (a) Locations on a
non-convex domain are joined into a nearest-neighbour graph that respects the domain, from
which the operator \eqref{eq:leps} is built once. (b) A directed acyclic variable graph
carries scalar cross-dependence coefficients. (c) The block lower-triangular operator of
\eqref{eq:opsys}. Zero blocks of the precision are the non-adjacencies of the moral graph,
and the log-determinant involves only the diagonal blocks. (d) A coupled pair $(w_1, w_2)$
drawn from the model on the graph of panel (a): the negative coefficient $b_{21}$ induces
positively cross-correlated surfaces, smooth along the arc and independent across the
slit.}
\label{fig:schematic}
\end{figure}

\subsection{Markov property and marginal structure}\label{sec:theory}

\begin{lemma}[Existence and validity]\label{lem:exist}
Fix locations $s_1,\ldots,s_n$ with graph Laplacian $\Le \succeq 0$. For $\kappa_i > 0$,
$\alpha_i > 0$, $\tau_i > 0$ and any directed acyclic $\GV$ with coefficients $\{b_{ij}\}$, the
operator $\bL$ of \eqref{eq:opsys} is invertible and the law \eqref{eq:gen} is the mean-zero
Gaussian $w \sim N(0, \Qw^{-1})$ with $\Qw = \bL^{\T} D_\tau^{-1} \bL \succ 0$. Because $\bL$ is
block lower-triangular, its determinant does not involve $\{b_{ij}\}$, so the law is proper for
every coefficient pattern consistent with the ordering, with no stability constraint.
\end{lemma}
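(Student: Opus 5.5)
We argue directly from the block structure of $\bL$ in \eqref{eq:opsys}, after fixing a topological ordering of $\GV$ so that every edge $i \leftarrow j$ has $j < i$.

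\emph{Diagonal blocks.} First we show that each diagonal block $M_i = (\kappa_i^2 I + \Le)^{\alpha_i/2}$ is symmetric positive definite. Since $\Le$ is symmetric with eigenpairs $(\lambda_h, u_h)$ and $\lambda_h \ge 0$, the spectral calculus of \S\ref{sec:system} gives
\[
M_i = \sum_h (\kappa_i^2 + \lambda_h)^{\alpha_i/2} u_h u_h^{\T}.
\]
Here $\kappa_i^2 + \lambda_h \ge \kappa_i^2 > 0$, so every eigenvalue is strictly positive whatever the value of $\alpha_i > 0$. Hence
\[
\det M_i = \prod_h (\kappa_i^2 + \lambda_h)^{\alpha_i/2} > 0,
\]
and this quantity depends only on $(\kappa_i, \alpha_i)$ and the spectrum of $\Le$.

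\emph{Determinant and invertibility.} The off-diagonal block $(i,j)$ equals $b_{ij} I_n$, and it is nonzero only for $j < i$. Therefore $\bL$ is block lower-triangular, and the determinant of a block-triangular matrix is the product of the determinants of its diagonal blocks. This follows by Laplace expansion, or by factoring $\bL = \diag(M_1,\ldots,M_p)(I + \text{strictly block-lower part})$, where the second factor is unit lower-triangular and so has determinant one. We obtain
\[
\det \bL = \prod_{i=1}^p \det M_i = \prod_{i=1}^p \prod_{h=1}^n (\kappa_i^2 + \lambda_h)^{\alpha_i/2} > 0,
\]
which visibly contains no $b_{ij}$. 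So $\bL$ is invertible for every choice of the coefficients. Alternatively, one can argue constructively by forward substitution: $w_1 = M_1^{-1} z_1$, then $w_i = M_i^{-1}\bigl(z_i - \sum_{j<i} b_{ij} w_j\bigr)$. This shows that $\bL w = z$ has a unique solution for every $z$, and it also exhibits the recursive generative interpretation.

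\emph{The Gaussian law.} Since $\bL$ is invertible, $w = \bL^{-1} z$ is a linear image of the Gaussian vector $z \sim N(0, D_\tau)$. Hence $w$ is mean-zero Gaussian with covariance $\bL^{-1} D_\tau \bL^{-\T}$, whose inverse is $\Qw = \bL^{\T} D_\tau^{-1} \bL$. The matrix $D_\tau$ is diagonal with positive entries $\tau_i^2$. For any $x \neq 0$,
\[
x^{\T} \Qw x = \| D_\tau^{-1/2} \bL x \|^2 > 0,
\]
because $\bL x \neq 0$. Thus $\Qw \succ 0$, and the law has a density, so it is proper. This holds for all real $\{b_{ij}\}$ consistent with the ordering, so no stability or positivity constraint on the coefficients is needed.

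There is no real obstacle in this argument. The only care needed is twofold. First, the fractional power must be well defined and positive definite for non-integer $\alpha_i$; this is exactly why $\kappa_i > 0$ is needed, given that $\lambda_1 = 0$. Second, the identity $\det \bL = \prod_i \det M_i$ for block-triangular matrices should be stated explicitly, since it carries the claim that the determinant is free of the coefficients.
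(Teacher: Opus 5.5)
Your proof is correct and follows essentially the same route as the paper's. Block lower-triangularity gives $\det \bL = \prod_i \prod_h (\kappa_i^2+\lambda_h)^{\alpha_i/2}$, which is positive and free of the $b_{ij}$ because $\kappa_i>0$ shifts the null eigenvalue of $\Le$; then $w=\bL^{-1}z$ is Gaussian with precision $\Qw=\bL^{\T}D_\tau^{-1}\bL\succ 0$ for every coefficient choice, and your forward-substitution and quadratic-form checks are harmless additions.
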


An edge $i \leftarrow j$ is admitted only when $j < i$, so the coefficient block cannot
encode a directed cycle. On a Euclidean or manifold domain the same system, with $\Le$
replaced by $-\Delta$, defines a valid continuum process whenever
$\nu_i = \alpha_i - d/2 > 0$. That continuum process is what \eqref{eq:gen} discretizes. The
value of the triangular construction lies in two further exact algebraic facts. Proofs of the
results in this subsection are given in the Supplementary Material.

\begin{theorem}\label{thm:ci}
Let $\bL$ be the block lower-triangular operator \eqref{eq:opsys} with invertible diagonal
blocks, and let $\Qw = \bL^{\T} D_\tau^{-1} \bL$ with $D_\tau$ block-diagonal and positive
definite. For $i \neq j$, the $(i,j)$ block of $\Qw$ vanishes
identically if and only if neither $(i,j)$ nor $(j,i)$ is an edge of $\GV$ and $i$ and $j$
have no common child in $\GV$. The zero pattern of $\Qw$ is therefore the moral graph of
$\GV$. Whenever $i$ and $j$ are non-adjacent in the moralized variable graph,
\[
w_i \perp\!\!\!\perp w_j \mid \{w_m : m \neq i, j\},
\]
jointly across all spatial locations.
\end{theorem}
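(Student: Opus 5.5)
The plan is a direct block computation of $\Qw$, followed by the standard fact that zero blocks of a Gaussian precision encode conditional independence. Write $P_i = (\kappa_i^2 I + \Le)^{\alpha_i/2}$ and $\bL = \Delta + B\otimes I_n$ with $\Delta = \diag(P_1,\ldots,P_p)$. Then the $(k,\ell)$ block of $\bL$ is $P_k$ if $k=\ell$, $b_{k\ell} I$ if $k\leftarrow\ell$ is an edge, and $0$ otherwise. Because $D_\tau^{-1} = \diag(\tau_1^{-2},\ldots,\tau_p^{-2})\otimes I_n$ is block diagonal, $(\Qw)_{ij} = \sum_k \tau_k^{-2} \bL_{ki}^{\T}\bL_{kj}$. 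For $i\neq j$, only three kinds of $k$ contribute: $k=i$, $k=j$, and $k$ a common child of $i$ and $j$. Using that $P_i$ is symmetric, this gives
\[
(\Qw)_{ij} = \tau_i^{-2} b_{ij} P_i + \tau_j^{-2} b_{ji} P_j + c_{ij} I_n, \qquad c_{ij} = \sum_{k:\, k\leftarrow i,\ k\leftarrow j} \tau_k^{-2} b_{ki} b_{kj}.
\]
By the ordering constraint, at most one of $b_{ij}$ and $b_{ji}$ is nonzero.

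The ``if'' direction is then immediate. With no edge between $i$ and $j$ and no common child, all three terms vanish, so $(\Qw)_{ij}=0$.

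For the ``only if'' direction I separate two cases.
\begin{itemize}
\item[(a)] \emph{There is an edge}, say $i\leftarrow j$ with $b_{ij}\neq 0$. Then $(\Qw)_{ij} = \tau_i^{-2}b_{ij}P_i + c_{ij}I$. This can vanish only if $P_i$ is a scalar multiple of $I$. But $\Le$ has eigenvalue $\lambda_1=0$ and, by the normalization $n^{-1}\tr(\Le)=1$, some positive eigenvalue. The map $\lambda\mapsto(\kappa_i^2+\lambda)^{\alpha_i/2}$ is strictly increasing because $\alpha_i>0$, so $P_i$ has at least two distinct eigenvalues. Hence $(\Qw)_{ij}\neq 0$.
\item[(b)] \emph{There is no edge but there are common children}. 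Then $(\Qw)_{ij} = c_{ij}I$, which is nonzero exactly when $c_{ij}\neq 0$. With a single common child this holds automatically, since edge coefficients are nonzero by definition of $\GV$.
\end{itemize}

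Case (b) is the main obstacle. With two or more common children, $c_{ij}$ could cancel on a measure-zero set of coefficients. I would state the ``only if'' as holding for all coefficients outside the algebraic hypersurface $\{c_{ij}=0\}$, i.e.\ generically, which is the usual faithfulness caveat. The sentence ``the zero pattern of $\Qw$ is the moral graph'' is then read in this generic sense.

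Finally, for the conditional independence claim, I would use the Gaussian conditioning formula. Let $S=\{i,j\}$. The conditional law of $(w_i,w_j)$ given $\{w_m: m\neq i,j\}$ is Gaussian with precision equal to the $S\times S$ principal block of $\Qw$, which is positive definite by Lemma~\ref{lem:exist}. When $i,j$ are non-adjacent in the moral graph, the off-diagonal block $(\Qw)_{ij}$ vanishes by the computation above, and this holds for every coefficient value. So this conditional precision is block diagonal, its inverse is block diagonal, and $w_i\perp\!\!\!\perp w_j$ given the rest. The independence holds jointly over all $n$ locations, because the statement concerns the full vectors $w_i,w_j\in\R^n$.
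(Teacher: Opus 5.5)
Your proposal is correct and follows the paper's proof: you expand $(\Qw)_{ij}=\sum_k \tau_k^{-2}\bL_{ki}^{\T}\bL_{kj}$ into an edge term plus a common-child term, the multiple-common-child case holds only generically (your faithfulness caveat is what the paper means by ``non-zero on an open set''), and you conclude with the zero-block-of-precision criterion for conditional independence. Your argument that $P_i$ has at least two distinct eigenvalues, so the edge term cannot be cancelled by $c_{ij}I$, justifies that step more carefully than the paper's brief remark that the positive weights $\tau_m^{-2}$ create no cancellation.
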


Marginal dependence between such pairs may still be non-zero, mediated by intermediate
variables. This is the intended semantics of a sparse variable graph. Triangularity is
essential. In a general operator system a zero block of $\bL$ does not produce a zero block
of $\bL^{\T} D_\tau^{-1}\bL$. Theorem~\ref{thm:ci} places the process within the graphical
Gaussian family, on an arbitrary spatial graph and with the graph encoded in $|E_V|$ scalar
parameters. It remains to identify the marginals. Write $\GV^{m}$ for the moral graph of
$\GV$ and $\mathrm{an}(i)$ for the ancestors of variable $i$.

\begin{proposition}[Bigraphical Mat\'ern--Whittle as a graphical Gaussian process]\label{prop:ggp}
Let $w$ follow \eqref{eq:gen}.
\begin{enumerate}
\item[\textup{(i)}] For every parameter value, $w$ is a graphical Gaussian process with respect to
$\GV^{m}$ in the sense of \citet{dey2022graphical}: $w_i \perp\!\!\!\perp w_j \mid \{w_l : l \neq i,j\}$
whenever $(i,j) \notin \GV^{m}$. If $\GV$ has no immoralities, in particular a chain, then $\GV^{m}$
is the skeleton of $\GV$ and $w$ is a graphical Gaussian process with respect to $\GV$ itself.
\item[\textup{(ii)}] In the eigenbasis $\Le = \sum_h \lambda_h u_h u_h^{\T}$ the process decouples
across frequencies: at $\lambda_h$ the coefficient vector $(u_h^{\T} w_1, \ldots, u_h^{\T} w_p)^{\T}$
is $N(0, \Sigma_h)$ with $\Sigma_h = L_h^{-1} D_{\tau,0} L_h^{-\T}$, $L_h = \diag\{(\kappa_i^2 +
\lambda_h)^{\alpha_i/2}\} + B$ and $D_{\tau,0} = \diag(\tau_1^2, \ldots, \tau_p^2)$. A variable with
no parents is exactly graph-Whittle--Mat\'ern, $w_i \sim N\{0, \tau_i^2(\kappa_i^2 I + \Le)^{-\alpha_i}\}$;
in general the marginal spectral density of $w_i$ is the finite mixture
$S_i(\lambda) = \sum_{j \in \{i\} \cup \mathrm{an}(i)} \tau_j^2\, g_{ij}(\lambda)^2$, where
$g_{ij}(\lambda) = [L(\lambda)^{-1}]_{ij}$ is a sum over the directed paths from $j$ to $i$. As
$\lambda \to \infty$, $S_i(\lambda) = \tau_i^2(\kappa_i^2 + \lambda)^{-\alpha_i}\{1 + o(1)\}$, so
every variable retains the Mat\'ern smoothness $\nu_i = \alpha_i - d/2$ of its own operator.
\end{enumerate}
\end{proposition}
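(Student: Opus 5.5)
Part (i) follows directly from Theorem~\ref{thm:ci}. A zero $(i,j)$ block of $\Qw$ for every non-adjacent pair of $\GV^{m}$ gives the stated conditional independence for the jointly Gaussian vector $w$. This is exactly the graphical Gaussian process property of \citet{dey2022graphical} with respect to $\GV^{m}$, since that property is defined by process-level conditional independence across all locations. When $\GV$ has no immoralities, moralization adds no edges, so $\GV^{m}$ equals the skeleton of $\GV$. A chain has no node with two parents, so it is a special case.

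For (ii), I will change basis with $U = [u_1,\ldots,u_n]$ and the block-orthogonal matrix $I_p \otimes U$. Every diagonal block of $\bL$ is a spectral function of $\Le$, so $U^{\T}(\kappa_i^2 I+\Le)^{\alpha_i/2}U = \diag_h\{(\kappa_i^2+\lambda_h)^{\alpha_i/2}\}$. Every off-diagonal block $b_{ij}I$ is unchanged by the rotation. After permuting the coordinates from variable-major to frequency-major order, $\bL$ becomes block diagonal with $p\times p$ blocks $L_h = \diag\{(\kappa_i^2+\lambda_h)^{\alpha_i/2}\} + B$. The innovation covariance becomes $I_n\otimes D_{\tau,0}$ in the same order. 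The coefficient vectors $\hat w_h = (u_h^{\T}w_1,\ldots,u_h^{\T}w_p)^{\T}$ therefore satisfy $L_h\hat w_h = \hat z_h$ with independent $\hat z_h\sim N(0,D_{\tau,0})$. Each $L_h$ is lower triangular with positive diagonal, hence invertible. This yields independence across $h$ and $\Sigma_h = L_h^{-1}D_{\tau,0}L_h^{-\T}$. For a root $i$, row $i$ of $B$ vanishes, so $(\bL w)_i$ involves only $w_i$. Hence $w_i = \tau_i(\kappa_i^2 I+\Le)^{-\alpha_i/2}\xi_i$, which is exactly graph-Whittle--Mat\'ern.

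For the mixture formula, write $L(\lambda) = \Lambda(\lambda) + B$ with $\Lambda(\lambda)$ diagonal, and expand the inverse as a Neumann series:
\[
L^{-1} = \Lambda^{-1}\sum_{k=0}^{p-1}(-B\Lambda^{-1})^k .
\]
The series terminates after $p$ terms because $B\Lambda^{-1}$ is strictly lower triangular and hence nilpotent. The $(i,j)$ entry is therefore
\[
g_{ij} = \sum_{\text{paths } j=v_0\to v_1\to\cdots\to v_k=i}(-1)^k\,\Lambda_{ii}^{-1}\prod_{t=1}^{k} b_{v_t v_{t-1}}\Lambda_{v_{t-1}v_{t-1}}^{-1}.
\]
This is nonzero only if $j=i$ or $j\in\mathrm{an}(i)$. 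Reading off the $(i,i)$ entry of $\Sigma_h$ then gives $S_i(\lambda)=\sum_j \tau_j^2 g_{ij}(\lambda)^2$.

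The asymptotic claim is the main obstacle. We have $g_{ii}=(\kappa_i^2+\lambda)^{-\alpha_i/2}$. A path of length $k\ge1$ contributes a term of order $\lambda^{-(\alpha_i+\alpha_{v_0}+\cdots+\alpha_{v_{k-1}})/2}$, which is $o\{g_{ii}\}$ because every $\alpha>0$. Hence $g_{ij}^2 = o(g_{ii}^2)$ for $j\ne i$ and $g_{ii}^2$ dominates, giving $S_i(\lambda)=\tau_i^2(\kappa_i^2+\lambda)^{-\alpha_i}\{1+o(1)\}$. I would state this carefully for the continuum or large-$\lambda$ regime, since on a finite graph $\lambda$ is bounded. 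The smoothness statement $\nu_i=\alpha_i-d/2$ is then inherited from the tail of the univariate Whittle--Mat\'ern spectral density.
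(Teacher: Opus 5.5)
Your proposal is correct and takes essentially the same route as the paper. Part (i) is read off from the moral-graph zero pattern of $\Qw$ in Theorem~\ref{thm:ci}. Part (ii) diagonalizes in the eigenbasis of $\Le$ into independent lower-triangular systems $L_h w^{(h)} = z^{(h)}$, expands $L(\lambda)^{-1}$ as a sum over directed paths, and notes that every ancestral term carries an extra decaying factor beyond $g_{ii}$; your Kronecker rotation and the Neumann-series derivation using nilpotency of $B\Lambda(\lambda)^{-1}$ just make explicit steps the paper states directly.
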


\begin{remark}
The stitched graphical Mat\'ern of \citet{dey2022graphical} and the present process are
exact in complementary senses. Stitching preserves every univariate Mat\'ern marginal
exactly and matches cross-covariances on a finite reference set. The bigraphical
Mat\'ern--Whittle process makes the conditional independence an exact operator identity and
delivers the cross-dependence in closed form through Proposition~\ref{prop:ggp}(ii). The
price is that only source variables are exactly Mat\'ern. Descendant marginals are ancestral
spectral mixtures that retain their own smoothness.
\end{remark}

\section{Estimation}\label{sec:est}

\subsection{Exact marginal likelihood}\label{sec:lik}

\begin{figure}[t]
\centering
\includegraphics[width=\textwidth]{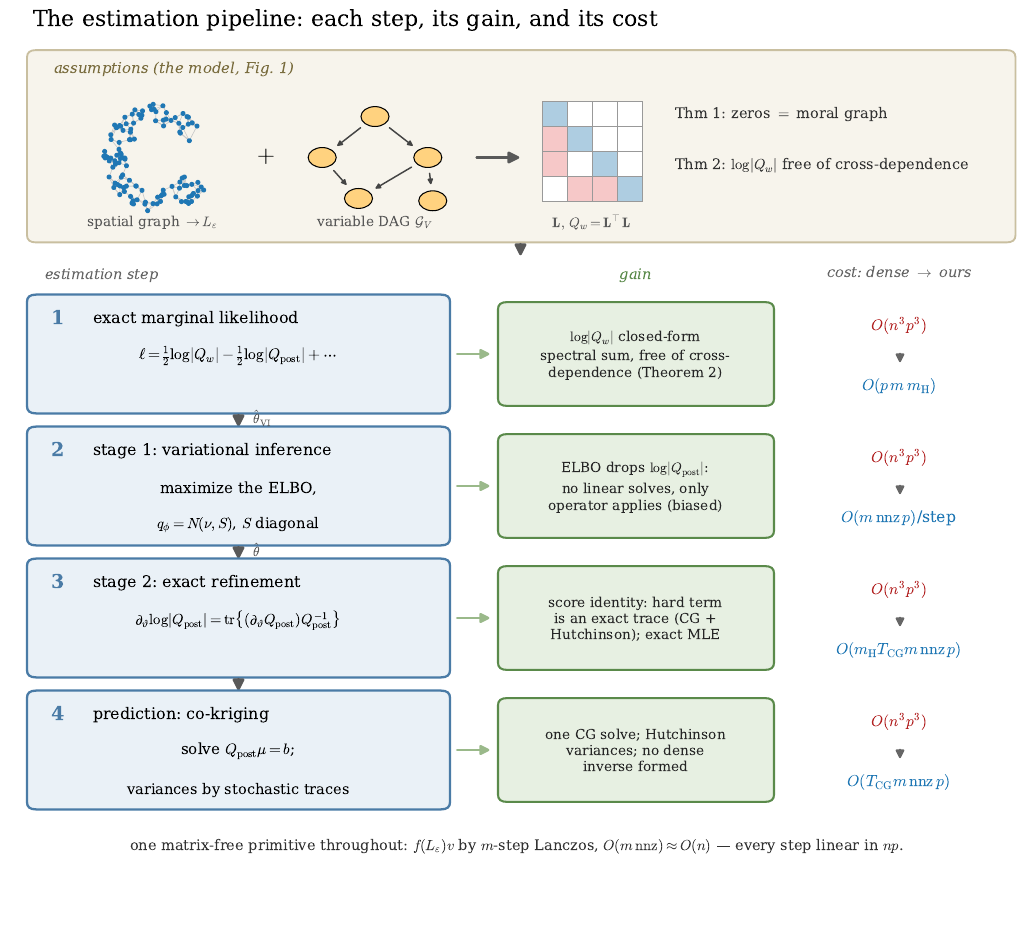}
\caption{The estimation pipeline. Each step lists its operation, the identity that makes it
cheap or exact, and its cost. The variational stage drops $\log|Q_{\mathrm{post}}|$ and all
linear solves and is fast but biased. The exact refinement turns
$\partial_\vartheta\log|Q_{\mathrm{post}}|$ into a stochastic trace and removes the bias.
Every step reduces to one matrix-free primitive and is linear in $np$, in place of the
$O(n^3p^3)$ of a dense factorization. Notation: $m$, Krylov dimension (\S\ref{sec:comp});
$\mathrm{nnz}$, non-zeros of $\Le$, of order $n$; $m_{\mathrm{H}}$, number of trace probes;
$T_{\mathrm{CG}}$, conjugate-gradient iterations (\S\ref{sec:refine}). The complexity of
every step, in $n$ and in $p$ separately, is tabulated in Section~\ref{supp:scaling} of the
Supplementary Material.}
\label{fig:method}
\end{figure}

The estimation and prediction pipeline is summarized in Fig.~\ref{fig:method}. Integrating $w$
out of \eqref{eq:gen}--\eqref{eq:post} gives the marginal log-likelihood
\begin{equation}\label{eq:loglik}
\ell(\theta) = \tfrac{1}{2}\log|\Qw| - \tfrac{1}{2}\log|\Qpost|
- M \log \sigma - \tfrac{M}{2}\log(2\pi)
- \tfrac{1}{2}\sigma^{-2} \|y\|_{\Ocal}^2
+ \tfrac{1}{2} b^{\T} \Qpost^{-1} b,
\end{equation}
with $b = \sigma^{-2}\tilde y$. We examine the terms of \eqref{eq:loglik} in turn, since the
whole estimation strategy follows from their separate difficulties. The data term
$\sigma^{-2}\|y\|_{\Ocal}^2$ and the normalizing constants are immediate. The quadratic form
$b^{\T}\Qpost^{-1} b$ requires one linear solve with $\Qpost$, which conjugate gradients
delivers from matrix--vector products alone (\S\ref{sec:refine}). The prior determinant
$\log|\Qw|$ appears to be the first obstruction, since $\Qw$ is an $N \times N$ operator with
fractional blocks. The triangular construction removes it exactly.

\begin{theorem}\label{thm:logdet}
For the operator \eqref{eq:opsys},
\[
\log |\!\det \bL| = \sum_{j=1}^{p} \frac{\alpha_j}{2}
\sum_{h=1}^{n} \log(\kappa_j^2 + \lambda_h),
\qquad
\log |\Qw| = 2 \log|\!\det \bL| - 2n \sum_{j=1}^{p} \log \tau_j .
\]
The expression does not involve the cross-dependence coefficients, and it depends on $\theta$ only through
$(\kappa,\alpha,\tau)$; the innovation variances enter solely through the separable, closed-form
term $-2n\sum_j \log\tau_j$, evaluated against the fixed spectrum of $\Le$.
\end{theorem}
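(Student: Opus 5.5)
The plan is to exploit the block lower-triangular structure of $\bL$ directly. Under the ordering fixed in \S\ref{sec:system}, with $j<i$ for every edge $i \leftarrow j$, the $(i,j)$ block of $\bL$ in \eqref{eq:opsys} is $b_{ij} I_n$ for $j<i$ and zero for $j>i$. The diagonal block is $\bL_{ii} = (\kappa_i^2 I + \Le)^{\alpha_i/2}$. The first step is to record the standard fact that the determinant of a block triangular matrix with square diagonal blocks equals the product of the determinants of those blocks. One can see this by induction on $p$, using the Schur-complement factorization of the leading $(p-1)$ versus last block. Alternatively, one can conjugate by the block-diagonal eigenbasis $I_p \otimes U$ with $U = (u_1,\ldots,u_n)$, as in Proposition~\ref{prop:ggp}(ii). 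This reduces $\bL$ to a direct sum over $h$ of the $p\times p$ lower-triangular matrices $L_h$, each with diagonal $(\kappa_i^2+\lambda_h)^{\alpha_i/2}$. Either route gives $\det \bL = \prod_{j} \det \bL_{jj}$, and the off-diagonal coefficients $b_{ij}$ drop out, as already noted in Lemma~\ref{lem:exist}.

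The second step evaluates each diagonal determinant by the spectral calculus of \S\ref{sec:system}. The operator $(\kappa_j^2 I + \Le)^{\alpha_j/2} = \sum_h (\kappa_j^2+\lambda_h)^{\alpha_j/2} u_h u_h^{\T}$ is symmetric, and its eigenvalues $(\kappa_j^2+\lambda_h)^{\alpha_j/2}$ are strictly positive because $\kappa_j>0$ and $\lambda_h\ge 0$. So its determinant is $\prod_h (\kappa_j^2+\lambda_h)^{\alpha_j/2}>0$. Taking logarithms and summing over $j$ gives the first identity; the absolute value is harmless since the determinant is positive.

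The third step handles $\Qw = \bL^{\T} D_\tau^{-1}\bL$. By multiplicativity, $|\Qw| = (\det\bL)^2 \det D_\tau^{-1}$. Since $D_\tau = \diag(\tau_1^2,\ldots,\tau_p^2)\otimes I_n$, we have $\det D_\tau^{-1} = \prod_j \tau_j^{-2n}$. Hence $\log|\Qw| = 2\log|\det\bL| - 2n\sum_j\log\tau_j$. The dependence claims then follow by inspection: the formula involves only $\kappa,\alpha,\tau$ and the fixed spectrum $\{\lambda_h\}$ of $\Le$, which does not depend on $\theta$.

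There is no real obstacle here. The only point needing care is the block-triangular determinant identity when the blocks are general matrices rather than scalars. The diagonal blocks are invertible, and all blocks of $\bL$ are simultaneously diagonalized by $U$, since the off-diagonal blocks are scalar multiples of $I_n$. So the eigenbasis reduction to the $L_h$ is the cleanest route, and I would present the proof that way.
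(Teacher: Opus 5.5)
Your proposal is correct and follows the paper's proof essentially step for step. The block lower-triangular determinant identity removes the $b_{ij}$, and the spectral calculus gives $\det \bL_{jj} = \prod_h (\kappa_j^2+\lambda_h)^{\alpha_j/2}$; multiplicativity together with $\det D_\tau = \prod_j \tau_j^{2n}$ then yields the $\Qw$ identity. The paper applies the general block-triangular identity directly rather than your preferred eigenbasis reduction to the $L_h$. Both routes are valid here, and the paper's is slightly more general because it does not require the off-diagonal blocks to commute with $\Le$.
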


Theorem~\ref{thm:logdet} has two consequences. Computationally, once the spectrum of $\Le$, or
a stochastic quadrature approximation to it, is available, the prior log-determinant and its
derivatives with respect to $\kappa$, $\alpha$ and $\tau$ are closed-form sums (\S\ref{sec:comp}).
Statistically, the cross-dependence coefficients are free: adding, deleting or perturbing edges of $\GV$ leaves
$\log|\Qw|$ unchanged, so likelihood ratios across graphs require no determinant computation.
This fact is what makes Bayesian graph learning practical in \S\ref{sec:graphlearn}.

The remaining term $\log|\Qpost|$ has no closed form. The observation indicator $D_{\Ocal}$
breaks the spectral alignment between $\Qpost$ and $\Le$, so no analogue of
Theorem~\ref{thm:logdet} exists for it, and it is the sole obstruction to exact evaluation at
scale. The estimation strategy is
organized around avoiding this term, and then confronting it only sparingly.

\subsection{First route: variational estimation, faster alternative}\label{sec:vi}\label{sec:comp}

Following the variational strategy of \citet{siden2020deep}, we introduce a mean-field
Gaussian $q_{\phi}(w) = N(\nu, S)$ with diagonal $S$. The evidence lower bound decomposes as
\[
\mathcal{L}(\theta,\phi) = E_{q}\{\log p(y \mid w, \theta)\}
+ E_{q}\{\log p(w \mid \theta)\} - E_{q}\{\log q_{\phi}(w)\} .
\]
The first expectation is $-M\log\sigma - \tfrac{1}{2}\sigma^{-2}E_q\|y-w\|_{\Ocal}^2$ up to a
constant. The second is $\tfrac{1}{2}\log|\Qw| - \tfrac{1}{2}E_q\|D_\tau^{-1/2}\bL_\theta
w\|^2$, and Theorem~\ref{thm:logdet} supplies $\tfrac{1}{2}\log|\Qw| =
\log|\!\det\bL_\theta| - n\sum_j \log\tau_j$ in closed form. The third is the Gaussian
entropy $\tfrac{1}{2}\log|S|$ plus a constant. Assembling the three parts,
\begin{equation}\label{eq:elbo}
\mathcal{L}(\theta,\phi) = \tfrac{1}{2}\log|S| - M\log\sigma + \log|\!\det \bL_{\theta}|
- n\sum_{j=1}^p \log\tau_j
- E_{q}\{ \tfrac{1}{2}\|D_\tau^{-1/2}\bL_{\theta} w\|^2
+ \tfrac{1}{2}\sigma^{-2}\|y - w\|_{\Ocal}^2 \}.
\end{equation}
The bound contains no $\log|\Qpost|$ and no linear solves. The quadratic term is a forward
operator application followed by a per-variable scaling. The expectations are handled by
reparameterization, and all parameters are optimized jointly by stochastic gradient ascent.

Three computational devices make each step cheap. First, every operation reduces to one
primitive, the application of a function of $\Le$ to a block of $p$ vectors. Running $m$
steps of the Lanczos process on $\Le$ from a vector $v$ yields an orthonormal basis $Q_m$ and
an $m \times m$ tridiagonal matrix $T_m$, and the function is applied through the small
matrix,
\begin{equation}\label{eq:krylovfun}
f(\Le)\, v \approx \|v\|\, Q_m\, f(T_m)\, e_1 .
\end{equation}
The approximation is exact for polynomials of degree $m-1$ and its error decays geometrically
in $m$ for analytic $f$ (Proposition~\ref{prop:exact} and Theorem~\ref{thm:kryerr} of the
Supplementary Material). The Krylov subspace is invariant to the shifts $\kappa_j^2 I$
(Lemma~\ref{lem:shift}), so one Lanczos run per vector serves every variable and every
required function. The $p$ variables are processed in lock-step, so one application costs
$O(m\,\mathrm{nnz}\,p)$, where $\mathrm{nnz}$ is the number of non-zeros of $\Le$, of order
$n$. No dense matrix, eigendecomposition or Cholesky factor is ever formed, and memory stays
$O(np)$.

Second, the prior determinant is a one-time computation. Theorem~\ref{thm:logdet} reduces
$\log|\Qw|$ and its derivatives to sums of scalar functions over the spectrum of $\Le$. The
Lanczos recurrence supplies these sums as a Gauss quadrature rule with the Ritz values as
nodes (Theorem~\ref{thm:gauss}), averaged over Rademacher probes as in stochastic Lanczos
quadrature \citep{ubaru2017fast,chen2021analysis}. The nodes and weights do not depend on
$\theta$. They are computed once, before optimization begins, and every later evaluation of
$\log|\Qw|$ and its derivatives is a closed-form sum over fixed nodes. The setup took
$21$ seconds at $n = 1{,}200{,}000$.

Third, gradients flow through the fractional operator by a custom differentiation rule. The
forward pass applies \eqref{eq:krylovfun}. The backward pass returns the input gradient by
self-adjointness and the gradients in $\kappa_i$ and $\alpha_i$ by one further Krylov
application each, as in the matrix-free differentiation of \citet{gardner2018gpytorch}. One
variational step therefore costs a constant number of operator applications. Measured step
times are reported in \S\ref{sec:comptiming} and Section~\ref{supp:scaling} of the
Supplementary Material.

Mean-field variational inference is fast but biased. The diagonal $S$ understates the
conditional covariance, which compresses the spatial parameters and the innovation variances.
The refinement below removes this bias for $(\kappa,\alpha,\tau,b)$. The nugget requires
separate discussion (\S\ref{sec:nugget}).

\subsection{Second route: exact likelihood refinement}\label{sec:refine}\label{sec:nugget}

Starting from the variational estimate, we maximize the exact likelihood \eqref{eq:loglik}
by gradient ascent. The
score identity
$\nabla_\theta \ell = E_{w \mid y, \theta}\{\nabla_\theta \log p(w, y \mid \theta)\}$ gives,
for a parameter $\vartheta$ entering $\Qw$,
\begin{equation}\label{eq:grad}
\partial_{\vartheta} \ell = \tfrac{1}{2} \partial_{\vartheta} \log|\Qw|
- \tfrac{1}{2} [ \mu^{\T} (\partial_{\vartheta}\Qw) \mu
+ \tr\{ (\partial_{\vartheta}\Qw) \Qpost^{-1} \} ].
\end{equation}
The obstruction $\log|\Qpost|$ is dissolved rather than met. Its derivative is exactly the
trace in \eqref{eq:grad}, because $\Qpost = \Qw + \sigma^{-2} D_{\Ocal}$ depends on
$\vartheta$ only through $\Qw$, so the refinement ascends the exact gradient using only
linear solves. The component derivatives of $\bL$ and of $\log|\Qw|$ in $\kappa_i$,
$\alpha_i$, $b_{ij}$, $\tau_i$ and $\sigma$ are closed-form and matrix-free. Their
expressions are collected in Section~\ref{supp:krylov} of the Supplementary Material.

Each gradient step requires one conjugate-gradient solve for the posterior mean and
$m_{\mathrm{H}}$ stochastic trace solves with Rademacher probes
\citep{hutchinson1990stochastic}. The trace solves are computed once per step and reused
across all parameters. Conjugate gradients uses only the product
$\Qpost w = \bL^{\T} D_\tau^{-1}(\bL w) + \sigma^{-2} D_{\Ocal} w$. In the graph regime the
spectrum of $\Le$ is bounded uniformly in $n$, so $30$ to $40$ iterations sufficed, and
standard preconditioning restores an $n$-independent count in fine-mesh regimes
\citep{cutajar2016preconditioning}. Predictive variances use the same trace machinery, and
posterior draws use the perturbation sampler of \citet{papandreou2010gaussian}. The
refinement was validated against a sparse-Cholesky reference and against finite differences
of \eqref{eq:loglik} at small $n$.

A refinement step is roughly an order of magnitude more expensive than a variational step,
at $O(m_{\mathrm{H}} T_{\mathrm{CG}}\, m\, \mathrm{nnz}\, p)$ against
$O(m\, \mathrm{nnz}\, p)$, where $T_{\mathrm{CG}}$ is the conjugate-gradient iteration
count. Both are linear in $n$ and in $p$. A step-by-step complexity table and measured
timings are given in \S\ref{sec:comptiming} and Section~\ref{supp:scaling} of the
Supplementary Material.

The nugget requires a separate analysis. The variance $\sigma^2$ of the measurement error
governs predictive uncertainty, and the operator class limits the information the data
carry about it. Under full observation the projection $z_h = u_h^{\T} y$ on the
eigenvectors of $\Le$ decouples the data into $n$ independent spectral coefficients
$z_h \sim N\{0, s(\lambda_h) + \sigma^2\}$ with signal density
$s(\lambda) = \tau^2(\kappa^2 + \lambda)^{-\alpha}$, and the nugget enters the
likelihood only through modes whose signal has decayed to or below $\sigma^2$. Proofs of
the following results are in the Supplementary Material.

\begin{theorem}\label{thm:nugget}
Consider the univariate model $y = w + \epsilon$ with
$w \sim N\{0, \tau^2(\kappa^2 I + \Le)^{-\alpha}\}$ and $\epsilon \sim N(0, \sigma^2 I)$
independent, where $s(\lambda) = \tau^2(\kappa^2+\lambda)^{-\alpha}$. The Fisher information for $\sigma^2$ is
\[
\mathcal{I}(\sigma^2) = \frac{\neff}{2\sigma^4}, \qquad
\neff = \sum_{h=1}^{n} \pi_h^2, \qquad
\pi_h = \frac{\sigma^2}{s(\lambda_h) + \sigma^2} \in (0,1],
\]
and any unbiased estimator satisfies
$\{\mathrm{var}(\hat\sigma^2)\}^{1/2} \ge \sigma^2 (2/\neff)^{1/2}$.
\end{theorem}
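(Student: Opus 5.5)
The plan is to diagonalize the model in the eigenbasis of $\Le$, compute the Fisher information for $\sigma^2$ as a sum of independent one-dimensional contributions, and then apply the Cram\'er--Rao inequality.

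\emph{Diagonalization.} Write $\Le = U\Lambda U^{\T}$ with $U$ orthogonal. Then $\mathrm{cov}(y) = U\{s(\Lambda) + \sigma^2 I\}U^{\T}$, so $z = U^{\T} y$ has independent coordinates $z_h \sim N(0, v_h)$ with $v_h = s(\lambda_h) + \sigma^2$. Since $U$ is fixed and invertible and does not depend on $\theta$, the likelihood of $y$ and that of $z$ differ only by a constant Jacobian. The two models therefore carry the same Fisher information. This is the same spectral calculus already used in Proposition~\ref{prop:ggp}(ii).

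\emph{Information per coordinate.} For a single $N(0, v)$ coordinate with $v = v(\sigma^2)$, the log-density is $-\tfrac12\log v - z^2/(2v)$. Its score is $\tfrac12 (\partial v)(z^2/v - 1)/v$. The variance of $z^2/v - 1$ is $2$, so the information is $(\partial v)^2/(2v^2)$. Here $\partial v_h/\partial\sigma^2 = 1$, because $s$ does not depend on $\sigma^2$; this treats $\sigma^2$ as the parameter with $(\kappa,\alpha,\tau)$ held fixed. Each coordinate therefore contributes $1/(2v_h^2) = \pi_h^2/(2\sigma^4)$, since $\pi_h = \sigma^2/v_h$. Summing over the independent coordinates gives $\mathcal{I}(\sigma^2) = \neff/(2\sigma^4)$.

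\emph{Range of $\pi_h$.} We have $\pi_h \in (0,1]$ because $s(\lambda_h) > 0$. This follows from $\kappa^2 + \lambda_h \ge \kappa^2 > 0$, using $\Le \succeq 0$. In fact the inequality $\pi_h \le 1$ is strict, and the stated closed interval is harmless.

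\emph{Cram\'er--Rao bound.} The family is a regular exponential family in $\sigma^2$: the density is smooth in $\sigma^2 > 0$, and the support does not depend on the parameter. Hence any unbiased $\hat\sigma^2$ satisfies $\mathrm{var}(\hat\sigma^2) \ge \mathcal{I}(\sigma^2)^{-1} = 2\sigma^4/\neff$. Taking square roots gives the stated bound.

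There is no real obstacle here; the only point requiring care is interpretation. The bound is the known-other-parameters information. If $(\kappa,\alpha,\tau)$ are also unknown, the efficient information is the Schur complement, which can only be smaller, so the displayed lower bound on the variance remains valid. I would remark on this rather than compute the full information matrix.
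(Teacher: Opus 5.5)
Your proposal is correct and follows the paper's proof essentially step for step: rotate into the eigenbasis of $\Le$ so that the $z_h$ are independent $N\{0, s(\lambda_h)+\sigma^2\}$, sum the per-coordinate variance-parameter information $(\partial d_h/\partial\sigma^2)^2/(2d_h^2)$ with $\partial d_h/\partial\sigma^2=1$, and apply Cram\'er--Rao. One wording point: when the $s(\lambda_h)$ are distinct, the family in $\sigma^2$ is a curved exponential family, not a full one-parameter exponential family, but it is a smooth submodel with parameter-free support, so the Cram\'er--Rao regularity conditions still hold as you claim.
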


\begin{corollary}\label{cor:ceiling}
Suppose $\lambda_{\max} = \sup_n \lambda_n < \infty$ and define
$s_{\min} = \tau^2(\kappa^2 + \lambda_{\max})^{-\alpha}$. Then
\[
\frac{\neff}{n} \le \Bigl( \frac{\sigma^2}{s_{\min} + \sigma^2} \Bigr)^{2},
\]
a constant that does not tend to one under in-fill sampling. In particular, if
$\sigma^2 \ll s_{\min}$ then $\neff/n \lesssim (\sigma^2/s_{\min})^2$. Consistency of
$\hat\sigma^2$ is retained, since $\neff \to \infty$, but the relative efficiency is pinned
below a fixed ceiling, so further in-fill data do not improve the relative precision.
\end{corollary}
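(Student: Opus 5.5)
The corollary follows from Theorem~\ref{thm:nugget} by bounding each summand of $\neff$ separately. The only input is that the signal density is monotone in $\lambda$.

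\emph{Step 1: monotonicity.} Since $\kappa^2>0$ and $\alpha>0$, the function $s(\lambda)=\tau^2(\kappa^2+\lambda)^{-\alpha}$ is strictly decreasing on $[0,\infty)$. Every eigenvalue satisfies $0\le\lambda_h\le\lambda_n\le\lambda_{\max}$, so $s(\lambda_h)\ge s_{\min}$ for all $h$ and uniformly in $n$.

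\emph{Step 2: termwise bound.} The map $x\mapsto \sigma^2/(x+\sigma^2)$ is decreasing in $x\ge 0$. Hence
\[
\pi_h \le \sigma^2/(s_{\min}+\sigma^2), \qquad \pi_h^2\le\{\sigma^2/(s_{\min}+\sigma^2)\}^2 .
\]
Summing over $h=1,\ldots,n$ and dividing by $n$ gives the displayed inequality. The bound depends only on $(\tau,\kappa,\alpha,\sigma,\lambda_{\max})$, not on $n$. Because $s_{\min}>0$, it is strictly less than one, so it cannot tend to one under in-fill. If $\sigma^2\ll s_{\min}$, then $\sigma^2/(s_{\min}+\sigma^2)\le \sigma^2/s_{\min}$, which gives $\neff/n\lesssim(\sigma^2/s_{\min})^2$.

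\emph{Step 3: lower bound and consistency.} This is the only step needing care. The reverse bound uses $\lambda_h\ge 0$, so $s(\lambda_h)\le s(0)=\tau^2\kappa^{-2\alpha}$. Then
\[
\pi_h\ge \sigma^2/(\tau^2\kappa^{-2\alpha}+\sigma^2)=:c>0, \qquad \neff\ge c^2 n\to\infty .
\]
By Theorem~\ref{thm:nugget}, the Cramér--Rao bound $\sigma^2(2/\neff)^{1/2}$ therefore tends to zero, and it does so at rate $n^{-1/2}$. For consistency I would use the estimator in the Gaussian spectral representation with $(\tau,\kappa,\alpha)$ known, or the maximum-likelihood estimator. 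Its score is a sum of independent terms in $z_h^2$ with bounded weights and total information $\neff/(2\sigma^4)\asymp n$. A standard argument for MLEs from independent, non-identically distributed observations whose total information grows linearly in $n$ then gives consistency.

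The asymptotic relative efficiency is the ratio of $\neff$ to the $n$ achievable without signal, namely $\neff/n$. This ratio stays below the Step~2 ceiling. So further in-fill data reduce the variance only at the fixed fraction of the no-signal rate and do not raise the relative precision. I would phrase this last remark informally, matching the statement, rather than prove a formal efficiency theorem.
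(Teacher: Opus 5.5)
Your argument is correct and follows the paper's proof: monotonicity of $s$ gives $s(\lambda_h)\ge s_{\min}$, which yields the uniform termwise bound on $\pi_h$, and summing gives the $n$-free ceiling. Your Step~3 is a useful tightening: the paper infers $\neff\to\infty$ only from $\pi_h>0$, which is not enough by itself, whereas your uniform lower bound $\pi_h\ge\sigma^2/(\tau^2\kappa^{-2\alpha}+\sigma^2)$ from $\lambda_h\ge 0$ gives $\neff\ge c^2 n$ and hence $\neff\asymp n$.
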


\begin{remark}\label{rem:nugget}
Three consequences follow. First, a parallel phenomenon governs the innovation variance:
under in-fill sampling the amplitude and the range are not separately identifiable, only the
smoothness and the microergodic combination $\tau^2 \kappa^{2\nu}$, with
$\nu = \alpha - d/2$, the graph analogue of classical in-fill results for the Mat\'ern
\citep{stein1999interpolation,zhang2004inconsistent}. Section~\ref{sec:sims} therefore
reports recovery of the identified quantities, the amplitude, the implied marginal variance
and the microergodic combination, with the individual range and smoothness in the
Supplementary Material. Second, Theorem~\ref{thm:nugget} extends
\citet{tang2021identifiability} to the free-smoothness, bounded-spectrum regime of graph
operators, and exposes a trade-off inside the model class: the normalization that makes the
marginal a faithful continuum Mat\'ern covariance (\S\ref{sec:leps}) is precisely what
bounds the spectrum, truncating the high-frequency tail through which the nugget is seen.
Third, the estimator of \S\ref{sec:vi} inherits this geometry, since the bound
\eqref{eq:elbo} exerts almost no pressure on $\sigma$: in a replicated sweep the variational
estimate sat at its optimization floor while the truth varied over a $7.5$-fold range
(Table~\ref{tab:signr} of the Supplementary Material). We therefore report $\sigma$ from
the first stage and, when the nugget is of scientific interest and the estimated $\neff$
exceeds a small threshold, empirically $\neff \ge 5$, rerun the refinement above with
$\sigma$ free. Held-out prediction error changed by less than $3\%$ across a $16$-fold
range of $\sigma$, so the treatment of the nugget has little effect on prediction.
\end{remark}

\section{Learning the variable graph}\label{sec:graphlearn}\label{sec:eigenlik}\label{sec:sampler}\label{sec:mcmcconv}

A natural objection to any graphical spatial model is that the variable graph must be known
in advance. Here it need not be, because scoring a candidate graph is cheap. By
Theorem~\ref{thm:logdet} the prior determinant does not depend on the cross-dependence
coefficients, so competing graphs differ only in the quadratic part of the likelihood. The
likelihood also factorizes over spatial frequencies. Let $\Le u_h = \lambda_h u_h$ be the
eigenpairs of the common operator. Projecting the observed fields onto this basis gives the
spectral data
\begin{equation}\label{eq:speccoef}
z_h = (u_h^{\T} y_1, \ldots, u_h^{\T} y_p)^{\T} \in \R^p, \qquad h = 1, \ldots, n .
\end{equation}
At frequency $\lambda_h$ the operator system reduces to the lower-triangular
$p \times p$ matrix $L_h = D_h + B$, where
$D_h = \diag\{(\kappa_i^2+\lambda_h)^{\alpha_i/2}\}$ carries the spatial structure and $B$ is
the coefficient matrix of \S\ref{sec:system}. With $D_{\tau,0} = \diag(\tau_1^2,\ldots,\tau_p^2)$, the modes
are independent $p$-variate Gaussians and, under full observation,
\begin{equation}\label{eq:eigenlik}
\ell(\theta) = -\tfrac{1}{2}\sum_{h=1}^{n}
\{ \log |\Sigma_h| + z_h^{\T} \Sigma_h^{-1} z_h \} + \mathrm{const}, \qquad
\Sigma_h = (L_h^{\T} D_{\tau,0}^{-1} L_h)^{-1} + \sigma^2 I_p .
\end{equation}
A candidate graph is therefore scored by $n$ independent $p$-variate densities rather than by
one $np$-dimensional density, and changing an edge changes only $B$. Direct evaluation costs
$O(np^3)$, which is milliseconds for $p$ up to about $40$. The cubic factor comes only from
the nugget, which densifies each $\Sigma_h$. Writing
$\log|\Sigma_h| = \log|\sigma^2 Q_h + I| - \log|Q_h|$ with
$Q_h = L_h^{\T} D_{\tau,0}^{-1} L_h$ removes it, since the sparse Cholesky factor of $Q_h$
inherits the moral graph and shares one symbolic factorization across all $n$ modes. The
sparse form agreed with the dense one to $10^{-8}$ and was about $230$ times faster at
$p = 1{,}600$.

Each candidate edge $e$ carries an inclusion indicator $\gamma_e$ and a coefficient $b_e$
with a spike-and-slab prior: $b_e = 0$ when $\gamma_e = 0$, $b_e \sim N(0, \tau^2)$ when
$\gamma_e = 1$, and $\gamma_e \sim \mathrm{Ber}(\pi)$
\citep{mitchell1988bayesian,george1993variable}. The sampler draws the indicators
$\gamma_e$ and the coefficients $b_e$ jointly, and inference targets the marginal
edge-inclusion probabilities. The per-variable spatial parameters and the nugget,
$(\kappa, \alpha, \tau, \sigma)$, are not sampled: they are fixed at estimates from a
preliminary joint fit of \S\ref{sec:est} under the candidate set. The plug-in does not
require the true graph. By Proposition~\ref{prop:ggp}(ii) every ancestral term of the
marginal spectrum decays strictly faster than the variable's own term, so the spectral tail
that identifies the range, smoothness and amplitude is unaffected by the edge set, and the
plugged-in amplitudes agree with the residual variances recovered inside the selection
(\S\ref{sec:application}). Sampling the spatial parameters and the nugget jointly with the graph is
deferred to future work (\S\ref{sec:discussion}).

At large $p$ the candidate set is first thinned by a partial-correlation screen: partial
correlations are computed from a ridge-regularized inverse of the empirical correlation
matrix over the locations, and each variable retains its few strongest earlier-ordered
partners above a fixed threshold. The screen is tuned for recall, so it retains a superset
of the plausible edges and the sampler prunes the excess; the constants are given in the
Supplementary Material.

The number of active edges varies, so the sampler is a reversible-jump chain with birth and
death moves \citep{green1995reversible}. The posterior over graphs is multimodal, because
distinct edge sets can explain the same field almost equally well, and two devices counter
this. An edge-swap move exchanges one active edge for one inactive edge; the edge count is
preserved, the prior and proposal terms cancel, and the acceptance probability is the
tempered likelihood ratio alone. Parallel tempering
\citep{geyer1991markov,earl2005parallel} lets hotter replicas escape modes and pass
improved graphs to the colder replicas, and the agreement of two independent tempered runs
is itself a convergence check. Every likelihood evaluation is \eqref{eq:eigenlik}, and by
Theorem~\ref{thm:logdet} a proposed edge leaves the prior determinant unchanged. The full
sampler, its move mixture, the acceptance ratio of every move and all settings are given in
Section~\ref{supp:sampler} of the Supplementary Material (Algorithm~\ref{alg:pt}).

The sampler fixes the variable ordering and selects edges within it. The ordering itself
could in principle be learned: the eigenmode likelihood \eqref{eq:eigenlik} is well defined
for any directed acyclic graph under any ordering, since the determinant of $L_h$ is the
product of its diagonal entries for every acyclic coefficient pattern, and
Theorem~\ref{thm:logdet} holds unchanged, so an order move costs no more per evaluation
than an edge move. The obstacle is the search space, which grows by the factor $p!$, with a
correspondingly heavier burden on mixing. We therefore fix the ordering and return to order
learning in \S\ref{sec:discussion}.

\section{Simulation studies}\label{sec:sims}

\subsection{Design}\label{sec:simdesign}

Table~\ref{tab:design} defines the settings, which vary the specification, the strength of
the cross-variable dependence, and the dimensions. In every setting the operator is the normalized Laplacian of \S\ref{sec:leps} on
a $15$-nearest-neighbour graph, with $\kappa_j \sim \mathrm{Un}(0.25, 0.45)$,
$\alpha_j \sim \mathrm{Un}(1.6, 2.8)$, heteroscedastic amplitudes
$\tau_j \sim \mathrm{Un}(0.5, 2.0)$, and signal-to-noise ratio $10$. The cross-dependence
coefficients are $b_{j,j-1} = -\rho_j\,\kappa_j^{\alpha_j}$ along a chain, with $\rho_j$ drawn
from the interval in the table. Each replicate holds data out twice. The primary hold-out removes forty-five percent of the
variable--location pairs at random, so prediction can interpolate. The structured hold-out
removes variables over a contiguous region, and the model is refitted from the reduced data,
so no fit ever sees its test set. Five estimators run in every setting subject to
documented feasibility caps: (i) the two-stage estimator of \S\ref{sec:est}, denoted BMW-ML;
(ii) its first stage alone, BMW-VI; (iii) the same fit with the variable graph removed,
BMW-Ind; (iv) the graphical Mat\'ern of \citet{dey2022graphical} with per-variable sills
estimated, denoted GM, the term graphical Gaussian process being reserved for the general
class; and (v) an oracle, which
predicts with the generating parameters and calibrates every comparison. There are
fifty replicates per setting, five at $n = 10^4$. All runs are seeded and reproducible.
Four further settings, G1--G4, assess Bayesian graph learning and are described in the
Supplementary Material. The
presentation features three settings, L1, L3 and the non-convex domain H. The remaining
settings of Table~\ref{tab:design} are summarized briefly, and their complete results are in
Section~\ref{supp:sims} of the Supplementary Material.

\begin{table}[htbp]
\tbl{Simulation settings for the demonstrations of \S\ref{sec:sims}. Strong, moderate and
weak denote $\rho_j$ drawn from $(0.80, 0.98)$, $(0.45, 0.65)$ and $(0.15, 0.35)$; mixed
draws each $\rho_j$ across the full range. A1--A3 fix $np = 6\times10^5$ and trade
locations for variables}{%
\begin{tabular}{llllll}
Setting & $n$ & $p$ & Truth & Cross-dependence & Structured hold-out \\[3pt]
\multicolumn{6}{@{}l}{\emph{Key settings: recovery of the identified parameters and cross-variable borrowing}}\\[2pt]
S1--S3 & 500 & 8 & correct & strong/moderate/weak & half-domain \\
Smix & 500 & 8 & correct & mixed & half-domain \\
S4, S4mix & 800 & 20 & correct & strong, mixed & half-domain \\
L1 & 1{,}200 & 60 & correct & strong/moderate/mixed & band \\
L2 & 2{,}000 & 200 & correct & strong/moderate/mixed & band \\
L3 & 10{,}000 & 100 & correct & strong/moderate/mixed & band \\
H & 800 & 5 & correct, horseshoe domain & strong & tips \\[2pt]
\multicolumn{6}{@{}l}{\emph{Scaling and stress: many variables and extreme aspect ratios}}\\[2pt]
P1--P5 & 2{,}000 & 50--800 & correct & strong & band \\
A1--A3 & 2{,}000--300 & 300--2{,}000 & correct & strong & band \\[2pt]
\multicolumn{6}{@{}l}{\emph{Robustness: data generated outside the model class}}\\[2pt]
M1 & 500 & 5 & Euclidean operator & strong & band \\
M1b & 500 & 5 & Euclidean, distorted cross-covariance & strong & band \\
M2 & 300 & 20 & block coregionalization & --- & band \\
\end{tabular}}
\label{tab:design}
\end{table}

\subsection{Recovery and cross-variable borrowing}\label{sec:recovery}\label{sec:predsim}\label{sec:misspec}

Under in-fill asymptotics the pairs $(\kappa_j, \alpha_j)$ and $(\tau_j, \kappa_j)$ are only
weakly identified, as in all Mat\'ern-type models
\citep{zhang2004inconsistent,tang2021identifiability}, and \S\ref{sec:nugget} identifies the
functions that the data determine: the amplitude $\tau_j$, the marginal variance $V_j$, the
microergodic parameter $\theta_j = V_j \kappa_j^{2\alpha_j - d}$, and the cross-dependence
coefficients $b_{jk}$. We report recovery of these quantities. Individual $\kappa_j$
and $\alpha_j$ estimates ride the likelihood ridge, and their correlations are given in
the Supplementary Material.

At L1, with $n = 1{,}200$ and $p = 60$, the median recovery correlations over fifty
replicates are $0.98$ for $\tau$, $0.97$ for $V$ and $\theta$, and $0.86$ for the
cross-dependence coefficients (Fig.~\ref{fig:identified}). At the grand scale L3, with
$n = 10^4$, $p = 100$ and five replicates, the correlations are $0.97$ to $0.99$, with
held-out error within one percent of the oracle. Interval coverage on the primary hold-out
is $0.89$ to $0.91$ against the nominal $0.90$ in every setting, matching the oracle
(Table~\ref{stab:recov} of the Supplementary Material). The variable graph contributes most
where information must cross variables. On the structured hold-out at L1, where every
eighth variable vanishes over a wide band, BMW-ML reduces mean squared error by
$47\%$ relative to its graph-free counterpart. On the primary misaligned hold-out the
gains are under $8\%$, because nearby observations of the same variable suffice for
interpolation. The contrast between the two designs, not either number alone, is the
finding. The remaining settings of Table~\ref{tab:design} behave consistently, and we record only
their headlines; complete boxplots, prediction tables and coverage for every setting are in
Fig.~\ref{sfig:identified8}, Fig.~\ref{sfig:cokrig} and
Tables~\ref{stab:recov}--\ref{stab:pred} of the Supplementary Material. Across the eight
fifty-replicate settings with strong cross-dependence, the median correlations exceed
$0.95$ for $\tau$ and $0.86$ for $V$ and $\theta$; Figure~\ref{fig:identified} includes the
smallest setting S1, the largest dimension P5 and the stress shape A3, which is weakest
because each of its $2{,}000$ variables is informed by only $300$ spectral coefficients of
the form \eqref{eq:speccoef}. Structured-hold-out gains reach $82\%$ at S4, $50\%$ at P5
and $89\%$ under the block-coregionalization truth M2, while the fit stays within one
percent of the oracle in every well-specified setting. Weak cross-dependence gives the
expected null. At S3 the gain is $-1.0\%$ with Monte Carlo standard error $2.8$, since a
weak coefficient transfers little information across variables, and the fit remains within
one percent of the oracle, so estimating the graph incurs no measurable cost. Misspecification is
asymmetric. Under the Euclidean-operator truth M1, which favours the graphical Mat\'ern,
that model predicts best, as it should, and the proposed fit concedes twenty-eight percent
in mean squared error while covering at $0.93$ against the graphical Mat\'ern's $0.67$.
Under the block coregionalization M2 the graphical Mat\'ern attains root mean squared
error $1.60$ with coverage $0.33$, against $0.346$ for the proposed fit and $0.326$ for
the oracle.

\begin{figure}[htbp]
\centering
\includegraphics[width=\textwidth]{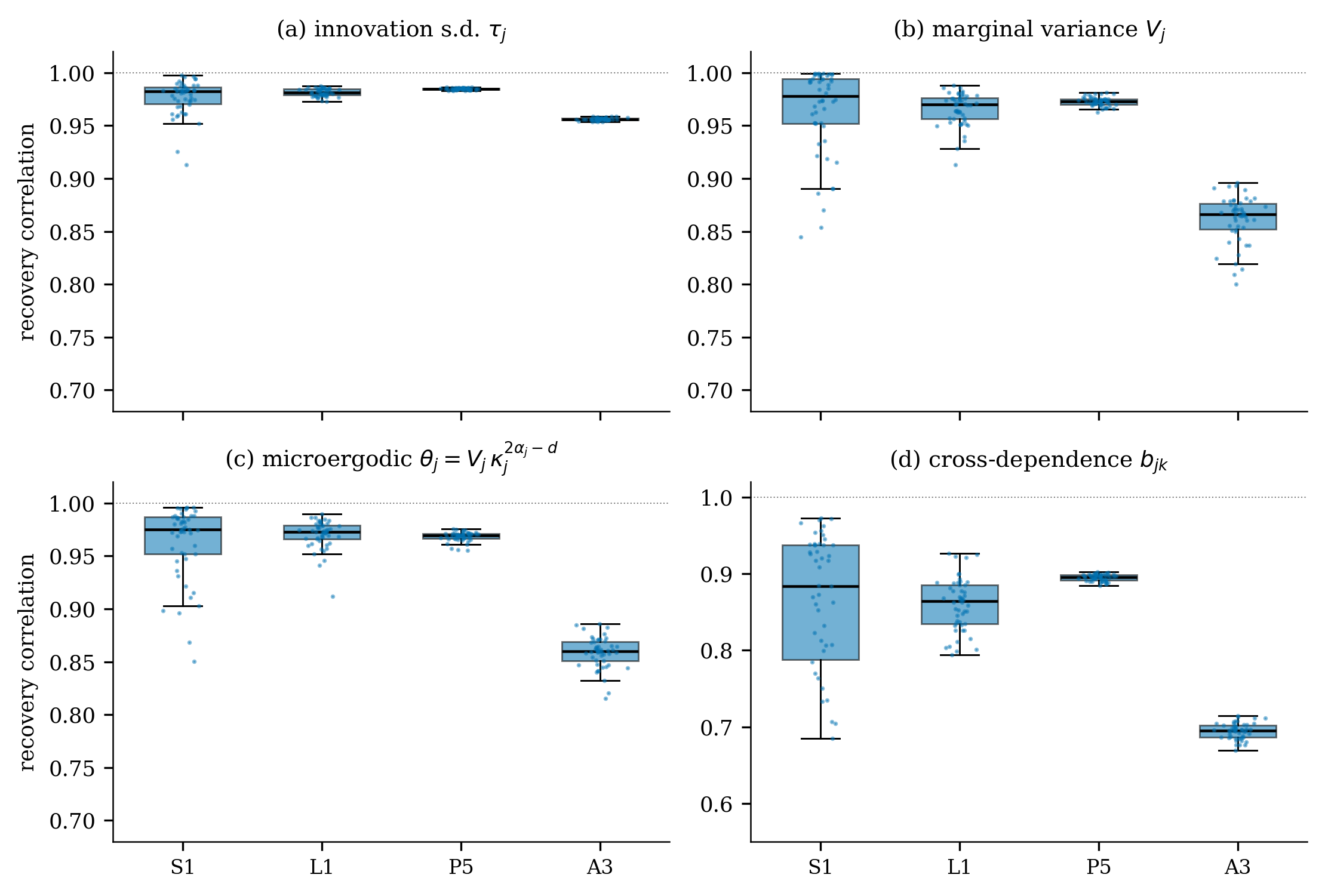}
\caption{Recovery of the identified parameters by BMW-ML at four
representative settings. Each box holds fifty per-replicate correlations between estimate
and truth. Panels show $\tau_j$, $V_j$, $\theta_j$ and $b_{jk}$. All four settings share
strong cross-dependence, so scale is the only quantity varying: the smallest setting S1
($n = 500$, $p = 8$), the featured L1, the largest dimension P5 ($p = 800$) and the stress
shape A3 ($p = 2{,}000$ at $n = 300$). The individual $\kappa_j$ and $\alpha_j$ are weakly
identified and are not shown. All eight fifty-replicate settings appear in
Fig.~\ref{sfig:identified8} of the Supplementary Material.}
\label{fig:identified}
\end{figure}

\subsection{Graph recovery}\label{sec:graphsim}

Four settings assess the Bayesian edge selection of \S\ref{sec:sampler} against known
truth. Any retention threshold on the inclusion probabilities is arbitrary, so we summarize
each setting by the separation of the posterior: the mean inclusion probability of true
edges against that of absent candidate pairs, with the area under the edge-ranking curve as
a threshold-free companion. The separation is wide throughout. Under a
block-coregionalization truth the mean probability is $0.75$ for within-block pairs against
$0.06$ for the $150$ truly independent pairs, with ranking area $0.999$; under a
misspecified Euclidean operator it is $0.64$ against $0.13$. Against increasing edge
density at $p = 40$, and through $p = 320$ with the partial-correlation screen, true edges
average $0.40$ to $0.66$ against at most $0.14$ for absent pairs, and the ranking area
stays between $0.89$ and $0.95$, the dips tracking the screen's recall rather than the
sampler. Per-setting separations, figures and tables are in Section~\ref{supp:graph} of
the Supplementary Material.

Parallel tempering is what secures Monte Carlo convergence at high edge density. At $2p$
active edges, independent single-edge-flip chains agreed poorly per edge, with a
between-chain correlation of the inclusion probabilities of $0.13$; the tempered sampler
with the swap move reduced the number of disagreeing edges from $112$ to $33$ in the
hardest case. The edge ranking was reproducible across runs throughout, and false-edge
probabilities remained pinned near the prior. We report per-edge posteriors where two
independent tempered runs agree, and the ranking otherwise.

\subsection{Computational performance}\label{sec:comptiming}

The central computational claim of the paper is per-step cost linear in both $n$ and $p$,
and the simulations measure it directly. All timings in this subsection were measured on a
single machine, an Apple M4 laptop with ten cores, four performance and six efficiency, and
$16$ gigabytes of unified memory. The variational step took
$27$ milliseconds at $np = 4 \times 10^{4}$ and $156$ milliseconds at $np = 4 \times 10^{5}$,
and the one-time spectral setup stayed at or below a tenth of a second at these sizes
(Table~\ref{tab:timing} of the
Supplementary Material). The same proportionality holds in $p$ at fixed $n$:
a complete variational fit at $n = 2{,}000$ took $9$ seconds at $p = 50$ and $135$ seconds
at $p = 800$, a sixteen-fold increase in dimension for a fifteen-fold increase in time. The
exact refinement scaled in proportion, at thirty to forty times the variational cost per
fit. At the largest sizes the operation count remains linear but memory traffic sets the
constant. At $n = 1{,}200{,}000$ with $p = 3$ the spectral setup took $21$ seconds and one
step eleven seconds, for a complete variational fit in $56$ minutes, and the application of
\S\ref{sec:application} fitted $np = 2.2 \times 10^{7}$ in $75$ minutes.

The comparison with the dense competitor quantifies a difference in kind. At $n = 400$, one
likelihood evaluation of the graphical Mat\'ern took $0.19$ seconds at $p = 3$ and
$1.0$ second at $p = 40$, at which point one variational step took $12$ milliseconds, a
factor of $84$. Beyond $p = 40$ the graphical Mat\'ern was not practical, while the
variational step was measured through $p = 320$ at $77$ milliseconds
(Fig.~\ref{sfig:highp} of the Supplementary Material). Per-replicate times for every setting
of Table~\ref{tab:design} are reported in Table~\ref{stab:timing} of the Supplementary
Material.

\subsection{Non-convex domains}\label{sec:horseshoe}

Setting H of Table~\ref{tab:design} places $n = 800$ locations on the horseshoe-shaped
domain of \citet{ramsay2002spline}, a curved non-convex region in which locations on
opposite sides of the central slit are close in Euclidean distance yet far apart along the
domain. The underlying field varies smoothly along the region, and the experiment is
replicated fifty times. Euclidean distance misrepresents this geometry: across the slit its
correlation with the true along-domain separation is $0.03$,
while the graph geodesic built from the same local neighbourhoods achieves $0.98$.
Figure~\ref{fig:horseshoe} shows the consequence. A Euclidean Mat\'ern leaks correlation
across the gap. The graph-operator correlation follows the arc and vanishes across the slit.
Holding out both tips, the graph operator attains mean squared error $0.082 \pm 0.004$ against
$0.182 \pm 0.010$ for the Euclidean model, a mean reduction of $52\%$. On real data the true
metric is rarely known, and the graph operator lets connectivity define proximity. The
caveat is that the graph must respect the domain, by pruning, meshing or given network
neighbours \citep{verhoef2010moving,bakka2019non,niu2019intrinsic}.

Taken together, the simulations support one conclusion. In settings where standard
multivariate models are applicable, the proposed process attains comparable accuracy and
approaches the oracle. In settings where the domain geometry renders a Euclidean covariance
invalid, the proposed process remains accurate.

\begin{figure}[htbp]
\centering
\includegraphics[width=\textwidth]{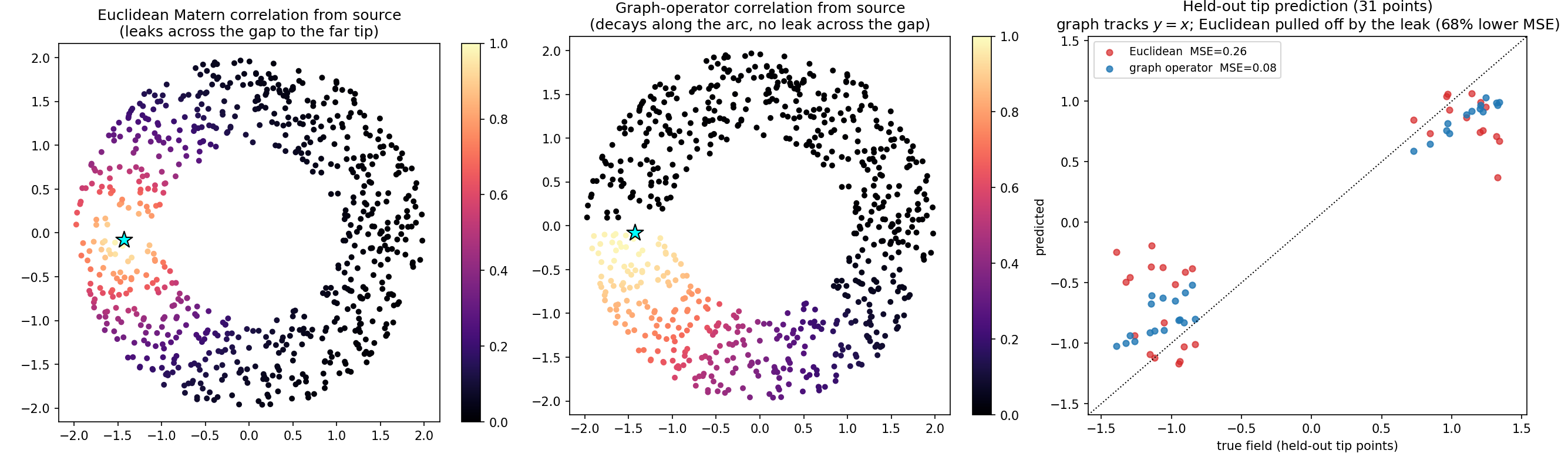}
\caption{Prediction on a non-convex domain; the star marks a source point at one tip of the
horseshoe. Left: the Euclidean-Mat\'ern
correlation from the source, which leaks across the gap. Centre: the graph-operator
correlation, which decays along the arc and vanishes across the gap. Right: held-out
tip prediction for one replicate; over fifty replicates the graph operator attains mean
squared error $0.082$ against $0.182$ for the Euclidean model.}
\label{fig:horseshoe}
\end{figure}

\section{Application: spatial transcriptomics of the mouse brain}\label{sec:application}

We apply the method to imaging-based spatial transcriptomics of the mouse brain, from the
whole-brain atlas of \citet{zhang2023molecular} distributed through the Allen Brain Cell Atlas.
A multiplexed error-robust in-situ hybridization assay measures a panel of $1{,}122$ genes at
single-cell resolution. We analyse one coronal section of a single adult female specimen, a
cerebellar and hindbrain slice with $n = 19{,}809$ cells, retaining all $p = 1{,}122$ genes, so
the latent field carries $np = 2.2 \times 10^{7}$ space--variable pairs. The
section is non-convex, with ventricles and white-matter tracts, and its dominant populations
are cerebellar granule neurons, molecular-layer interneurons, glia and oligodendrocytes.

The genes are organized by a variable graph. We fix an ordering by decreasing marginal
variance. The ordering only parameterizes the model, and the reported object is the
undirected graph of Theorem~\ref{thm:ci}, so no regulatory direction is claimed. Candidate
edges are selected by the partial-correlation screen of \S\ref{sec:graphlearn}: each gene
admits at most its three strongest earlier-ordered partners with absolute partial
correlation above $0.08$, giving $305$ candidates. Expression counts are centred and placed
on a single common scale, not standardized gene by gene, because the per-gene amplitudes
$\tau_j$ are part of the model. Each cell centroid is joined to its nearest neighbours to
form the operator \eqref{eq:leps}. The cell centroids sample the tissue densely, so
nearest-neighbour edges do not cross anatomical boundaries and the operator follows the
geometry of the section.

The estimator of \S\ref{sec:est}, run through its variational stage BMW-VI, fits the
per-gene ranges, smoothnesses and amplitudes together with the cross-dependence
coefficients and the nugget. The fit stopped by its convergence criterion after $300$
steps, in $75$ minutes on the four
performance cores of the Apple M4 laptop of \S\ref{sec:comptiming}, about $0.2$ milliseconds per space--variable pair;
graph construction, the Lanczos spectrum and the candidate screen together took
under half a second. The estimated amplitudes span an order of magnitude, $\hat\tau_j$ from
$0.64$ to $5.65$, and the largest belong to genes with sharp, spatially confined expression;
the estimated nugget is $\hat\sigma = 0.19$ on the common scale. Per-gene estimates for the full
panel are tabulated in the Supplementary Material.

We validate the fit by strictly out-of-sample co-kriging. The targets are four cell-type
marker genes chosen a priori, two oligodendrocyte markers and two neuronal markers, so that
each of the section's two dominant compartments contributes two held-out genes with strong
and distinct spatial structure. Each gene is held out over a
contiguous band covering about a third of the section; the hold-out is fixed before any
data-dependent step, the centring, the common scale and the candidate screen are computed
from training cells only, and the model is fitted without ever seeing the held-out band. The
fitted model then predicts each gene over its band from the co-expressed genes, against a
univariate baseline that sets the cross-dependence coefficients to zero. Borrowing across
genes reduced held-out mean squared error by $91\%$ for the myelin gene \textit{Mog} and
$88\%$ for \textit{Cldn11}, and by $50\%$ and $52\%$ for the neuronal markers \textit{Pvalb}
and \textit{Zic1}, a mean reduction of $70\%$. Figure~\ref{fig:merfish} shows the
reconstruction of \textit{Mog}: the multivariate prediction recovers the held-out pattern
that the univariate fit, having no observation of the gene in the band, cannot.

The final panel of Fig.~\ref{fig:merfish}(f) shows a comparison with a rank-$10$ spatial
factor model, a popular choice for modelling and prediction problems such as this one and
the statistical core of the factor methods used in spatial transcriptomics
\citep{wang2003generalized,velten2022mefisto,townes2023nonnegative}. The factor model
reduces held-out mean squared error by at most $38\%$, and by less than $8\%$ for the
myelin genes, so the multivariate dependence structure identified by the proposed model
offers dramatically improved predictive performance. Implementation details and results for
every rank are in Section~\ref{supp:data} of the Supplementary Material.

The estimates are scientifically coherent. The four held-out genes carry among the largest
amplitudes in the panel, with $\hat\tau$ between $2.8$ and $5.5$. The heaviest
cross-dependences link the myelin genes \textit{Mog} and \textit{Cldn11} to each other and
to the oligodendrocyte-lineage regulator \textit{Sox10} \citep{stolt2002sox10}. This
mirrors established biology: \textit{Mog} and \textit{Cldn11} encode structural components
of the myelin sheath and localize to white-matter tracts, while \textit{Pvalb} marks
Purkinje cells and molecular-layer interneurons and \textit{Zic1} the granule-cell lineage
of the cerebellar cortex \citep{celio1990calbindin,aruga1994novel}. The learned graph thus
separates, without supervision, the tract and cortical compartments that anatomy prescribes
for this section. The
estimated graph is a spatial analogue of a gene co-expression network
\citep{zhang2005wgcna,acharyya2022spacex}. Dependence is strongest among markers of a common
cell type, which co-localize in space. Recovery of this module without supervision is the
real-data counterpart of the structured-hold-out gains of \S\ref{sec:predsim}.

Finally, the variable graph can be inferred rather than fixed by the screen. Applying the
Bayesian edge selection of \S\ref{sec:sampler} to the section, with the marginal estimates
$(\hat\kappa_j, \hat\alpha_j, \hat\tau_j, \hat\sigma)$ of the fit above held fixed,
the posterior concentrates sharply, retaining or pruning nearly every one of the $305$
candidates decisively. The selection is also informative beyond the screen: the edge from
\textit{Sox10} to the myelin gene \textit{Sec14l5}, admitted by marginal partial
correlation, is pruned once \textit{Mog} and \textit{Cldn11} enter the model, correctly
reading \textit{Sox10} as conditionally independent of \textit{Sec14l5} given the myelin
genes it regulates \citep{stolt2002sox10}. The posterior summary, the retained graph, and
prior-sensitivity and internal-consistency checks are in Section~\ref{supp:data} of the
Supplementary Material.

\begin{figure}[htbp]
\centering
\includegraphics[width=\textwidth]{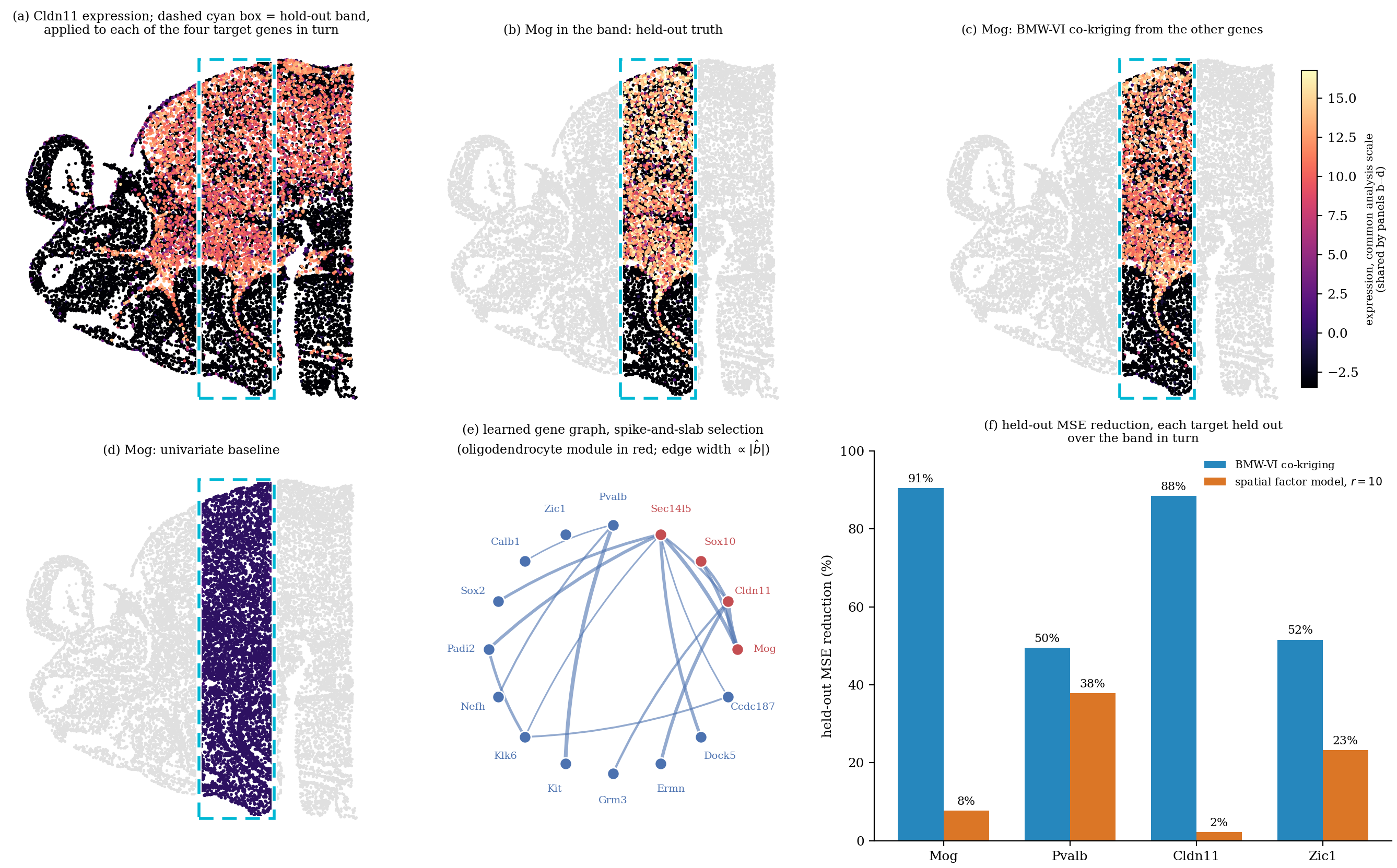}
\caption{One cerebellar MERFISH section ($n = 19{,}809$ cells, $p = 1{,}122$ genes). (a)
Expression of the myelin gene \textit{Cldn11}; the dashed cyan box marks the hold-out band, which
is applied to each of the four target genes in turn. (b)--(d) The example target
\textit{Mog}, coloured by its expression on the common analysis scale with one colour bar
shared by the three panels: (b) the held-out truth, (c) the multivariate co-kriging from the
other genes, and (d) the univariate baseline, which has no observation of \textit{Mog} in
the band. (e) The learned gene graph from the spike-and-slab selection, with the
oligodendrocyte module in red. (f) Held-out mean squared error reduction for each target
gene, each held out over the band in turn, for BMW-VI co-kriging and for the rank-$10$
spatial factor model.}
\label{fig:merfish}
\end{figure}

\section{Discussion}\label{sec:discussion}

The process proposed here extends the graphical Gaussian process family to arbitrary
spatial graphs, with the variable graph learned from the data. Several directions follow.
The triangular construction fixes a variable ordering. The eigenmode likelihood of
\S\ref{sec:graphlearn} scores any ordering at unchanged cost, so sampling or averaging over
orderings, as in \citet{jin2024bag}, is a natural next step. A related extension is to
update the spatial parameters and the nugget jointly with the graph rather than fixing them
at their preliminary estimates. On the computational side, learned preconditioners and
graphics hardware would extend the fine-mesh regime
\citep{gardner2018gpytorch,wang2019exact}. Posterior contraction theory for the learned
graph as $p$ grows would connect to the spectral convergence rates of
\citet{calder2022improved}. The application analysed a single tissue section. The atlas
provides $147$ further sections of the same brain and three additional specimens
\citep{zhang2023molecular}, so the reproducibility of the learned graph and of the
prediction gains can be examined directly.

\section*{Acknowledgement}
D. Dey was supported by Texas A\&M University start-up funds. Portions of this research
were conducted with the advanced computing resources provided by Texas A\&M Department of
Statistics Arseven Computing Cluster.

\section*{Supplementary material}
The Supplementary Material includes proofs of the theoretical results, computational
details, complete simulation and application results, and code reproducing all analyses.
A Python implementation, \texttt{bmwspatial}, is available at
\url{https://github.com/Ddey07/bmwspatial}.

\bibliographystyle{biometrika}
\bibliography{bwmp}

\appendix
\renewcommand{\thesection}{S\arabic{section}}
\renewcommand{\thetable}{S\arabic{table}}
\renewcommand{\thefigure}{S\arabic{figure}}
\renewcommand{\theequation}{S\arabic{equation}}
\renewcommand{\thetheorem}{S\arabic{theorem}}
\renewcommand{\theproposition}{S\arabic{proposition}}
\renewcommand{\thelemma}{S\arabic{lemma}}
\renewcommand{\thecorollary}{S\arabic{corollary}}
\renewcommand{\theremark}{S\arabic{remark}}
\renewcommand{\thealgocf}{S\arabic{algocf}}
\setcounter{section}{0}
\setcounter{table}{0}
\setcounter{figure}{0}
\setcounter{equation}{0}
\setcounter{theorem}{0}
\setcounter{proposition}{0}
\setcounter{lemma}{0}
\setcounter{corollary}{0}
\setcounter{remark}{0}
\setcounter{algocf}{0}

\renewcommand{\theHsection}{S.\arabic{section}}
\renewcommand{\theHtable}{S.\arabic{table}}
\renewcommand{\theHfigure}{S.\arabic{figure}}
\renewcommand{\theHequation}{S.\arabic{equation}}
\renewcommand{\theHtheorem}{S.\arabic{theorem}}
\renewcommand{\theHproposition}{S.\arabic{proposition}}
\renewcommand{\theHlemma}{S.\arabic{lemma}}
\renewcommand{\theHcorollary}{S.\arabic{corollary}}
\renewcommand{\theHremark}{S.\arabic{remark}}
\renewcommand{\theHalgocf}{S.\arabic{algocf}}

\section*{Supplementary Material}

This Supplementary Material contains the proofs of all results in the main paper
(Section~\ref{supp:proofs}), the Krylov approximation theory used by the matrix-free
computations (Section~\ref{supp:krylov}), the complexity table and measured timings
(Section~\ref{supp:scaling}), the settings and acceptance ratios of the
variable-graph sampler (Section~\ref{supp:sampler}),
complete simulation results for every setting of Table~1 of the main paper
(Section~\ref{supp:sims}), and additional results for the spatial-transcriptomics application
(Section~\ref{supp:data}).

\section{Proofs of the results in the main paper}\label{supp:proofs}

\begin{proof}[of Lemma~\ref{lem:exist}]
The operator $\bL$ is block lower-triangular with diagonal blocks $(\kappa_i^2 I + \Le)^{\alpha_i/2}$,
so $\det \bL = \prod_{i=1}^{p} \prod_{h=1}^{n} (\kappa_i^2 + \lambda_h)^{\alpha_i/2}$. Since
$\Le \succeq 0$ and $\kappa_i > 0$, every factor satisfies $\kappa_i^2 + \lambda_h \ge \kappa_i^2 > 0$,
so $\bL$ is invertible; the shift $\kappa_i^2$ is what regularizes the null eigenvalue
$\lambda_1 = 0$ of $\Le$. Hence $w = \bL^{-1} z$ is Gaussian with covariance
$\bL^{-1} D_\tau \bL^{-\T} \succ 0$ and precision $\Qw = \bL^{\T} D_\tau^{-1} \bL \succ 0$. As
$\det \bL$ is a product of diagonal blocks it does not involve the off-diagonal coefficients, so
invertibility and propriety hold for every $\{b_{ij}\}$ respecting the ordering. On $\R^d$ or a
manifold, with $\Le$ replaced by $-\Delta$, $w = \bL^{-1} z$ with $z$ white noise is a bounded
linear image of a white-noise measure; its finite-dimensional laws are therefore
Kolmogorov-consistent, and $w_i \in L^2$ whenever $\nu_i = \alpha_i - d/2 > 0$, the Whittle--Mat\'ern
existence condition.
\end{proof}

\begin{proof}[of Theorem~\ref{thm:ci}]
Write $\bL_{mi}$ for the $(m,i)$ block of $\bL$. Here
$\bL_{ii} = (\kappa_i^2 I + \Le)^{\alpha_i/2}$ is symmetric positive definite,
$\bL_{mi} = b_{mi} I$ if $(m, i) \in \GV$, and $\bL_{mi} = 0$ otherwise; triangularity gives
$\bL_{mi} = 0$ for $m < i$. Writing $D_\tau^{-1} = \diag(\tau_1^{-2},\ldots,\tau_p^{-2})\otimes I_n$,
the $(i,j)$ block of $\Qw = \bL^{\T} D_\tau^{-1}\bL$ is
$\sum_{m} \tau_m^{-2}\,\bL_{mi}^{\T}\bL_{mj}$. Take $j < i$ without loss of generality. Non-zero
contributions can arise from two sources only. The term $m = i$ contributes
$\tau_i^{-2} b_{ij}(\kappa_i^2 I + \Le)^{\alpha_i/2}$ when $(i,j) \in \GV$. A term $m > i$ contributes
$\tau_m^{-2} b_{mi} b_{mj} I$ when $m$ is a common child of $i$ and $j$. The term $m = j$ vanishes
because $\bL_{ji} = 0$. Hence
\begin{equation}\label{eq:Qblock}
(\Qw)_{ij} = \tau_i^{-2} b_{ij}(\kappa_i^2 I + \Le)^{\alpha_i/2}
+ \Bigl(\sum_{m:\, (m,i), (m,j) \in \GV} \tau_m^{-2} b_{mi} b_{mj}\Bigr) I .
\end{equation}
Because every $\tau_m^{-2} > 0$, the innovation variances rescale each contribution by a
positive constant and create no cancellation across the two terms. If $(i,j) \notin \GV$ and no
common child exists, both terms vanish identically for all
parameter values. Conversely, if $(i,j) \in \GV$, the first term is a non-zero multiple of a
positive-definite matrix whenever $b_{ij} \neq 0$; if a common child exists, the second term
is non-zero on an open set of cross-dependence coefficient values. The zero pattern of $\Qw$ is therefore that of
the moral graph of $\GV$. Since $w$ is Gaussian with precision $\Qw$, a zero block of the
precision is equivalent to conditional independence of $w_i$ and $w_j$ given the remaining
variables, jointly over locations.

Triangularity is essential. For a general operator symbol $H(\lambda)$, the spectral
precision is $H^{*} H$, and its $(i,j)$ entry $\sum_m \bar H_{mi} H_{mj}$ mixes all rows $m$.
A zero entry of $H$ therefore does not imply a zero entry of $H^{*}H$. The triangular
structure, together with the block-diagonal innovation covariance $D_\tau^{-1}$, restricts the
sum to the two structured terms of \eqref{eq:Qblock}.
\end{proof}

\begin{proof}[of Proposition~\ref{prop:ggp}]
(i) is immediate from Theorem~\ref{thm:ci}: that theorem shows the $(i,j)$ block of $\Qw$ vanishes
identically, for every parameter value, exactly when $(i,j) \notin \GV^{m}$, and for a Gaussian
vector a zero block of the precision is conditional independence of $w_i$ and $w_j$ given the
remaining variables, which is Definition~1 of \citet{dey2022graphical} for $\GV^{m}$. A graph with
no immoralities adds no moral edges, so there $\GV^{m}$ is the skeleton of $\GV$.

(ii) Expand each field in the eigenbasis of $\Le$. A diagonal block
$(\kappa_i^2 I + \Le)^{\alpha_i/2}$ acts on the $h$th eigencomponent as multiplication by
$(\kappa_i^2 + \lambda_h)^{\alpha_i/2}$, and the off-diagonal blocks $b_{ij} I$ act componentwise,
so \eqref{eq:gen} decouples across $h$ into the $p$-variate systems $L_h\, w^{(h)} = z^{(h)}$ with
$w^{(h)} = (u_h^{\T} w_1, \ldots, u_h^{\T} w_p)^{\T}$, $z^{(h)} \sim N(0, D_{\tau,0})$ and
$L_h = D_h + B$, $D_h = \diag\{(\kappa_i^2 + \lambda_h)^{\alpha_i/2}\}$. Hence
$w^{(h)} \sim N(0, \Sigma_h)$ with $\Sigma_h = L_h^{-1} D_{\tau,0} L_h^{-\T}$, and the marginal
spectral density of $w_i$ is $S_i(\lambda_h) = [\Sigma_h]_{ii}$. If $i$ has no parents then the
$i$th equation reads $(\kappa_i^2 I + \Le)^{\alpha_i/2} w_i = z_i$ with $z_i \sim N(0, \tau_i^2 I)$,
so $w_i \sim N\{0, \tau_i^2(\kappa_i^2 I + \Le)^{-\alpha_i}\}$. In general $L(\lambda)^{-1}$ is lower
triangular with
\[
g_{ij}(\lambda) = [L(\lambda)^{-1}]_{ij}
= \sum_{P:\, j \rightsquigarrow i} (-1)^{|P|}
\frac{\prod_{(v \leftarrow u) \in P} b_{vu}}{\prod_{v \in P} (\kappa_v^2 + \lambda)^{\alpha_v/2}},
\]
the sum over directed paths $P$ from $j$ to $i$, which is non-zero only for
$j \in \{i\} \cup \mathrm{an}(i)$. Since $D_{\tau,0}$ is diagonal,
$S_i(\lambda) = \sum_j \tau_j^2\, g_{ij}(\lambda)^2$. Every ancestral term contains the factor
$(\kappa_i^2 + \lambda)^{-\alpha_i}$ carried by $g_{ii}$ multiplied by a further decaying density,
so it is subleading; hence $S_i(\lambda) = \tau_i^2(\kappa_i^2 + \lambda)^{-\alpha_i}\{1 + o(1)\}$ as
$\lambda \to \infty$, and the high-frequency behaviour, and thus the smoothness
$\nu_i = \alpha_i - d/2$, is that of variable $i$ alone.
\end{proof}

\begin{proof}[of Theorem~\ref{thm:logdet}]
The operator $\bL$ is block lower-triangular, so $\det \bL = \prod_{j=1}^{p} \det \bL_{jj}$;
the off-diagonal blocks, which carry all cross-dependence coefficients, do not enter. By the spectral calculus,
$\det \bL_{jj} = \prod_{h=1}^{n} (\kappa_j^2 + \lambda_h)^{\alpha_j/2}$. Taking logarithms
and summing over $j$ gives the first display. Since $\Qw = \bL^{\T} D_\tau^{-1}\bL$ and $D_\tau$
is block-diagonal with $\det D_\tau = \prod_j \tau_j^{2n}$,
$\log|\Qw| = 2\log|\!\det\bL| + \log|D_\tau^{-1}| = 2\log|\!\det\bL| - 2n\sum_{j=1}^p \log\tau_j$.
The expression depends on $\theta$ only
through $(\kappa_j, \alpha_j, \tau_j)$, evaluated against the spectrum of $\Le$, which does not
depend on the data or on $\theta$; the innovation variances contribute only the separable term
$-2n\sum_j\log\tau_j$, and the cross-dependence coefficients do not appear.
\end{proof}

\begin{proof}[of Theorem~\ref{thm:nugget} and Corollary~\ref{cor:ceiling}]
Let $\Le = V \Lambda V^{\T}$ and $z = V^{\T} y$. Both covariance terms are functions of
$\Le$, hence diagonalized by $V$, so the $z_h$ are independent with
$z_h \sim N(0, d_h)$, where $d_h = s(\lambda_h) + \sigma^2$ and
$s(\lambda) = \tau^2(\kappa^2 + \lambda)^{-\alpha}$. For independent Gaussians with variances
$d_h(\theta)$, the Fisher information for a variance parameter $\theta$ is
$2^{-1}\sum_h d_h^{-2} (\partial d_h/\partial\theta)^2$. With $\theta = \sigma^2$ we have
$\partial d_h/\partial\sigma^2 = 1$ for every $h$, so
$\mathcal{I}(\sigma^2) = (2\sigma^4)^{-1} \sum_h \pi_h^2$ with $\pi_h = \sigma^2/d_h$. The
Cram\'er--Rao bound gives $\mathrm{var}(\hat\sigma^2) \ge 2\sigma^4/\neff$ for any unbiased
estimator.

For the corollary, monotonicity of $s$ gives $s(\lambda_h) \ge s_{\min} > 0$ for every $h$,
so $\pi_h \le \sigma^2/(s_{\min} + \sigma^2)$ uniformly and
$\neff/n \le \{\sigma^2/(s_{\min} + \sigma^2)\}^2$, a bound free of $n$. Under in-fill
sampling the normalized graph spectrum remains within $[0, \lambda_{\max}]$, so the bound
persists as $n \to \infty$. Since each $\pi_h > 0$, $\neff \to \infty$ and consistency is
unaffected, but the efficiency $\neff/n$ never approaches one. If
$\sigma^2 \ll s_{\min}$, the ceiling is at most $(\sigma^2/s_{\min})^2$, and the relative
standard deviation of any unbiased estimator exceeds
$\{2 n^{-1} (s_{\min}/\sigma^2)^{2}\}^{1/2}$, which remains large unless $n$ exceeds order
$(s_{\min}/\sigma^2)^2$. The continuum Whittle--Mat\'ern operator has unbounded spectrum, so
for any $\sigma^2 > 0$ the proportion of modes with $s(\lambda_h) < \sigma^2$ tends to one
and $\neff$ has exact order $n$. This recovers the strong identifiability regime of
\citet{tang2021identifiability}, in which moreover the smoothness is held fixed; here
$\alpha$ is free, which is the non-microergodic direction.
\end{proof}

\section{Krylov approximation, Gauss quadrature and trace estimation}\label{supp:krylov}

This section states and proves the numerical results used in \S\ref{sec:comp} of the main
paper. Throughout,
$A \in \R^{n\times n}$ is symmetric with $A = U\Lambda U^{\T}$, and
$f(A) = U f(\Lambda) U^{\T}$; in the paper $A = \Le$. Given $v \neq 0$, the Lanczos process
sets $q_1 = v/\|v\|$ and iterates, for $j = 1, \ldots, m$, with $\beta_0 q_0 = 0$,
\begin{equation}\label{eq:lanczos}
h = A q_j, \quad \alpha_j = q_j^{\T} h, \quad
\tilde h = h - \alpha_j q_j - \beta_{j-1} q_{j-1}, \quad
\beta_j = \|\tilde h\|, \quad q_{j+1} = \tilde h/\beta_j .
\end{equation}
In exact arithmetic $q_1, \ldots, q_m$ form an orthonormal basis of the Krylov subspace
$\mathcal{K}_m(A, v) = \mathrm{span}(v, Av, \ldots, A^{m-1}v)$. With
$Q_m = (q_1, \ldots, q_m)$ and the tridiagonal matrix $T_m$ of the coefficients,
\begin{equation}\label{eq:lanczosmat}
A Q_m = Q_m T_m + \beta_m q_{m+1} e_m^{\T}, \qquad Q_m^{\T} A Q_m = T_m ,
\end{equation}
at the cost of $m$ matrix--vector products.

\begin{proposition}\label{prop:exact}
For $0 \le k \le m-1$, $A^k v = \|v\| Q_m T_m^k e_1$. Consequently
$q(A)v = \|v\| Q_m q(T_m) e_1$ for every polynomial $q$ of degree at most $m - 1$.
\end{proposition}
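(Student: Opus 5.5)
The plan is to argue by induction on $k$, using the Lanczos relation \eqref{eq:lanczosmat} together with the banded structure of $T_m$. Throughout, assume the process has not broken down, so that $\beta_j > 0$ for $j < m$ and $Q_m$ has orthonormal columns spanning $\mathcal{K}_m(A,v)$. The claim I will prove is slightly stronger:
\[
A^k v = \|v\|\, Q_m T_m^k e_1 \quad \text{for } 0 \le k \le m-1,
\]
together with the auxiliary fact that $T_m^k e_1$ is supported on its first $k+1$ coordinates, that is, $e_m^{\T} T_m^k e_1 = 0$ whenever $k \le m-2$.

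The auxiliary fact follows from the tridiagonal structure. Each multiplication by $T_m$ enlarges the support of a vector by at most one index. Starting from $e_1$, after $k$ multiplications the support is contained in $\{1,\ldots,k+1\}$. Hence $e_m^{\T} T_m^k e_1 = 0$ as long as $k+1 < m$.

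For the main identity, the base case $k=0$ is $v = \|v\| q_1 = \|v\| Q_m e_1$, which holds by definition of $q_1$. For the inductive step, suppose the identity holds for some $k \le m-2$. Apply $A$ to both sides and use \eqref{eq:lanczosmat}:
\[
A^{k+1} v = \|v\|\, A Q_m T_m^k e_1 = \|v\|\bigl( Q_m T_m^{k+1} e_1 + \beta_m q_{m+1}\, e_m^{\T} T_m^k e_1 \bigr).
\]
Since $k \le m-2$, the auxiliary fact kills the remainder term. This gives the identity at $k+1 \le m-1$, which completes the induction. The polynomial statement then follows by linearity: write $q(t) = \sum_{k=0}^{m-1} c_k t^k$ and sum the identities, using $q(T_m) = \sum_k c_k T_m^k$.

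The only delicate point is the case of early breakdown, where $\beta_j = 0$ for some $j < m$. In that case the Krylov subspace is $A$-invariant of dimension $j$. I would handle it by working with $Q_j$ and $T_j$ instead, for which the remainder term in \eqref{eq:lanczosmat} vanishes identically; the identity then holds for every $k$. Alternatively, one can simply state the no-breakdown assumption, which is implicit in the construction \eqref{eq:lanczos}, since that recursion divides by $\beta_j$. Beyond this, no step is a real obstacle; the support-propagation argument for $T_m^k e_1$ is the crux.
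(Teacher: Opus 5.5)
Your proof is correct and matches the paper's argument essentially step for step: induction on $k$ from $v = \|v\| Q_m e_1$, using $AQ_m = Q_m T_m + \beta_m q_{m+1} e_m^{\T}$ and the tridiagonal support property $e_m^{\T} T_m^k e_1 = 0$ for $k \le m-2$ to remove the residual term, then linearity for polynomials. Your remark on early breakdown is a sound addition that the paper leaves implicit.
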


\begin{proof}
We use induction on $k$. For $k = 0$, $v = \|v\| Q_m e_1$. Because $T_m$ is tridiagonal,
$T_m^k e_1$ is supported on its first $k+1$ coordinates, so $e_m^{\T} T_m^k e_1 = 0$ for
$k \le m - 2$. Then \eqref{eq:lanczosmat} gives
$A^{k+1} v = \|v\| (Q_m T_m + \beta_m q_{m+1} e_m^{\T}) T_m^k e_1
= \|v\| Q_m T_m^{k+1} e_1$, the residual term vanishing by the support property. Linearity
extends the identity to polynomials.
\end{proof}

The Lanczos approximation of a matrix function is
$f(A) v \approx f_m = \|v\| Q_m f(T_m) e_1$, with $f(T_m)$ computed from the small
eigendecomposition of $T_m$. Equivalently, $f_m = q^{*}(A) v$ where $q^{*}$ interpolates $f$
at the eigenvalues of $T_m$, the Ritz values.

\begin{theorem}\label{thm:kryerr}
Suppose the spectrum of $A$ lies in $[a, b]$, and let
$E_{m-1}(f) = \min_{\deg q \le m-1} \max_{a \le \lambda \le b} |f(\lambda) - q(\lambda)|$.
Then $\| f(A) v - f_m \| \le 2 \|v\| E_{m-1}(f)$.
\end{theorem}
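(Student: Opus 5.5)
The plan is to reduce the statement to polynomial approximation via Proposition~\ref{prop:exact}. Fix any polynomial $q$ of degree at most $m-1$. By Proposition~\ref{prop:exact}, $q(A)v = \|v\| Q_m q(T_m) e_1$ exactly. We therefore split the error as
\[
f(A)v - f_m = \{f(A) - q(A)\}v - \|v\| Q_m \{f(T_m) - q(T_m)\} e_1 ,
\]
and bound the two terms separately. Minimizing over $q$ at the end yields $E_{m-1}(f)$.

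For the first term, $A = U\Lambda U^{\T}$ is orthogonally diagonalizable. Hence $\|f(A) - q(A)\|_2 = \max_h |f(\lambda_h) - q(\lambda_h)|$. This is at most $\max_{a\le\lambda\le b}|f(\lambda)-q(\lambda)|$, because every eigenvalue lies in $[a,b]$. The first term is therefore bounded by $\|v\|$ times that maximum.

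For the second term, $Q_m$ has orthonormal columns, so $\|Q_m x\| = \|x\|$; also $\|e_1\| = 1$. It remains to control $\|f(T_m) - q(T_m)\|_2$. $T_m$ is symmetric, so this norm equals the maximum of $|f-q|$ over the eigenvalues of $T_m$ (the Ritz values). The key point is that the Ritz values lie in $[a,b]$. This follows from \eqref{eq:lanczosmat}: $T_m = Q_m^{\T} A Q_m$, so for a unit vector $x$ the Rayleigh quotient $x^{\T} T_m x = (Q_m x)^{\T} A (Q_m x)$ is a Rayleigh quotient of $A$ at the unit vector $Q_m x$. Every such quotient lies in $[\lambda_{\min}(A), \lambda_{\max}(A)] \subseteq [a,b]$. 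The second term is then also bounded by $\|v\|\max_{[a,b]}|f-q|$.

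Adding the two bounds gives $\|f(A)v - f_m\| \le 2\|v\|\max_{[a,b]}|f-q|$ for every admissible $q$. Taking the infimum over $q$ gives $2\|v\|E_{m-1}(f)$. The minimum in the definition of $E_{m-1}(f)$ is attained by compactness, assuming $f$ is continuous on $[a,b]$.

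There is no real obstacle. The only delicate point is the Ritz-value containment, which the Rayleigh-quotient argument settles. Note that the argument assumes exact arithmetic, as the statement implicitly does: the identity in \eqref{eq:lanczosmat} and the orthonormality of $Q_m$ are used as exact. If the Lanczos process terminates early with $\beta_j = 0$, the relation still holds with a zero residual and the argument is unchanged.
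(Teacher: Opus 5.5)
Your proof is correct and follows the paper's argument essentially line for line. It uses the same splitting via Proposition~\ref{prop:exact}, bounds the first term by the spectral theorem, bounds the second by the containment of the Ritz values in $[a,b]$ together with the orthonormality of $Q_m$, and then minimizes over $q$. Your Rayleigh-quotient justification of that containment is the elementary case of the Courant--Fischer characterization the paper cites.
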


\begin{proof}
For any polynomial $q$ of degree at most $m-1$, Proposition~\ref{prop:exact} gives
$f(A)v - f_m = \{f(A) - q(A)\}v - \|v\| Q_m \{f(T_m) - q(T_m)\} e_1$. The first term is
bounded in norm by $\max_{[a,b]}|f - q| \|v\|$ by the spectral theorem. The eigenvalues of
$T_m$ lie in $[a, b]$ by the Courant--Fischer characterization, and $Q_m$ has orthonormal
columns, so the second term obeys the same bound. Adding the bounds and minimizing over $q$
completes the proof.
\end{proof}

For $f$ analytic in a neighbourhood of $[a,b]$, such as
$f(\lambda) = (\kappa^2 + \lambda)^{\alpha/2}$ with $\kappa > 0$, Bernstein's theorem yields
geometric decay of $E_{m-1}(f)$ in $m$. This is why small Krylov dimensions suffice in
practice.

\begin{lemma}\label{lem:shift}
For any real $c$, $\mathcal{K}_m(A + cI, v) = \mathcal{K}_m(A, v)$. The Lanczos process on
$A + cI$ produces the same basis $Q_m$, and its tridiagonal matrix is $T_m + cI$.
\end{lemma}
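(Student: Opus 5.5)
The plan is to show the Krylov subspaces coincide, then run the Lanczos recurrence \eqref{eq:lanczos} on both $A$ and $A+cI$ from the same starting vector and prove by induction that the vectors agree and only the diagonal coefficients are shifted.

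For the subspace equality, observe that $(A+cI)^k v = \sum_{r=0}^{k} \binom{k}{r} c^{k-r} A^r v$ by the binomial theorem, since $A$ and $cI$ commute. Hence $(A+cI)^k v \in \mathcal{K}_m(A,v)$ for every $k \le m-1$, so $\mathcal{K}_m(A+cI,v) \subseteq \mathcal{K}_m(A,v)$. Applying the same argument with $A+cI$ in place of $A$ and $-c$ in place of $c$ gives the reverse inclusion.

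For the Lanczos quantities, write $\alpha_j, \beta_j, q_j$ for the run on $A$ and $\alpha_j', \beta_j', q_j'$ for the run on $A+cI$. Both runs start from $q_1 = q_1' = v/\|v\|$. We claim by induction on $j$ that $q_j' = q_j$, $\alpha_j' = \alpha_j + c$ and $\beta_j' = \beta_j$. Assume $q_j' = q_j$ and $q_{j-1}' = q_{j-1}$, with $\beta_{j-1}' = \beta_{j-1}$. Then
\[
h' = (A+cI)q_j = h + c\,q_j, \qquad \alpha_j' = q_j^{\T} h + c\,q_j^{\T} q_j = \alpha_j + c,
\]
using $\|q_j\| = 1$. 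Substituting into the orthogonalization step,
\[
\tilde h' = h + c\,q_j - (\alpha_j + c)\,q_j - \beta_{j-1} q_{j-1} = \tilde h .
\]
Therefore $\beta_j' = \|\tilde h'\| = \beta_j$ and $q_{j+1}' = \tilde h'/\beta_j' = q_{j+1}$, which closes the induction.

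It follows that the two runs produce the same basis $Q_m$ and the same off-diagonal coefficients, while every diagonal coefficient is shifted by $c$. So the tridiagonal matrix of the run on $A+cI$ is $T_m + cI$. As a consistency check, \eqref{eq:lanczosmat} gives $Q_m^{\T}(A+cI)Q_m = T_m + cI$ directly, because $Q_m^{\T} Q_m = I$.

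There is no real obstacle here. The only point needing care is breakdown ($\beta_j = 0$): since $\beta_j' = \beta_j$ at every step, the two runs terminate at exactly the same step, so the statement holds up to the common breakdown index. This applies in exact arithmetic, as assumed throughout the section.
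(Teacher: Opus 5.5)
Your proof is correct and takes the same route as the paper: the binomial expansion gives equality of the Krylov subspaces, and the residual $\tilde h$ is invariant under the shift while each diagonal coefficient moves by $c$. Your explicit induction on the recurrence, the reverse inclusion obtained by replacing $c$ with $-c$, and the remark that both runs break down at the same step make rigorous what the paper's two-line proof states only briefly.
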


\begin{proof}
The binomial expansion gives
$(A + cI)^j v \in \mathrm{span}(v, \ldots, A^j v)$, so the subspaces coincide and the
orthonormalization \eqref{eq:lanczos} produces the same vectors. The diagonal coefficients
shift by $c$, and each $\beta_j$ is unchanged because the residual $\tilde h$ is invariant
under the shift.
\end{proof}

Taking $c = \kappa_j^2$ shows that a single Lanczos run on $\Le$ serves every variable's
fractional operator, inverse and derivative; only the scalar evaluations at the Ritz values
change.

For traces, fix $v$ and define the spectral measure of the pair $(A, v)$ by
$\rho(\lambda) = \sum_h \mu_h^2 1\{\lambda_h \le \lambda\}$ with $\mu_h = u_h^{\T} v$, so
that $v^{\T} f(A) v = \int f \, d\rho$.

\begin{theorem}\label{thm:gauss}
Let $T_m$ be the Lanczos tridiagonal matrix for $(A, v)$, with eigenpairs $(\theta_r, s_r)$.
The rule with nodes $\theta_r$ and weights $\omega_r = \|v\|^2 (e_1^{\T} s_r)^2$
approximates $\int f d\rho$ by $\sum_{r=1}^m \omega_r f(\theta_r)$, and it is the $m$-point
Gauss rule for $\rho$: it is exact for all polynomials of degree at most $2m - 1$.
\end{theorem}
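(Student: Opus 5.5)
The plan is to identify the quadrature sum with a quadratic form in $T_m$ and then show that this quadratic form reproduces the moments of $\rho$ up to order $2m-1$. First, by the spectral decomposition $T_m = \sum_r \theta_r s_r s_r^{\T}$,
\[
\sum_{r=1}^m \omega_r f(\theta_r) = \|v\|^2 \sum_{r=1}^m (e_1^{\T} s_r)^2 f(\theta_r) = \|v\|^2 e_1^{\T} f(T_m) e_1 .
\]
On the other side, $\int f\, d\rho = v^{\T} f(A) v$. Exactness for degree at most $2m-1$ therefore reduces, by linearity, to the monomial identity
\[
v^{\T} A^k v = \|v\|^2 e_1^{\T} T_m^k e_1 \quad (0 \le k \le 2m-1).
\]

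To prove the identity, write $k = k_1 + k_2$ with $k_1, k_2 \le m-1$ when $k \le 2m-2$. Then $v^{\T}A^k v = (A^{k_1}v)^{\T}(A^{k_2}v)$. Proposition~\ref{prop:exact} gives $A^{k_i}v = \|v\| Q_m T_m^{k_i} e_1$, and $Q_m^{\T}Q_m = I$, so the product equals $\|v\|^2 e_1^{\T} T_m^{k} e_1$. For the top degree $k = 2m-1$, write $v^{\T}A^{2m-1}v = (A^{m-1}v)^{\T} A (A^{m-1}v)$. Substituting the proposition yields $\|v\|^2 e_1^{\T} T_m^{m-1} (Q_m^{\T} A Q_m) T_m^{m-1} e_1$, and $Q_m^{\T} A Q_m = T_m$ by \eqref{eq:lanczosmat}. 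The extra factor of $A$ is absorbed by the projected relation rather than by Proposition~\ref{prop:exact}, which only reaches degree $m-1$. This step is the one place where care is needed.

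It remains to confirm that this is the Gauss rule and not merely some rule of degree $2m-1$. The rule has $m$ nodes and positive weights, provided $e_1^{\T}s_r \neq 0$. That holds because $T_m$ is unreduced tridiagonal: all $\beta_j > 0$ as long as the Lanczos process has not broken down. Its eigenvalues are therefore distinct and no eigenvector vanishes in its first coordinate. An $m$-point rule exact to degree $2m-1$ for a positive measure is unique: its nodes must be the zeros of the degree-$m$ orthogonal polynomial of $\rho$. So the rule is the Gauss rule.

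One can also see this directly. The characteristic polynomial of $T_m$ is, up to scaling, the $m$-th Lanczos polynomial $p_m$ with $q_{m+1} \propto p_m(A)v$, which is orthogonal in $L^2(\rho)$ to all lower-degree polynomials. I would state the standing assumption that $m$ does not exceed the number of distinct eigenvalues carrying mass in $\rho$, so that no breakdown occurs. The main obstacle is only the bookkeeping at degree $2m-1$ and the breakdown caveat; the rest is linear algebra.
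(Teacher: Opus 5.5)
Your proof is correct, but it takes a different route from the paper's.

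The paper argues structurally. It notes that $q_{j+1} = \psi_j(A)v/\|\psi_j(A)v\|$ for the monic orthogonal polynomials $\psi_j$ of $\rho$. Hence the Lanczos recurrence is the three-term recurrence of the $\psi_j$, and $T_m$ is the Jacobi matrix of $\rho$. The Ritz values are then the zeros of $\psi_m$, and the weights are the Christoffel numbers by the Golub--Welsch theorem. Exactness to degree $2m-1$ comes for free as the defining property of Gauss quadrature.

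You work in the opposite order. You first prove exactness through the moment identity $v^{\T}A^k v = \|v\|^2 e_1^{\T}T_m^k e_1$. For $k \le 2m-2$ you split $A^k$ into two factors of degree at most $m-1$, each handled by Proposition~\ref{prop:exact}. At $k=2m-1$ you absorb the extra factor through $Q_m^{\T}AQ_m = T_m$. Only then do you identify the rule as the Gauss rule, by uniqueness of $m$-point rules of degree $2m-1$.

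Your route is more elementary and self-contained:
\begin{itemize}
\item It reuses the preceding proposition.
\item It verifies the specific weights $\|v\|^2(e_1^{\T}s_r)^2$ directly, rather than delegating them to Golub--Welsch.
\item It states the hypothesis that $\rho$ has at least $m$ support points, i.e.\ no Lanczos breakdown. The paper relies on this, both for $\beta_j>0$ and for $\psi_m$ to be uniquely defined, but leaves it implicit.
\end{itemize}

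The paper's route, which you also sketch in your last paragraph, gives the orthogonal-polynomial picture instead: Lanczos vectors as $\psi_j(A)v$ and Ritz values as zeros of $\psi_m$. That picture is what the cited stochastic Lanczos quadrature error analysis builds on.

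A minor point: your uniqueness step does not actually need positivity of the weights. Degree-$(2m-1)$ exactness already makes $\prod_r(\lambda-\theta_r)$ orthogonal to all polynomials of degree at most $m-1$, and the support-size assumption alone forces it to equal $\psi_m$.
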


\begin{proof}
The monic orthogonal polynomials of $\rho$ satisfy a three-term recurrence whose coefficients
are exactly the Lanczos coefficients, since $q_{j+1} = \psi_j(A) v / \|\psi_j(A) v\|$;
substituting this expression into \eqref{eq:lanczos} reproduces the recurrence of the
$\psi_j$. Hence $T_m$ is the Jacobi matrix of $\rho$. Its eigenvalues are the roots of
$\psi_m$, which are the Gauss nodes, and the weights $\omega_r$ are the Christoffel numbers by
the Golub--Welsch construction \citep{golub2010matrices}. Exactness to degree $2m-1$ is the
defining property of Gauss quadrature.
\end{proof}

\begin{proposition}\label{prop:hutch}
Let $z$ have independent entries with $E(z_i) = 0$ and $E(z_i^2) = 1$. Then
$E(z^{\T} M z) = \tr (M)$ for any square $M$. For Rademacher $z$ and symmetric $M$,
$\mathrm{var}(z^{\T} M z) = 2\sum_{i \ne j} M_{ij}^2$, which is minimal among
independent-entry probe distributions.
\end{proposition}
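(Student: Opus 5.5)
The plan is a direct moment computation on the quadratic form $z^{\T} M z = \sum_{i,k} M_{ik} z_i z_k$. Throughout I write $\mu_4^{(i)} = E(z_i^4)$ and assume it is finite, since otherwise the variance is infinite and the minimality claim holds trivially.

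\emph{Unbiasedness.} By linearity, $E(z^{\T} M z) = \sum_{i,k} M_{ik} E(z_i z_k)$. Independence together with $E(z_i)=0$ and $E(z_i^2)=1$ gives $E(z_i z_k) = \delta_{ik}$, so the expectation equals $\sum_i M_{ii} = \tr(M)$. This step does not use symmetry of $M$.

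\emph{Variance for general independent-entry probes.} For symmetric $M$ I would decompose
\[
z^{\T} M z = \sum_i M_{ii} z_i^2 + 2\sum_{i<k} M_{ik} z_i z_k = A + C,
\]
and compute $\mathrm{var}(A)$, $\mathrm{var}(C)$ and $\mathrm{cov}(A,C)$ separately.
\begin{itemize}
\item[(a)] Independence gives $\mathrm{cov}(z_i^2, z_k^2) = (\mu_4^{(i)} - 1)\,\delta_{ik}$. Hence $\mathrm{var}(A) = \sum_i M_{ii}^2(\mu_4^{(i)} - 1)$.
\item[(b)] Take $j \neq k$. In $E(z_i^2 z_j z_k)$ at least one index among $j,k$ appears with odd multiplicity. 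If $i \notin \{j,k\}$, the factor $E(z_j)E(z_k)$ appears. If $i = j$, the expectation is $E(z_j^3)\,E(z_k) = 0$. So the expectation always vanishes, and $\mathrm{cov}(A,C) = 0$ without any assumption on third moments.
\item[(c)] For pairs $i<k$ and $j<l$, $E(z_i z_k z_j z_l)$ is nonzero only when $\{i,k\} = \{j,l\}$, where it equals $1$. Hence $\mathrm{var}(C) = 4\sum_{i<k} M_{ik}^2 = 2\sum_{i\neq k} M_{ik}^2$.
\end{itemize}
Together,
\[
\mathrm{var}(z^{\T} M z) = \sum_i M_{ii}^2\,(\mu_4^{(i)} - 1) + 2\sum_{i \neq k} M_{ik}^2 .
\]

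\emph{Rademacher case and minimality.} For Rademacher entries $z_i^2 = 1$, so $\mu_4^{(i)} = 1$ and the diagonal term vanishes. This gives the stated $2\sum_{i\neq j} M_{ij}^2$.

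For minimality, the off-diagonal contribution $2\sum_{i\neq k}M_{ik}^2$ is the same for every admissible distribution. Jensen's (or Cauchy--Schwarz) inequality gives $\mu_4^{(i)} = E(z_i^4) \ge \{E(z_i^2)\}^2 = 1$. Therefore the diagonal contribution is nonnegative and is zero for Rademacher probes. This holds coordinatewise, so non-identically distributed entries are also covered. Equality in Jensen requires $z_i^2 = 1$ almost surely, so among mean-zero laws the minimum is attained only by Rademacher coordinates wherever $M_{ii} \neq 0$.

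\emph{Main obstacle.} The only delicate step is the fourth-moment index bookkeeping in (b) and (c). In particular one must see that the diagonal--off-diagonal covariance vanishes whatever the third moments are, so that Rademacher probes are optimal over all independent-entry distributions and not only symmetric ones.
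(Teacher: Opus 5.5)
Your proof is correct. For unbiasedness and for the Rademacher variance it matches the paper's argument: $E(z_iz_k)=\delta_{ik}$ gives the trace, and for Rademacher probes $z_i^2=1$ removes the diagonal, leaving $2\sum_{i\neq j}M_{ij}^2$.

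The routes diverge on the minimality claim. The paper derives no variance formula for a general probe. It only observes that a Gaussian probe gives twice the squared Frobenius norm, which is no smaller, and leaves the general statement to \citet{hutchinson1990stochastic}. Read on its own terms, that comparison checks a single alternative law and does not prove minimality over all independent-entry distributions.

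Your route proves the claim as stated. You compute $\mathrm{var}(z^{\T}Mz)=\sum_i M_{ii}^2\{E(z_i^4)-1\}+2\sum_{i\neq k}M_{ik}^2$, show that the diagonal/off-diagonal covariance vanishes whatever the third moments, and then apply $E(z_i^4)\ge\{E(z_i^2)\}^2=1$. This buys several things:
\begin{itemize}
\item it covers non-identically distributed coordinates;
\item it recovers the paper's Gaussian comparison as the special case $E(z_i^4)=3$;
\item it shows the minimizer is unique on every coordinate with $M_{ii}\neq 0$.
\end{itemize}
The paper's version gains brevity but rests on the citation.

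One small correction to your preamble. An infinite $E(z_i^4)$ forces infinite variance only when $M_{ii}\neq 0$. If $M_{ii}=0$, the variance stays finite and your formula still holds with that diagonal term simply absent, so your conclusion is unaffected.
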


\begin{proof}
The first claim follows from $E(z_i z_j) = 1\{i=j\}$. For the second, Rademacher entries
satisfy $z_i^2 = 1$, so the diagonal contributes no variance; expanding the quadratic form
leaves $2 \sum_{i \neq j} M_{ij}^2$. A Gaussian probe has variance twice the squared
Frobenius norm, which is no smaller \citep{hutchinson1990stochastic}.
\end{proof}

Estimating each $z_\ell^{\T} f(A) z_\ell$ by Theorem~\ref{thm:gauss} and averaging over
$m_{\mathrm{H}}$ Rademacher probes gives the stochastic Lanczos quadrature estimator of
$\tr\{ f(A)\}$ \citep{ubaru2017fast}. Its error is the sum of a quadrature part, of order
$E_{2m-1}(f)$ and negligible for smooth $f$, and a Monte Carlo part of order
$m_{\mathrm{H}}^{-1/2}$ \citep{chen2021analysis}. Applied with $A = \Le$ and
$f = \log(\kappa_j^2 + \cdot)$, this estimates the contribution of block $j$ to
Theorem~\ref{thm:logdet}. The pooled nodes and weights do not depend on $\theta$, so the same
quadrature serves the log-determinant and its derivatives, for every variable, at every
optimization step.

\subsection{Score components for the exact refinement}

The score identity of the main paper follows from
$E_{w\mid y}(w^{\T}Ax) = \mu^{\T}A\mu + \tr(A \Qpost^{-1})$. The component derivatives
are as follows. With $\Qw = \bL^{\T} D_\tau^{-1}\bL$
and, for a parameter $\vartheta \in \{\kappa_i, \alpha_i, b_{ij}\}$ entering $\bL$,
$\partial_\vartheta \Qw = (\partial_\vartheta \bL)^{\T} D_\tau^{-1}\bL + \bL^{\T} D_\tau^{-1}(\partial_\vartheta \bL)$,
the component derivatives are
\begin{align*}
\partial_{\kappa_i}\bL &: \ \text{diagonal block } i, \quad
\alpha_i \kappa_i (\kappa_i^2 I + \Le)^{\alpha_i/2 - 1},
& \partial_{\kappa_i}\log|\Qw| &= 2\alpha_i \kappa_i {\textstyle\sum_h} (\kappa_i^2 + \lambda_h)^{-1}, \\
\partial_{\alpha_i}\bL &: \ \text{diagonal block } i, \quad
\tfrac{1}{2}\log(\kappa_i^2 I + \Le)(\kappa_i^2 I + \Le)^{\alpha_i/2},
& \partial_{\alpha_i}\log|\Qw| &= {\textstyle\sum_h} \log(\kappa_i^2 + \lambda_h), \\
\partial_{b_{ij}}\bL &: \ \text{block } (i,j) \text{ equal to } I,
& \partial_{b_{ij}}\log|\Qw| &= 0,
\end{align*}
where the $\log|\Qw|$ derivatives are unchanged by the innovation variances, which enter
$\log|\Qw|$ only through the separable term $-2n\sum_j\log\tau_j$. The innovation variances act
on $\Qw$ through $D_\tau^{-1}$ rather than through $\bL$; writing $(\bL w)_i$ for the $i$th
spatial block of $\bL w$, so that $\partial_{\log\tau_i}\Qw = -2\tau_i^{-2}\,\bL^{\T}(e_i e_i^{\T}\otimes I)\bL$
and $\partial_{\log\tau_i}\log|\Qw| = -2n$, the score is
\[
\partial_{\log\tau_i}\ell = -n + \tau_i^{-2}\bigl\{\|(\bL\mu)_i\|^2
+ \tr[\bL^{\T}(e_i e_i^{\T}\otimes I)\bL\,\Qpost^{-1}]\bigr\},
\]
with stationary point the update $\tau_i^2 = n^{-1} E_{w\mid y}\|(\bL w)_i\|^2$, computed from
the same conjugate-gradient and stochastic-trace solves used for the other parameters. The
nugget score is $\partial_\sigma \ell = -M/\sigma + \sigma^{-3}\{\|y-\mu\|_{\Ocal}^2 +
\tr(D_{\Ocal} \Qpost^{-1})\}$.
Every term is evaluated with the conjugate-gradient and stochastic-trace machinery described
in the main paper, and the trace solves are shared across all parameters within a step.

\section{Complexity and measured scaling}\label{supp:scaling}

Table~\ref{tab:complexity} gives the time complexity of each computational step and
separates the scaling in $n$ from the scaling in $p$. Every step of estimation and
prediction is linear in both, up to the constants $m$, $m_{\mathrm{H}}$ and
$T_{\mathrm{CG}}$. The only step with a higher cost in $p$ is the dense form of the
graph-learning likelihood, and its sparse form is near-linear for a bounded-degree variable
graph. Table~\ref{tab:timing} reports measured times for the variational fit. All local
timings in this Supplementary Material were measured on an Apple M4 laptop with ten cores,
four performance and six efficiency, and $16$ gigabytes of unified memory.
Graph construction and the one-time quadrature setup are at or below a tenth of a second
across the small grid, and $5$ and $21$ seconds at $n = 1{,}200{,}000$. Step times grow in
proportion to $np$ across the grid. At $n = 1{,}200{,}000$ the operation count is unchanged
but the Lanczos basis no longer fits in cache, so memory traffic raises the step time to
roughly eight times the linear extrapolation from the smaller sizes; the proportionality in
$p$ at moderate $n$ is unaffected, as the per-setting times in Table~\ref{stab:timing} and
the application fit of \S6 of the main paper show. The measured per-iteration
comparison with the graphical Mat\'ern is in Section~\ref{supp:highp}.

\begin{table}[htbp]
\tbl{Time complexity of each computational step, with its separate scaling in the number of
locations $n$ and the number of variables $p$}{%
\begin{tabular}{llll}
Step & Cost & In $n$ & In $p$ \\[3pt]
Nearest-neighbour graph and $\Le$ & $O(n \log n)$ & $n\log n$ & none \\
Quadrature setup & $O(m_{\mathrm{H}} m\, \mathrm{nnz})$ & linear & none \\
$\log|\Qw|$ and its derivatives & $O(p\, m\, m_{\mathrm{H}})$ & none & linear \\
Operator application & $O(m\, \mathrm{nnz}\, p + |E_V| n)$ & linear & linear \\
Variational step & $O(m\, \mathrm{nnz}\, p)$ & linear & linear \\
Conjugate-gradient solve & $O(T_{\mathrm{CG}}\, m\, \mathrm{nnz}\, p)$ & linear & linear \\
Refinement step & $O(m_{\mathrm{H}} T_{\mathrm{CG}}\, m\, \mathrm{nnz}\, p)$ & linear & linear \\
Prediction, mean and variance & $O\{(1{+}m_{\mathrm{var}}) T_{\mathrm{CG}}\, m\, \mathrm{nnz}\, p\}$ & linear & linear \\
Graph likelihood, dense (\S\ref{sec:eigenlik}) & $O(n p^3)$ & linear & cubic \\
Graph likelihood, sparse (\S\ref{sec:eigenlik}) & $O(n p\, \bar{d}^{\,2})$ & linear & near-linear \\
Dense eigendecomposition of $\Le$ (alternative) & $O(n^3)$ & cubic & none \\
Dense joint likelihood (naive benchmark) & $O(n^3 p^3)$ & cubic & cubic \\
\end{tabular}}
\label{tab:complexity}
\begin{tabnote}
$\mathrm{nnz}$, number of non-zeros of $\Le$, of order $n$; $m$, Krylov dimension;
$m_{\mathrm{H}}$, number of probe vectors; $m_{\mathrm{var}}$, probes for predictive
variances; $T_{\mathrm{CG}}$, conjugate-gradient iterations; $\bar d$, maximum in-degree of the
variable graph. Memory is $O(np)$ throughout. The dense graph likelihood is cubic in $p$ only
because the nugget densifies each per-mode covariance; the sparse form removes this, leaving a
cost near-linear in $p$ for a bounded-degree variable graph. The dense eigendecomposition is
of the $n\times n$ spatial operator $\Le$ alone and so is independent of $p$; the naive
benchmark that factorizes the full $np\times np$ covariance is instead cubic in both, at
$O(n^3 p^3)$, and is the cost the matrix-free scheme is designed to avoid.
\end{tabnote}
\end{table}

\begin{table}[htbp]
\tbl{Measured times for the BMW-VI fit on an Apple M4 laptop, ten cores and $16$
gigabytes of unified memory, at Krylov dimension $15$}{%
\begin{tabular}{rrrrrrr}
$n$ & $p$ & $np$ & Graph (s) & Spectrum (s) & Per step (ms) & Full fit (s) \\[3pt]
10000 & 3 & $3\times10^4$ & 0.02 & 0.10 & 36 & 11 \\
4000 & 10 & $4\times10^4$ & 0.01 & 0.05 & 27 & 8 \\
1000 & 100 & $10^5$ & 0.00 & 0.02 & 45 & 13 \\
4000 & 100 & $4\times10^5$ & 0.01 & 0.05 & 156 & 47 \\
1200000 & 3 & $3.6\times10^6$ & 5.33 & 21.4 & 11043 & 3340 \\
\end{tabular}}
\label{tab:timing}
\begin{tabnote}
Graph, construction of the nearest-neighbour graph and operator $\Le$; Spectrum, the one-time
stochastic Lanczos quadrature setup; Per step, mean time of one variational step over twenty
steps; Full fit, end-to-end time of a complete fit, taken as the setup plus $300$ steps. At
$n = 1{,}200{,}000$ the step is dominated by memory traffic rather than floating-point
work, and exceeds a linear extrapolation from the smaller sizes by roughly a factor of
eight.
\end{tabnote}
\end{table}

\section{The variable-graph sampler: settings and acceptance ratios}\label{supp:sampler}

Algorithm~\ref{alg:pt} was run with temperatures $1, 1.55, 2.4, 3.7,
5.8, 9.0$, move probabilities $0.30$, $0.42$ and $0.28$ for the
birth-death, swap and random-walk moves, prior inclusion probability $\pi = 0.2$, slab
standard deviation $\tau = 0.25$, random-walk step $\varrho = 0.04$, and two independent
tempered runs. Inclusion probabilities discard a burn-in of $300$ iterations. The eigenmode
likelihood \eqref{eq:eigenlik} was validated against the dense evaluation to $10^{-8}$. The
move weights are proposal tuning parameters and the sampler is not sensitive to them; the
higher weight on the swap move reflects its role in bridging confounded graphs.

\begin{algorithm}[htbp]
\caption{Parallel-tempered reversible-jump spike-and-slab sampler for the variable graph.}
\label{alg:pt}
\KwIn{spectral data $\{z_h\}$; candidate edges $E$; prior $(\pi, \tau)$; inverse temperatures
$\beta_1 = 1 > \cdots > \beta_K$}
\For{each sweep}{
  \For{each replica $r = 1, \ldots, K$, with tempered log-likelihood $\beta_r \ell$}{
    with probability $0.3$: birth or death; pick $e \in E$; if inactive, propose
    $\gamma_e = 1$, $b_e \sim N(0, \tau^2)$; accept with
    $\log R = \beta_r \Delta\ell + \log\{\pi/(1-\pi)\}$; if active, propose removal with the
    reverse ratio\;
    else with probability $0.42$: edge swap; remove a uniformly chosen active edge and
    add a uniformly chosen inactive edge with $b_e \sim N(0, \tau^2)$; accept with
    $\log R = \beta_r \Delta\ell$\;
    else: random-walk update of one active $b_e$\;
  }
  \For{adjacent replica pairs $(r, r{+}1)$}{
    swap full states with probability
    $\min[1, \exp\{(\beta_r - \beta_{r+1})(\ell_{r+1} - \ell_r)\}]$\;
  }
  record the inclusion indicators of the replica with $\beta_1 = 1$\;
}
\KwOut{edge-inclusion probabilities and the median-probability graph}
\end{algorithm}

\subsection{Acceptance ratios}

\subsection*{Acceptance ratios for the variable-graph sampler}

Write $\gamma \in \{0,1\}^{|E|}$ for the edge-inclusion indicators and
$b = \{b_e : \gamma_e = 1\}$ for the active cross-dependence coefficients. At inverse
temperature $\beta_r$ the sampler targets
\[
\pi_{\beta_r}(\gamma, b) \; \propto \; \exp\{\beta_r\, \ell(\gamma, b)\}
\prod_{e \in E} \pi^{\gamma_e}(1-\pi)^{1-\gamma_e}
\prod_{e:\, \gamma_e = 1} N(b_e;\, 0, \tau^2),
\]
where $\ell$ is the eigenmode log-likelihood \eqref{eq:eigenlik} and $N(\,\cdot\,;0,\tau^2)$ is
the slab density. Each move is a Metropolis--Hastings--Green step \citep{green1995reversible};
write $\Delta\ell$ for the change in $\ell$ it induces and accept it with probability
$\min(1, e^{\log R})$. Because the slab proposal is drawn from the slab prior, the two cancel
in the Green ratio, which is what produces the simple forms used in Algorithm~\ref{alg:pt}.

A \emph{birth} activates an inactive edge $e$ with $b_e \sim N(0,\tau^2)$. The proposal density
of $b_e$ equals its slab prior and the Jacobian is one, so
\[
\log R_{\mathrm{birth}} = \beta_r \Delta\ell + \log\frac{\pi}{1-\pi},
\]
and its reverse \emph{death} has
$\log R_{\mathrm{death}} = \beta_r \Delta\ell + \log\{(1-\pi)/\pi\}$. An \emph{edge swap}
removes an active edge $e_1$ and activates an inactive edge $e_2$ with
$b_{e_2} \sim N(0,\tau^2)$. The active-edge count is preserved, so the two Bernoulli factors
cancel; the added edge's slab prior cancels its proposal density and likewise for the removed
edge; and the uniform choices of $e_1$ among active and $e_2$ among inactive edges cancel,
leaving
\[
\log R_{\mathrm{swap}} = \beta_r \Delta\ell.
\]
A \emph{coefficient update} is a symmetric random walk $b_e \to b_e + \eta$,
$\eta \sim N(0, \varrho^2)$, on an active edge; the proposal cancels and only the likelihood and
the slab prior remain,
\[
\log R_{\mathrm{rw}} = \beta_r \Delta\ell
+ \frac{1}{2\tau^2}\{ b_e^2 - (b_e + \eta)^2 \}.
\]
A \emph{replica exchange} proposes swapping the states of adjacent replicas $r$ and $r+1$,
accepted with
\[
\log R_{\mathrm{exch}} = (\beta_r - \beta_{r+1})(\ell_{r+1} - \ell_r).
\]
Every $\Delta\ell$ is a difference of the eigenmode log-likelihood \eqref{eq:eigenlik}: by
Theorem~\ref{thm:logdet} its prior log-determinant is common to all graphs and cancels, and the
remaining per-mode determinants are the small $p\times p$ sparse factorizations of
\S\ref{sec:eigenlik}, so no determinant over the $np$-dimensional space is ever formed.

\section{Complete simulation results}\label{supp:sims}

\subsection{Recovery and prediction in every setting}

Tables~\ref{stab:recov} and~\ref{stab:pred} report, for every setting of Table~1 of the main
paper, recovery of the identified parameters and held-out prediction with interval coverage.
Figure~\ref{sfig:identified8} shows the recovery boxplots for all eight fifty-replicate
settings with strong cross-dependence, of which the main text displays four; recovery is
uniformly high and the recovery slopes are near one. Figure~\ref{sfig:cokrig} displays the
structured-hold-out comparison summarized in \S\ref{sec:predsim} of the main paper. The
variational nugget sits at its optimization floor in every setting and the refinement moves
it toward the truth, as \S\ref{sec:nugget} of the main paper predicts
(Table~\ref{tab:signr}).

\begin{figure}[htbp]
\centering
\includegraphics[width=\textwidth]{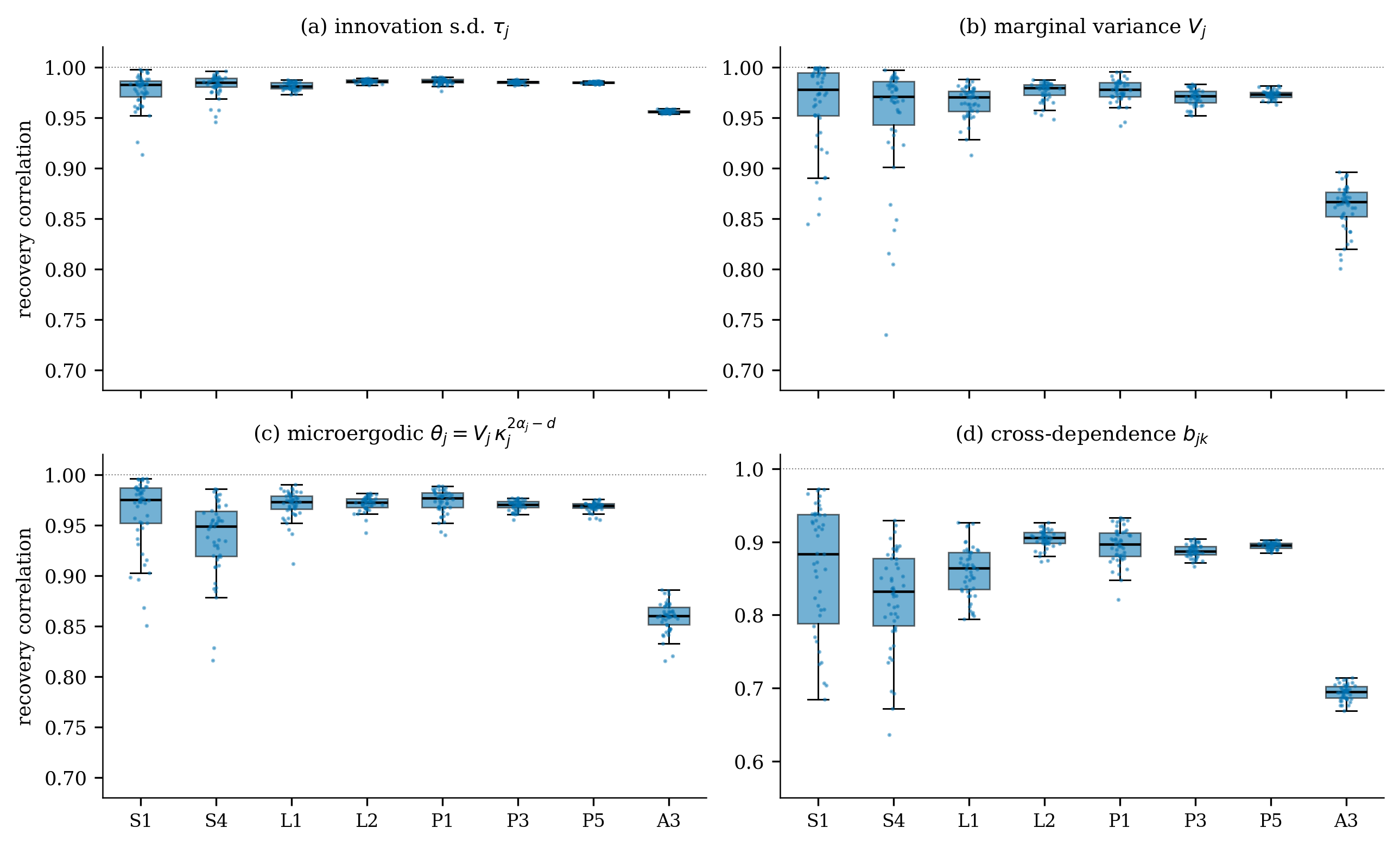}
\caption{Recovery of the identified parameters by BMW-ML in all eight
fifty-replicate settings with strong cross-dependence, from $n = 500$, $p = 8$ (S1) to the
stress shape $p = 2{,}000$ at $n = 300$ (A3). Each box holds fifty per-replicate
correlations between estimate and truth; panels show $\tau_j$, $V_j$, $\theta_j$ and
$b_{jk}$.}
\label{sfig:identified8}
\end{figure}

\begin{figure}[htbp]
\centering
\includegraphics[width=\textwidth]{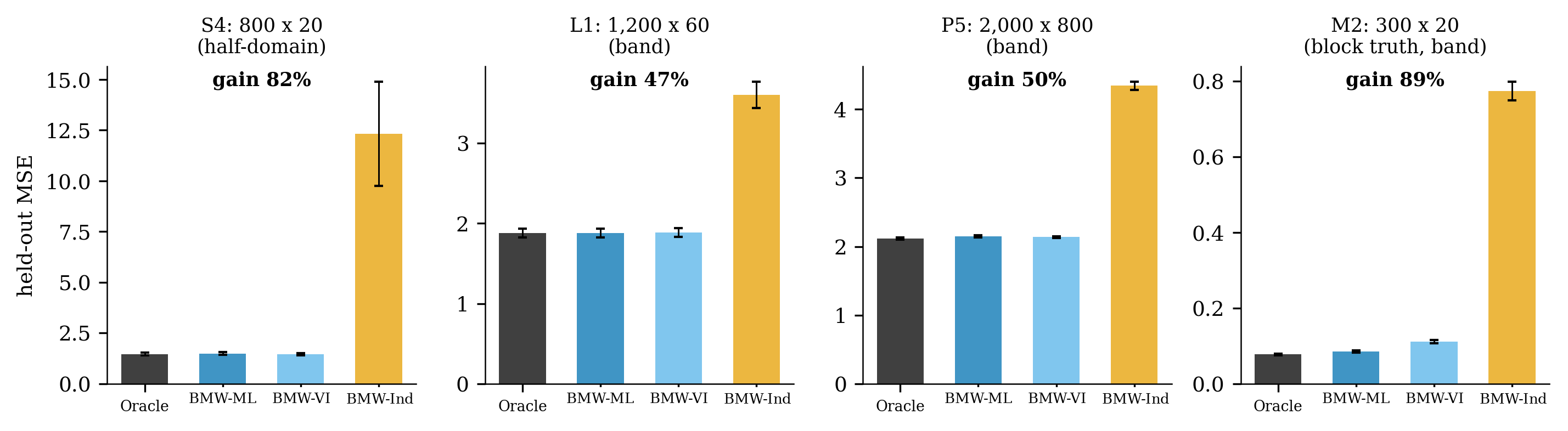}
\caption{Structured-hold-out mean squared error for the oracle, the two-stage estimator
(BMW-ML), its first stage (BMW-VI) and the graph-free fit (BMW-Ind), in four settings.
Annotations give the mean
per-replicate reduction of BMW-ML relative to BMW-Ind; error bars are two Monte Carlo
standard errors. The fitted model matches the oracle throughout, and the value of the variable
graph appears where prediction must borrow across variables.}
\label{sfig:cokrig}
\end{figure}
The recovery correlations of the weakly identified individual parameters are lower and
ridge-dominated, as expected: pooled over the well-specified settings, the BMW-ML
correlations for $\kappa_j$ and $\alpha_j$ lie well below those of the identified functions
$\tau_j$, $V_j$ and $\theta_j$, which is the empirical face of the in-fill confounding
discussed in \S\ref{sec:nugget} of the main paper.

Figure~\ref{sfig:bcoupling} isolates the effect of cross-dependence strength on recovery
of the cross-dependence coefficients, within each scale tier, so strength is not confounded
with scale. At $n = 500$, $p = 8$ the recovery correlation falls steeply as the
cross-dependence weakens. At the larger settings the effect washes out. With
$np \ge 7\times10^4$ even the mixed regime is recovered above $0.85$, because the
coefficients borrow strength across all $p$ surfaces.

\begin{figure}[htbp]
\centering
\includegraphics[width=\textwidth]{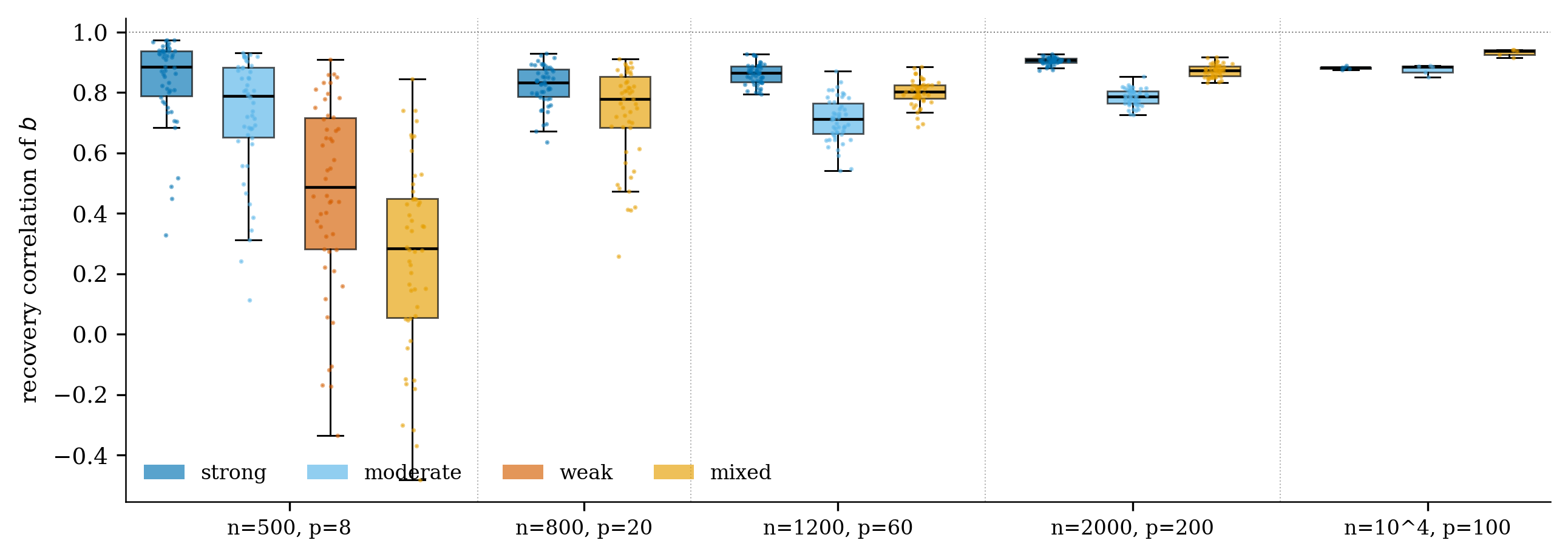}
\caption{Recovery correlation of the cross-dependence coefficients $b$ by strength of the
cross-dependence, grouped by scale tier. Boxes are per-replicate correlations of
BMW-ML. Colours denote the strong, moderate, weak and mixed regimes. All
settings have fifty replicates except the $n = 10^4$ tier, which has five. The strength
effect dominates at the smallest setting and vanishes as the dimensions grow.}
\label{sfig:bcoupling}
\end{figure}

\begin{table}[htbp]
\tbl{Recovery of the identified parameters by BMW-ML, all well-specified settings: mean per-replicate correlation between estimate and truth, with Monte Carlo standard errors in parentheses}{%
\begin{tabular}{lrrcccc}
Setting & $n$ & $p$ & $\tau$ & $V$ & $\theta$ & $b$ \\[3pt]
S1 & 500 & 8 & 0.98 (0.00) & 0.96 (0.01) & 0.96 (0.00) & 0.84 (0.02) \\
S2 & 500 & 8 & 0.98 (0.00) & 0.96 (0.01) & 0.96 (0.01) & 0.72 (0.03) \\
S3 & 500 & 8 & 0.98 (0.00) & 0.96 (0.01) & 0.96 (0.01) & 0.46 (0.05) \\
Smix & 500 & 8 & 0.97 (0.00) & 0.96 (0.01) & 0.96 (0.01) & 0.26 (0.04) \\
S4 & 800 & 20 & 0.98 (0.00) & 0.95 (0.01) & 0.94 (0.01) & 0.82 (0.01) \\
S4mix & 800 & 20 & 0.98 (0.00) & 0.95 (0.01) & 0.93 (0.01) & 0.73 (0.02) \\
L1s & 1,200 & 60 & 0.98 (0.00) & 0.97 (0.00) & 0.97 (0.00) & 0.86 (0.00) \\
L1m & 1,200 & 60 & 0.98 (0.00) & 0.97 (0.00) & 0.97 (0.00) & 0.71 (0.01) \\
L1x & 1,200 & 60 & 0.98 (0.00) & 0.97 (0.00) & 0.97 (0.00) & 0.80 (0.01) \\
L2s & 2,000 & 200 & 0.99 (0.00) & 0.98 (0.00) & 0.97 (0.00) & 0.90 (0.00) \\
L2m & 2,000 & 200 & 0.99 (0.00) & 0.97 (0.00) & 0.97 (0.00) & 0.78 (0.00) \\
L2x & 2,000 & 200 & 0.99 (0.00) & 0.97 (0.00) & 0.97 (0.00) & 0.87 (0.00) \\
P1 & 2,000 & 50 & 0.99 (0.00) & 0.98 (0.00) & 0.97 (0.00) & 0.90 (0.00) \\
P2 & 2,000 & 150 & 0.99 (0.00) & 0.98 (0.00) & 0.97 (0.00) & 0.89 (0.00) \\
P3 & 2,000 & 300 & 0.99 (0.00) & 0.97 (0.00) & 0.97 (0.00) & 0.89 (0.00) \\
P4 & 2,000 & 500 & 0.98 (0.00) & 0.97 (0.00) & 0.97 (0.00) & 0.89 (0.00) \\
P5 & 2,000 & 800 & 0.98 (0.00) & 0.97 (0.00) & 0.97 (0.00) & 0.90 (0.00) \\
A1 & 2,000 & 300 & 0.99 (0.00) & 0.97 (0.00) & 0.97 (0.00) & 0.89 (0.00) \\
A2 & 1,000 & 600 & 0.99 (0.00) & 0.95 (0.00) & 0.94 (0.00) & 0.86 (0.00) \\
A3 & 300 & 2000 & 0.96 (0.00) & 0.86 (0.00) & 0.86 (0.00) & 0.69 (0.00) \\
L3s & 10,000 & 100 & 0.97 (0.00) & 0.99 (0.00) & 0.98 (0.00) & 0.88 (0.00) \\
L3m & 10,000 & 100 & 0.98 (0.00) & 0.99 (0.00) & 0.99 (0.00) & 0.88 (0.01) \\
L3x & 10,000 & 100 & 0.98 (0.00) & 0.99 (0.00) & 0.98 (0.00) & 0.93 (0.01) \\
\end{tabular}}
\label{stab:recov}
\begin{tabnote}
Fifty replicates per setting (five for L3s--L3x). $V_j$ and $\theta_j$ are computed from the estimated $(\kappa_j,\alpha_j,\tau_j)$ through the operator spectrum.
\end{tabnote}\end{table}

\begin{table}[htbp]
\tbl{Held-out prediction across all settings: root mean squared error on the primary misaligned hold-out, 90\% interval coverage of BMW-ML, and the co-kriging gain on the structured hold-out}{%
\begin{tabular}{lccccccc}
Setting & Oracle & GM & BMW-Ind & BMW-VI & BMW-ML & Coverage & Structured gain (\%) \\[3pt]
S1 & 0.935 & 1.231 & 0.976 & 0.938 & 0.939 & 0.91 & 53.0 (4.8) \\
S2 & 0.943 & 1.190 & 0.963 & 0.945 & 0.946 & 0.90 & 17.6 (4.0) \\
S3 & 0.950 & 1.187 & 0.957 & 0.952 & 0.953 & 0.90 & -1.0 (2.8) \\
Smix & 0.951 & 1.175 & 0.956 & 0.953 & 0.954 & 0.90 & -1.5 (2.3) \\
S4 & 1.146 & -- & 1.163 & 1.148 & 1.148 & 0.91 & 81.9 (1.5) \\
S4mix & 1.151 & -- & 1.158 & 1.152 & 1.152 & 0.91 & 82.9 (1.4) \\
M1 & 0.265 & 0.283 & 0.328 & 0.325 & 0.321 & 0.93 & 8.6 (3.1) \\
M1b & 0.270 & 0.277 & 0.327 & 0.330 & 0.327 & 0.93 & -1.1 (1.7) \\
M2 & 0.326 & 1.601 & 0.819 & 0.370 & 0.346 & 0.79 & 88.9 (0.3) \\
L1s & 1.107 & -- & 1.125 & 1.108 & 1.108 & 0.90 & 46.9 (1.2) \\
L1m & 1.110 & -- & 1.117 & 1.111 & 1.111 & 0.89 & 28.5 (1.3) \\
L1x & 1.111 & -- & 1.118 & 1.112 & 1.112 & 0.89 & 34.0 (1.2) \\
L2s & 1.140 & -- & 1.156 & 1.141 & 1.141 & 0.90 & 36.1 (0.4) \\
L2m & 1.142 & -- & 1.148 & 1.143 & 1.143 & 0.90 & 14.8 (0.4) \\
L2x & 1.143 & -- & 1.149 & 1.143 & 1.143 & 0.90 & 17.6 (0.4) \\
P1 & 1.132 & -- & 1.147 & 1.133 & 1.133 & 0.91 & 52.4 (1.3) \\
P2 & 1.122 & -- & 1.141 & 1.123 & 1.123 & 0.91 & 48.2 (0.8) \\
P3 & 1.148 & -- & 1.164 & 1.150 & 1.149 & 0.91 & 43.1 (0.4) \\
P4 & 1.150 & -- & 1.166 & 1.151 & 1.151 & 0.91 & 46.0 (0.4) \\
P5 & 1.150 & -- & 1.166 & 1.151 & 1.151 & 0.91 & 50.4 (0.3) \\
A1 & 1.148 & -- & 1.164 & 1.150 & 1.149 & 0.91 & 43.1 (0.4) \\
A2 & 1.161 & -- & 1.177 & 1.162 & 1.162 & 0.91 & 40.7 (0.4) \\
A3 & 1.163 & -- & 1.186 & 1.166 & 1.166 & 0.90 & 25.1 (0.4) \\
L3s & 1.123 & -- & 1.138 & 1.125 & 1.124 & 0.89 & 49.1 (3.0) \\
L3m & 1.124 & -- & 1.131 & 1.126 & 1.125 & 0.89 & 22.7 (2.1) \\
L3x & 1.125 & -- & 1.131 & 1.127 & 1.126 & 0.89 & 27.3 (1.8) \\
\end{tabular}}
\label{stab:pred}
\begin{tabnote}
GM, the graphical Mat\'ern; entries are blank where its $p n^3$ cost exceeds the feasibility cap. The oracle predicts with the generating parameters. Structured gain is the mean per-replicate reduction in held-out mean squared error of BMW-ML relative to BMW-Ind, with Monte Carlo standard error in parentheses.
\end{tabnote}\end{table}

\subsection{The nugget across signal-to-noise ratios}

\begin{table}[htbp]
\tbl{BMW-VI nugget estimation across signal-to-noise ratios, well-specified model with
full observation}{%
\begin{tabular}{lcccc}
Signal-to-noise ratio & 4 & 8 & 15 & 30 \\[3pt]
Ratio of estimate to truth & 0.56 & 1.11 & 2.08 & 4.17 \\
Monte Carlo standard error & $<$0.01 & $<$0.01 & $<$0.01 & 0.01 \\
\end{tabular}}
\label{tab:signr}
\begin{tabnote}
The variational estimate is nearly constant while the true nugget varies over a $7.5$-fold
range; see \S\ref{sec:nugget}.
\end{tabnote}
\end{table}

\subsection{Behaviour in high dimensions}\label{supp:highp}

Settings P1--P5 raise $p$ from $50$ to $800$ at $n = 2{,}000$. Tables~\ref{stab:recov}
and~\ref{stab:pred} show that recovery of the identified parameters and the structured
co-kriging gains are flat in $p$, while fit time grows linearly; Fig.~\ref{sfig:highp}
collects the measured evidence. For the per-iteration cost comparison referenced in
\S\ref{sec:comptiming} of the main paper: one likelihood evaluation of the graphical Mat\'ern, which
forms and factorizes dense $n \times n$ blocks, takes $1.0$ seconds at $n = 400$, $p = 40$,
against $12$ milliseconds for one variational step of the proposed method at the same size,
a factor of $84$, and the gap widens with $p$: the variational step is measured through
$p = 320$, where the graphical Mat\'ern is no longer practical.

\begin{figure}[htbp]
\centering
\includegraphics[width=\textwidth]{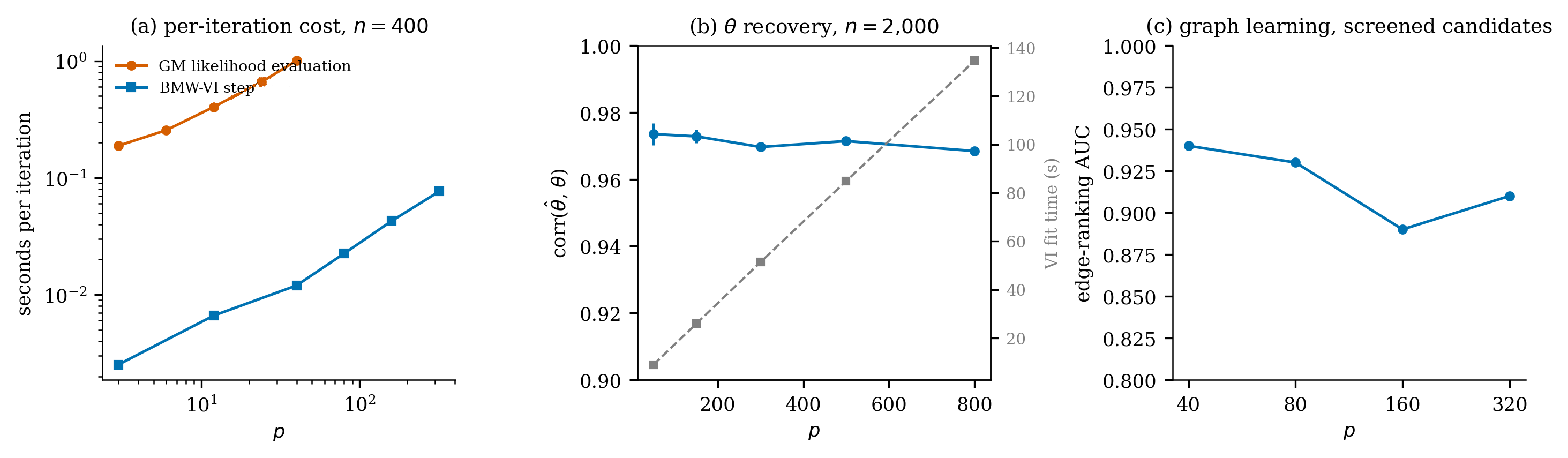}
\caption{Behaviour in high dimensions. (a) Measured per-iteration cost at $n = 400$: one
graphical Mat\'ern likelihood evaluation against one BMW-VI step. (b) Recovery of
the microergodic $\theta_j$ at $n = 2{,}000$ is flat in $p$ over fifty replicates, while the
variational fit time, right axis, grows linearly. (c) Edge-ranking area under the curve for
graph learning with a partial-correlation screen (setting G4); the dip at $p = 160$ tracks
the screen's recall.}
\label{sfig:highp}
\end{figure}

\subsection{Aspect-ratio study}

Settings A1--A3 fix $np = 6\times10^5$ and trade locations for variables, with shapes
$2{,}000\times300$, $1{,}000\times600$ and $300\times2{,}000$. Recovery of the identified
parameters degrades only at the most extreme shape: $\theta$ recovery is $0.97$, $0.95$ and
$0.86$ across the three shapes (Table~\ref{stab:recov}). Each variable contributes $n$
spectral observations to its own parameters, so it is $n$ and not $np$ that governs
per-variable recovery, while the cross-dependence coefficients continue to borrow strength
across the $p$ surfaces.

\subsection{Graph learning against known truth}\label{supp:graph}

Four settings assess the parallel-tempered spike-and-slab sampler. Each is summarized by
the separation of the posterior, the mean inclusion probability of true edges against that
of absent candidate pairs, together with the area under the edge-ranking curve. G1
generates from a block coregionalization with independent blocks: the mean probability of
within-block pairs is $0.75$ against $0.06$ for the $150$ across-block pairs, the largest
across-block probability is $0.24$, and the ranking area is $0.999$. G2 generates from a
Euclidean-distance graphical Mat\'ern, so the operator is misspecified: the separation is
$0.64$ for true edges against $0.13$ for false pairs, with area $0.901$. G3 varies the edge
density at $p = 40$: the mean probability of true edges rises from $0.40$ at $0.4p$ edges
to $0.58$ at $2p$, absent pairs stay pinned at the prior value $0.05$ throughout, and the
area lies between $0.92$ and $0.95$ across the in-degree-two and in-degree-three arms. G4
raises the dimension to $p = 40$, $80$, $160$ and $320$ with the partial-correlation screen
of the candidate set: true edges average $0.47$ to $0.66$ against $0.06$ to $0.14$ for
absent pairs, the areas are $0.94$, $0.93$, $0.89$ and $0.91$, and the dips track the
screen's recall of true edges rather than the sampler itself.

\begin{figure}[htbp]
\centering
\includegraphics[width=\textwidth]{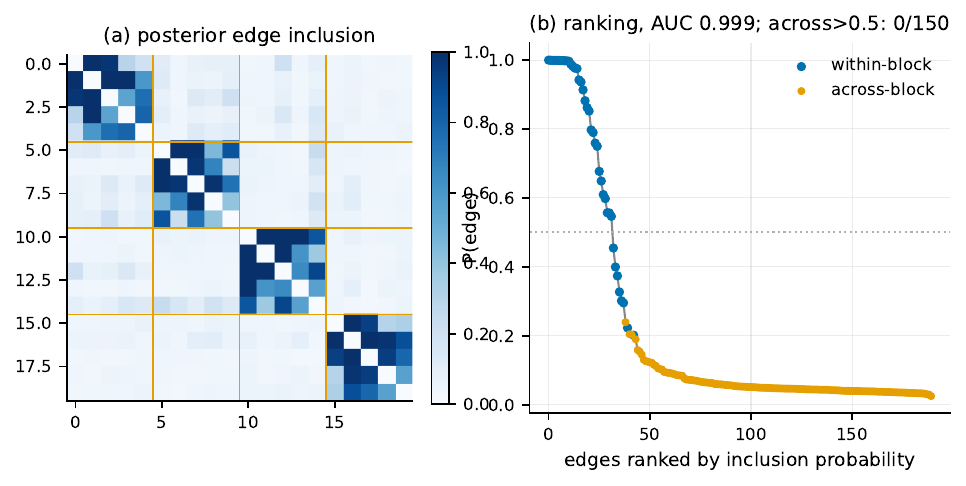}\\[4pt]
\includegraphics[width=\textwidth]{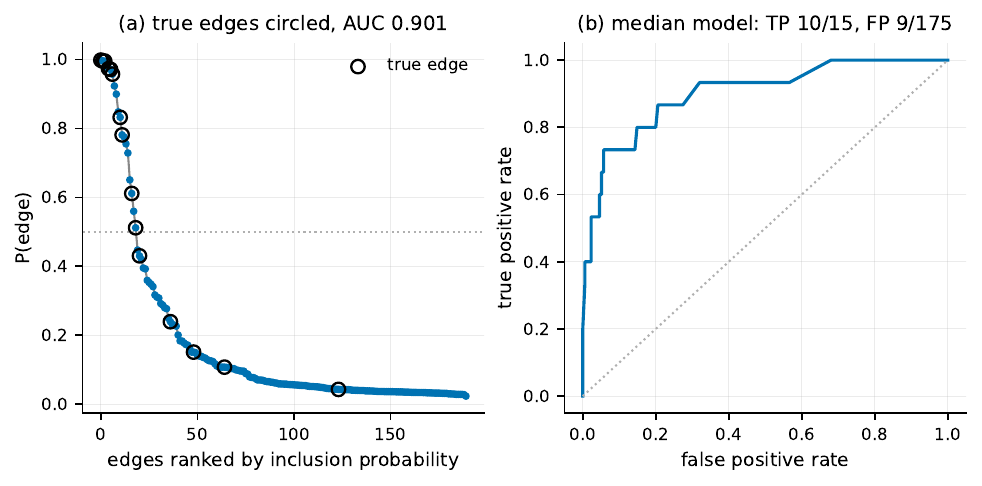}
\caption{Graph recovery under misspecification. Top: data generated from a
block-coregionalization model with no exact sparse representation (G1); (a) the posterior
edge-inclusion matrix recovers the block structure; (b) edges ranked by inclusion probability,
with within-block pairs cleanly separated from across-block pairs. Bottom: data generated from
a Euclidean-Mat\'ern graphical model, a spatial operator different from the one fitted (G2); (a)
true edges (circled) rank near the top of the posterior; (b) the corresponding ROC curve.}
\label{sfig:misspecgraph}
\end{figure}

\begin{figure}[htbp]
\centering
\includegraphics[width=\textwidth]{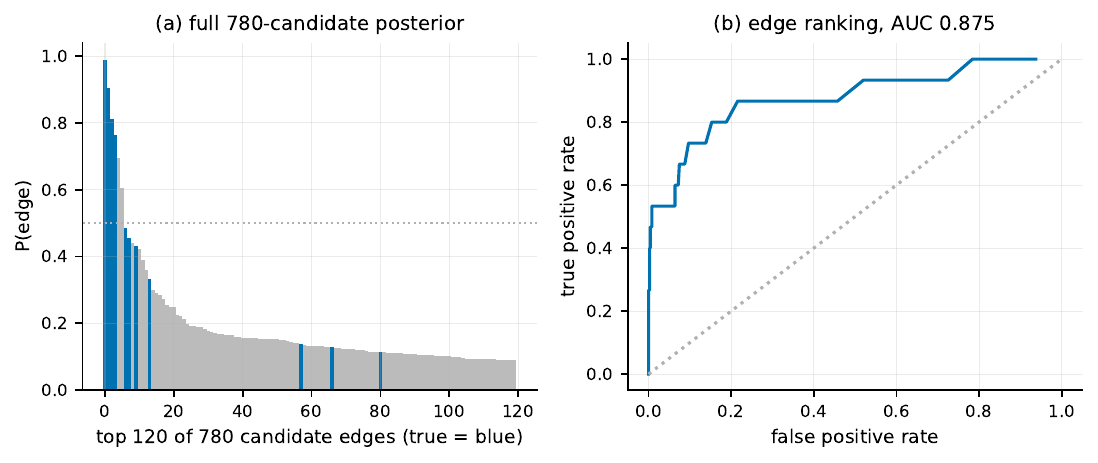}\\[4pt]
\includegraphics[width=\textwidth]{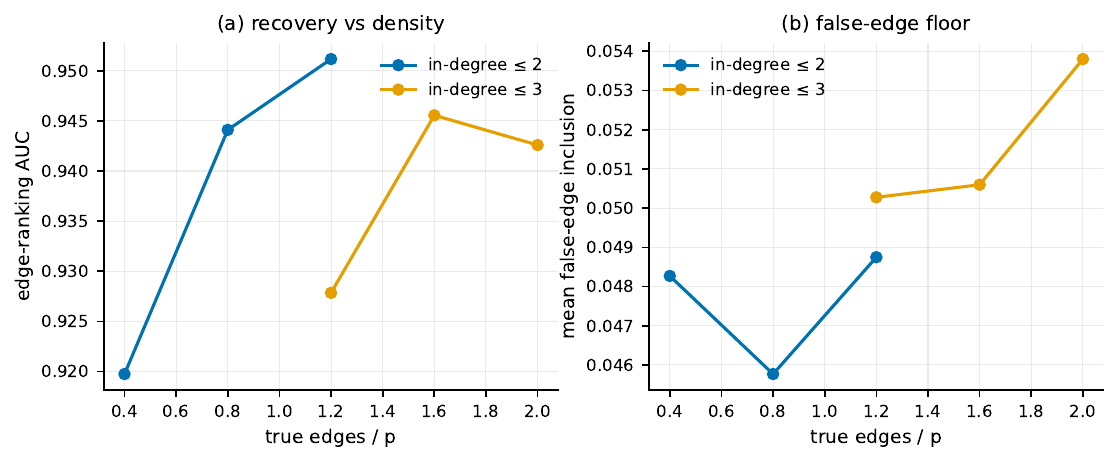}
\caption{Bayesian graph learning at $p = 40$. Top: the full $780$-candidate spike-and-slab
search (G3, sparse arm); (a) posterior inclusion probabilities of the top-ranked candidates,
true edges in blue; (b) the corresponding ROC curve. Bottom: recovery (a) and the mean
false-edge inclusion floor (b) as edge density increases, for candidate sets bounded to
in-degree $\le 2$ and $\le 3$ (G3, dense arms).}
\label{sfig:sslearn}
\end{figure}

\subsection{Computing environment}

All replicated simulations ran on a departmental Linux cluster, one replicate per task, four
cores and $16$ gigabytes of memory per task; every result file records its host, seeds and
per-method wall time. Table~\ref{stab:timing} reports the wall-clock cost of one replicate
of every setting on the Apple M4 laptop described above. The variational fit itself is fast and linear in
$np$; the totals are dominated by the evaluation harness, which fits every model twice, once
on the primary hold-out and once on the structured hold-out, and computes Hutchinson
predictive standard deviations for four methods. A passive convergence monitor recorded when
a windowed early-stopping criterion would have fired; within the fixed step budgets it fired
only at the L1 tier, on the final step, so the budgets are not over-generous.

\begin{table}[htbp]
\tbl{Wall-clock time for one replicate of every setting on a four-performance-core Apple M4 laptop: the BMW-VI fit, the BMW-ML refinement, and the complete replicate including the four predictions and the separately refitted structured hold-out}{%
\begin{tabular}{lrrrrrr}
Setting & $n$ & $p$ & $np$ & BMW-VI (s) & BMW-ML refinement (s) & Full replicate (s) \\[3pt]
S1 & 500 & 8 & $4,000$ & 2 & 18 & 527 \\
S2 & 500 & 8 & $4,000$ & 2 & 20 & 1,016 \\
S3 & 500 & 8 & $4,000$ & 2 & 21 & 657 \\
Smix & 500 & 8 & $4,000$ & 2 & 21 & 563 \\
S4 & 800 & 20 & $16,000$ & 3 & 63 & 272 \\
S4mix & 800 & 20 & $16,000$ & 3 & 64 & 281 \\
M1 & 500 & 5 & $2,500$ & 2 & 15 & 206 \\
M1b & 500 & 5 & $2,500$ & 2 & 16 & 221 \\
M2 & 300 & 20 & $6,000$ & 2 & 34 & 392 \\
L1s & 1,200 & 60 & $72,000$ & 11 & 232 & 1,078 \\
L1m & 1,200 & 60 & $72,000$ & 11 & 240 & 1,199 \\
L1x & 1,200 & 60 & $72,000$ & 11 & 242 & 1,206 \\
L2s & 2,000 & 200 & $4.0\times10^{5}$ & 39 & 1167 & 6,151 \\
L2m & 2,000 & 200 & $4.0\times10^{5}$ & 39 & 1227 & 6,286 \\
L2x & 2,000 & 200 & $4.0\times10^{5}$ & 39 & 1216 & 6,282 \\
P1 & 2,000 & 50 & $10^{5}$ & 9 & 273 & 1,406 \\
P2 & 2,000 & 150 & $3.0\times10^{5}$ & 26 & 856 & 3,940 \\
P3 & 2,000 & 300 & $6.0\times10^{5}$ & 51 & 1766 & 9,473 \\
P4 & 2,000 & 500 & $10^{6}$ & 85 & 2956 & 15,757 \\
P5 & 2,000 & 800 & $1.6\times10^{6}$ & 135 & 5061 & 25,416 \\
A1 & 2,000 & 300 & $6.0\times10^{5}$ & 51 & 1765 & 9,480 \\
A2 & 1,000 & 600 & $6.0\times10^{5}$ & 57 & 1983 & 10,284 \\
A3 & 300 & 2000 & $6.0\times10^{5}$ & 79 & 3027 & 12,563 \\
L3s & 10,000 & 100 & $10^{6}$ & 78 & 3307 & 15,054 \\
L3m & 10,000 & 100 & $10^{6}$ & 77 & 3446 & 15,831 \\
L3x & 10,000 & 100 & $10^{6}$ & 78 & 3446 & 15,808 \\
\end{tabular}}
\label{stab:timing}
\begin{tabnote}
The variational fit is linear in $np$: $9$, $26$, $51$, $85$ and $135$ seconds across P1--P5 as $np$ grows from $10^5$ to $1.6\times10^6$. The full replicate is dominated by the second, separately refitted structured hold-out and the Hutchinson predictive standard deviations, which cost several conjugate-gradient solves each. GM timing, where feasible, is included in the totals of the small settings.
\end{tabnote}\end{table}

\section{Additional results for the spatial-transcriptomics application}\label{supp:data}

\subsection*{Spatial factor model comparison}

The factor-model competitor of \S6 of the main paper was fitted on the identical
out-of-sample design: the same training cells, centring, common scale and hold-out band as
the main fit. Loadings are the leading $r$ principal components of the training expression
matrix, with held-out entries set to their centred value of zero, and each factor score
field is smoothed by a Vecchia--Mat\'ern Gaussian process with fifteen neighbours
\citep{vecchia1988estimation,guinness2018permutation}; a held-out gene is reconstructed
from the smoothed factors and its loading vector. This is the statistical core of the
factor methods used in spatial transcriptomics
\citep{velten2022mefisto,townes2023nonnegative}, though those implementations add
non-Gaussian likelihoods and nonnegativity, so we describe ours as a factor model in their
spirit rather than as either method. Table~\ref{stab:factor} reports the held-out
mean-squared-error reduction for ranks $10$, $20$ and $30$ against the same univariate
baseline as the main text. Rank $10$ is best throughout and deeper factorizations degrade,
because the panel is far from low-rank: the leading $10$, $20$ and $30$ components explain
only $40\%$, $47\%$ and $51\%$ of the training variance, and later components contribute
reconstruction noise for any single gene. The spatial smoothing is not the bottleneck: the
plug-in reconstruction from unsmoothed scores lands within two points of the smoothed
version for every target, so the deficit is attributable to the dense low-rank
cross-dependence itself.

\begin{table}[htbp]
\tbl{Held-out mean-squared-error reduction over the univariate baseline on the
transcriptomics band hold-out, in percent: the BMW-VI fit of the main text against the
rank-$r$ spatial factor model}{%
\begin{tabular}{lrrrr}
Target & BMW-VI & Factor $r=10$ & Factor $r=20$ & Factor $r=30$ \\[3pt]
\textit{Mog}    & $90.5$ & $7.7$  & $0.9$  & $-0.4$ \\
\textit{Cldn11} & $88.4$ & $2.2$  & $-4.3$ & $-5.3$ \\
\textit{Pvalb}  & $49.5$ & $37.8$ & $5.0$  & $-4.0$ \\
\textit{Zic1}   & $51.6$ & $23.3$ & $-2.7$ & $-5.8$ \\
\end{tabular}}
\label{stab:factor}
\begin{tabnote}
All methods use the same training cells, standardization and hold-out band; reductions are
$100\{1 - \mathrm{mse}/\mathrm{mse}_{\mathrm{uni}}\}$ against the univariate fit of the
main text. Negative values indicate error above the univariate baseline.
\end{tabnote}
\end{table}

\subsection*{Per-gene estimates and graph selection}

The per-gene parameter estimates for the full $1{,}122$-gene panel, the ranges,
smoothnesses, amplitudes and the implied marginal variances and microergodic parameters, are
released with the code as a machine-readable table. The candidate screen computes partial
correlations from a ridge-regularized inverse of the empirical gene--gene correlation matrix
across cells, with $10^{-2}$ added to the diagonal, and admits for each gene its at most
three strongest earlier-ordered partners with absolute partial correlation above $0.08$;
this yielded the $305$ candidates. The Bayesian edge selection over the
$305$ screened candidates uses the fitted per-gene $(\kappa_j, \alpha_j, \tau_j)$, with
expression centred and placed on the common scale of the main fit so the amplitudes are
preserved. The posterior concentrates sharply: $283$ candidates lie above inclusion
probability $0.9$ and twelve below $0.1$, so nearly every candidate is decisively retained
or pruned; the graph displayed in Fig.~\ref{sfig:merfishgraph} retains the $289$ candidates
above one half. Two checks support the selection. First, rerunning it with the prior inclusion
probability halved, from $0.25$ to $0.125$, alters the conclusion for five of the $305$
candidates. Second, the residual innovation amplitudes recovered inside the per-gene
regressions correlate at $0.987$ with the $\hat\tau_j$ of the main fit, so the selection
and the variational fit see the same amplitude model. Figure~\ref{sfig:merfishgraph}
summarizes the posterior, the screen-versus-posterior comparison with the explained-away
\textit{Sox10}\,$\to$\,\textit{Sec14l5} edge at inclusion probability $0.105$, and the
retained myelin and vascular modules, whose remaining edges are retained with probability
one.

\begin{figure}[htbp]
\centering
\includegraphics[width=\textwidth]{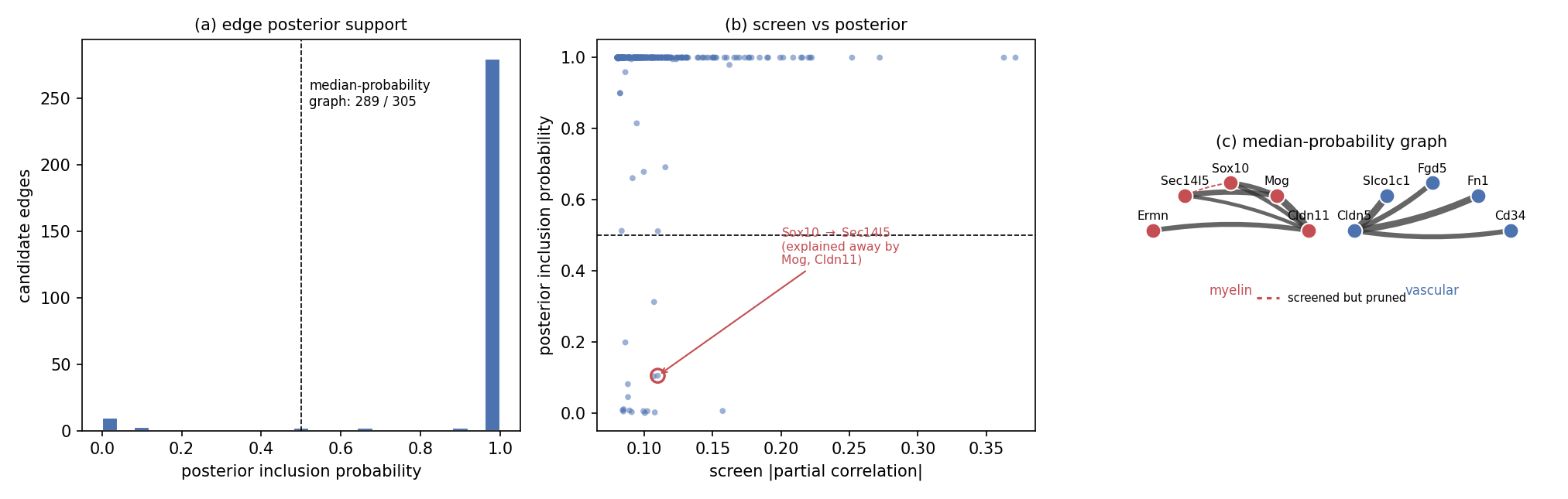}
\caption{Bayesian graph learning on the cerebellar section, over the screened candidate
edges. (a) Posterior edge-inclusion probabilities concentrate near zero and one. (b) Screen
strength against posterior support; the circled edge \textit{Sox10}\,$\to$\,\textit{Sec14l5}
is screened in yet pruned once the myelin genes are present. (c) Median-probability graph of
the oligodendrocyte-lineage (red) and vascular (blue) modules, edge width proportional to the
posterior cross-dependence.}
\label{sfig:merfishgraph}
\end{figure}

\end{document}